\newcommand{\mbp}{\mathbf{p}}
\newcommand{\tmbp}{\ti{\mbp}}
\newcommand{\ti}{\tilde}
\newcommand{\bs}[1]{\boldsymbol{#1}}
\newcommand{\bbs}[1]{\bar{\boldsymbol{#1}}}
\newcommand{\tbs}[1]{\ti{\boldsymbol{#1}}}
\newcommand{\mbx}{\mathbf{x}}
\newcommand{\tmbx}{\ti{\mathbf{x}}}
\newcommand{\eucp}{\mathbb{R}_{+}^L}
\newcommand{\eucpp}{\mathbb{R}_{++}^L}
\newcommand{\ua}{\uparrow}
\newcommand{\da}{\downarrow}
\newcommand{\di}{\diamond}
\newcommand{\sq}{\square}
\newcommand{\E}[0]{\textbf{E}}
\newcommand{\supp}[0]{\text{Spt}}
\newcommand{\Tre}[0]{\unrhd}
\newcommand{\real}[0]{\mathbb{R}}
\newcommand{\ints}[0]{\mathbb{Z}}
\newcommand{\indep}{\perp \!\!\! \perp}
\newcommand{\ereal}[0]{[-\infty, \infty)}
\newcommand{\conv}{\text{conv}}
\newtheorem{proposition}{\normalfont\scshape Proposition}[section]
\newtheorem{theorem}{\normalfont\scshape Theorem}[section]
\newtheorem{assumption}{\normalfont\scshape Assumption}
\theoremstyle{definition}
\newtheorem{lemma}{\normalfont\scshape Lemma}
\newtheorem{definition}{\normalfont\scshape Definition}
\newtheorem{step}{\normalfont\scshape Step}
\renewcommand{\epsilon}{\varepsilon}
\renewcommand{\geq}{\geqslant}
\renewcommand{\leq}{\leqslant}
\begin{document}

\title{Revealed preference with optimal transport: \\ {\Large money pumps, bounded rationality, and preference recovery }}

\date{\normalsize{\today}}

\author{Joshua Lanier, Matthew Polisson, and John K.-H. Quah\thanks{Emails: joshua.lanier@ntu.edu.sg; matt.polisson@leicester.ac.uk; ecsqkhj@nus.edu.sg.}}

\maketitle

\begin{abstract} \linespread{1.1} \selectfont
	This paper explores the connection between two distinct notions of irrationality: the extent to which the consumer fails to maximize his utility function and the extent to which the consumer can be turned into a money pump. We show that the amount of money which can be pumped by an arbitrageur is equal to both (i) the minimum amount of money which the consumer overpays to attain his utility targets (minimum over all utility functions) and (ii) the minimum amount of wasted quasilinear utility (minimum again over all utility functions). Under the assumption that the consumer's true utility function belongs to the aforementioned argmins, we present a method for recovering this true utility function even when choices are non-optimal. Further, we show that this recovered utility function can be interpreted as belonging to a utility maximizing consumer who systematically mis-perceives prices. These results and many more are proved by exploiting a novel connection between revealed preference analysis and optimal transport.
\end{abstract}

\begin{abstract} \linespread{1.1} \selectfont
\textsc{JEL Codes:} D11, D12.
\end{abstract}

\begin{abstract} \linespread{1.1} \selectfont
\textsc{Keywords:} money pumps, revealed preference, cyclical monotonicity, quasilinear utility, additive utility, optimal transport, preference recovery
\end{abstract}

\newpage

\section{Introduction}

It has often been argued that an agent's vulnerability to money pumps is a sign of bounded or limited rationality. This paper is devoted to the study of money pumps, how they can be used to measure an agent's adherence to rational principles, and how they can, under certain assumptions, be leveraged to recover the true utility function of the consumer. Many of the results presented in this paper are proved by exploiting a novel connection between the theory of revealed preferences and the theory of optimal transport.

We begin our discussion with a simple example.  Suppose that there are two goods and a consumer purchases the bundle $\mbx^1 = ( 1,2 )$ when prices are $\mbp^1 = (1,2)$ and the bundle $\mbx^2 = (2,1)$ when the prices are $\mbp^2 = (2,1)$. In period 1, the bundle $\mbx^2$ is cheaper than $\mbx^1$, while in period 2, $\mbx^1$ is cheaper than $\mbx^2$.  An arbitrageur can use these facts to turn the consumer into a money pump by selling the consumer $\mbx^1$ and purchasing $\mbx^2$ in period 1, which nets the arbitrageur $\mbp^1 \cdot ( \mbx^1 - \mbx^2 ) = 1$ and, in period 2, selling the consumer $\mbx^2$ and purchasing $\mbx^1$, netting a further $\mbp^2 \cdot ( \mbx^2 - \mbx^1 ) = 1$. At the end of the day the consumer has been pumped for 2 dollars, while there is no net change to the arbitrageur's stock of goods 1 and 2.  

More generally, suppose that there are $L$ goods and we observe a consumer who purchases the consumption bundle $\mbx^t = (x_1^t, x_2^t, \ldots, x_L^t) \in \mathbb{R}_{+}^L$ when prices are $\mbp^t = (p_1^t, p_2^t, \ldots, p_L^t) \in \mathbb{R}_{++}^L$, at some observation $t\in \{1,2,\ldots, T\}$. An arbitrageur, aware of the purchasing plans of the consumer, can sell to the consumer their desired stream of consumption bundles $\mbx^1,\mbx^2, \mbx^3, \ldots, \mbx^T$ while acquiring the bundles in a different, more economical, sequence $\mbx^{\sigma(1)}, \mbx^{\sigma(2)}, \ldots, \mbx^{\sigma(T)}$ where $\sigma$ is some permutation of $\{1,2,\ldots, T\}$. The arbitrageur has purchased and sold the same collection of bundles and so experiences no net change in inventory (i.e.\ $\sum_{t=1}^T \mbx^t = \sum_{t=1}^T \mbx^{\sigma(t)}$). Yet she earns a profit of
\begin{equation*}
	\text{MP}_{\sigma} = \sum_{t=1}^T \mbp^t \cdot ( \mbx^t - \mbx^{\sigma(t)} )
\end{equation*}
where the first term $\sum_{t=1}^T \mbp^t \cdot \mbx^{t}$ is the revenue earned by the arbitrageur from selling to the consumer and the second term $\sum_{t=1}^T \mbp^t \cdot \mbx^{\sigma(t)}$ is the cost the arbitrageur had to pay to acquire the bundles. In the event that $\text{MP}_{\sigma} > 0$ holds for some permutation $\sigma$ we say that the dataset $D=(\mbp^t,\mbx^t)_{t\leq T}$ admits a {\em money pump}. 

While the presence of a money pump means that the consumer's behavior can be exploited it also indicates a lack of cost-efficiency on the part of the consumer. Indeed, $\text{MP}_{\sigma}> 0$ indicates that the consumer could have acquired the exact same consumption bundles for less money by switching the purchasing order from $\mbx^1, \mbx^2, \ldots, \mbx^T$ to $\mbx^{\sigma(1)}, \mbx^{\sigma(2)}, \ldots, \mbx^{\sigma(T)}$. This simple swap in purchasing order results in total savings of $\text{MP}_{\sigma}$ and in this sense, the consumer's purchasing behavior is irrational or cost-inefficient.

But, how is the presence or absence of a money pump related to the other standard by which economists judge rationality, namely, utility-maximization? The absence of a money pump is also known as {\em cyclical monotonicity} and it is known that a dataset $D$ is cyclically monotone (i.e., free of money pumps) if and only if there is a {\em quasilinear rationalization} of $D$, i.e., there is a well-behaved (in the sense of being continuous, strictly increasing, and concave) utility function $U:\mathbb{R}^L_+\to\mathbb{R}$ such that $\mbx^t$ maximizes $U(\mbx)-\mbp^t\cdot \mbx$, at all $t\leq T$ (see \citet{brown-calsamiglia07}).  Alternatively, cyclical monotonicity is also equivalent to the {\em additive rationalization} of $D$, by which we mean that there is a well-behaved utility function $U$ such that 
$$\sum_{t=1}^T U(\mbx^t)\geq \sum_{t=1}^T U(\tilde \mbx^t)$$
for all $(\tilde \mbx^1,\tilde \mbx^2,\ldots,\tilde \mbx^T)\in\mathbb{R}^{LT}_{+}$ with $\sum_{t=1}^T\mbp^t\cdot \tilde\mbx^t\leq \sum_{t=1}^T\mbp^t\cdot \mbx^t$ (see \citet{browning89}). 

These characterizations of cyclical monotonicity (recall that cyclical monotonicity is equivalent to the absence of a money pump) lead naturally to the following question. For a given dataset $D=(\mbp^t,\mbx^t)_{t\leq T}$ collected from a consumer, we could work out the optimal trading strategy of the arbitrageur (the strategy $\sigma$ which nets the most money) and consequently we could work out the amount of money which would be extracted from the consumer by an arbitrageur following this optimal strategy. We use TMP (for {\em total money pump}) to denote this amount of money.  It seems sensible to regard TMP as a measure of the consumer's departure from rationality, but in what sense does the TMP actually capture irrational behavior? In particular, can TMP be understood as a measure of the degree to which behavior departs from either quasilinear or additive utility maximization?

\subsection{TMP and measures of bounded rationality}

Suppose $D$ cannot be rationalized by a well-behaved quasilinear utility function. Then, for every well-behaved $U$ there is, for some $t$, a gap between the largest possible quasilinear utility value $\max_{\mbx \in \real_{+}^L} U(\mbx) - \mbp^t \cdot \mbx$ and the utility value actually attained $U(\mbx^t) - \mbp^t \cdot \mbx^t$. \citet{allen-rehbeck26} suggest using the smallest gap (infimum over well-behaved utility functions) as a measure of the extent to which the data fails to accord with quasilinear utility maximization\footnote{Our comments here refer to their generalized approach as discussed in their Appendix B.2} 
\begin{equation} \label{eq:Q-intro}
	Q\coloneqq \inf_U  \sum_{t=1}^T \underbracket{ \max_{\mbx\in \mathbb{R}^L_+} \Big(U(\mbx)-\mbp^t\cdot \mbx \Big) }_{\text{max utility}} - \underbracket{ \Big(U(\mbx^t) - \mbp^t \cdot \mbx^t \Big) }_{\text{attained utility}}
\end{equation}
where the infimum is over the well-behaved utility functions.

In the case of the additive rationalization, we can measure the severity of departures from this property using the cost efficiency approach advocated by \citet{afriat1973}. Firstly, we observe that a dataset $D$ cannot be additively rationalized if and only if the consumer is cost inefficient in the following sense: for any well-behaved utility function $U$, the required expenditure to hit the utility target $\sum_{t=1}^TU(\mbx^t)$ is strictly lower than the amount actually spent; formally,   
$$e_U\coloneqq \min\left\{\sum_{t=1}^T \mbp^t\cdot \tilde \mbx^t: \sum_{t=1}^T U(\tilde \mbx^t)\geq \sum_{t=1}^T U(\mbx^t)\right\}$$
We can then measure the level of cost inefficiency by 
\begin{equation*} \label{eq:A-intro}
	A=\inf_U\left\{\sum_{t=1}^T \mbp^t \cdot \mbx^t - e_U\right\}
\end{equation*}
where the infimum is taken over all well-behaved utility functions. So, $A$ is the difference between the amount of money the consumer spent $\sum_{t=1}^T \mbp^t \cdot \mbx^t$ and the cheapest way of acquiring the same additive utility $e_U$.

Our first result of the paper
\begin{equation} \label{eq:main-result-intro}
	Q=A=\mbox{TMP}
\end{equation}
says that these measures are all the same.  

The result in \eqref{eq:main-result-intro} provides support for using the $\text{TMP}$ as a measure of irrationality. Not only does it measure the extent to which a consumer is vulnerable to money pumps (or equivalently the amount of money they could have saved while enjoying the same distribution of consumption) but it delivers a tight lower bound on the amount of cost inefficiency displayed by an additive utility maximizer and a tight lower bound on the amount of utility wasted by a quasilinear utility maximizer. 

It turns out that the connection between money pumps and mis-optimizing behavior extends beyond the finite data setting.

\subsection{The long run money pump}

Suppose the observations $(\mbp^t, \mbx^t)$ are random vectors following some joint distribution $\pi_0$.\footnote{Randomness of the price vector $\mbp^t$ is a natural and somewhat standard assumption in econometric analysis. The randomness of $\mbx^t$ can come from a variety of sources. First, if the consumer perfectly maximizes some quasilinear utility function his demand $\mbx^t$ is random simply because $\mbp^t$ is random. Demand could also be random due to random utility, random attention, or some other stochastic process.} The dataset $D_T = (\mbp^t, \mbx^t)_{t = 1}^T$ then is a random object and we may, for each realization, compute the total money pump $\text{TMP}_T$ (or the quasilinear waste $Q_T$ or additive cost inefficiency $A_T$ as defined above). Of course, $\text{TMP}_T = 0$ if the consumer truly maximizes a quasilinear utility function. More interestingly, if the consumer violates cyclical monontonicity with some positive probability then we may investigate the properties of the non-degenerate statistic $\text{TMP}_T$. 

Consider an arbitrageur who can neither predict the exact sequence of prices $\mbp^1, \mbp^2, \ldots, \mbp^T$ nor the exact behavior of the consumer $\mbx^1, \mbx^2, \ldots, \mbx^T$. Let us however assume that the arbitrageur has enough information to work out the \emph{marginal} distribution of demand $\nu_0$ and price $\mu_0$.\footnote{To be explicit, $P(\mbp^t \in A) = \mu_0(A)$ and $P( \mbx^t \in A ) = \nu_0(A)$ for all $t$ where $P$ is the underlying probability measure and $A$ is some Borel measurable set.} It turns out that the arbitrageur has all the information necessary to fully exploit the behavior of the consumer in the sense that perfect foreknowledge of the true data stream $D_{0} = (\mbp^t, \mbx^t)_{t \in \mathbb{N}}$ offers no additional advantage in the long run. 

Let $\Pi$ denote the collection of joint distributions with marginals $\mu_0$ and $\nu_0$. Every $\pi \in \Pi$ represents a purchasing strategy for the arbitrageur whereby she buys $\mbx$ when prices are $\mbp$ with probability $d\pi( \mbp , \mbx )$. As $\pi$ has $\nu_0$ as its marginal distribution of consumption the arbitrageur is assured to have no long-term net change in inventory provided she always sells the consumer exactly what he demands (as his consumption is drawn from $\nu_0$). Therefore, the arbitrageur can earn
\begin{equation*}
	\text{TMP}_0 = \sup_{ \pi \in \Pi } \ \E_{\pi_0}[ \mbp \cdot \mbx ] - \E_{\pi}[ \mbp \cdot \mbx ]
\end{equation*}
where $\E_{\pi_0}[\mbp \cdot \mbx]$ is the revenue the arbitrageur earns from selling to the consumer and $\E_{\pi}[ \mbp \cdot \mbx ]$ is the cost which the arbitrageur must pay to attain the consumption distribution $\nu_0$.

$\text{TMP}_T$ is the amount of money which can be extracted from the consumer by a clairvoyant arbitrageur who can perfectly forecast the data stream and $\text{TMP}_0$ is the (per period average) amount of money which can be extracted in the long run by the arbitrageur who is only aware of the marginal distributions of price and consumption. We show that, almost surely, $\lim_T \text{TMP}_T / T = \text{TMP}_0$. That is, in the long run clairvoyance offers no strategic advantage. We may think of $\text{TMP}_T / T$ as an estimator for the long run total money pump and in which case our convergence result shows strong consistency (i.e.\ almost sure convergence) of the estimator.

In the finite data context we observed in \eqref{eq:main-result-intro} a strong connection between the money pump, wasted quasilinear utility, and additive utility cost inefficiency. This relation carries over to the infinite data context. Let us define $Q_0$ to be the infimum (over all utility functions) average amount of quasilinear utility wasted by the consumer and $A_0$ to be the infimum (over all well-behaved utility function) per period cost-inefficiency displayed by an additive utility maximizer. We prove that $\text{TMP}_0 = Q_0 = A_0$. So, not only is $\text{TMP}_T$  a strongly consistent estimator for the long run money pump but also for the long run wasted quasilinear utility and additive cost inefficiency.

\subsection{Utility and price misperception}

So far, we have only discussed measures of irrationality (money pumps, wasted utility, etc). We now turn our attention to the question of recovering the utility function of the consumer. As an estimator for the utility function of the consumer we might take any well-behaved utility function $U_T$ which achieves the infimum in the definition of wasted quasilinear utility $Q$ given in \eqref{eq:Q-intro}. Such a utility function provides a better rationalization (in the sense of less wasted utility) compared to any other utility function. While this property is somewhat appealing it would be nice if we could show that $U_T$ is a consistent estimator for the true utility function given some model of mis-optimizing behavior. We are able to accomplish exactly this for a certain model of price misperception.

While the true price in period $t$ is $\mbp^t$ let us suppose that the consumer perceives the price vector to be $\tmbp^t$ where we assume that $\tmbp^t$ and $\mbp^t$ share the same distribution ($\tmbp^t \sim \mbp^t$). In the body of the paper we give several models of behavior under which this is a natural assumption. In particular, $\tmbp^t \sim \mbp^t$ if prices are stationary and the consumer considers outdated prices (so, $\tmbp^t = \mbp^{t-k}$ for some lag $k$) or if the consumer fails to update prices with some positive probability $\beta$ (so, $\tmbp^t =\mbp^t$ with probability $1-\beta$ and $\tmbp^t = \tmbp^{t-1}$ with probability $\beta$). We suppose that the consumer responds optimally given mis-perceived price vector $\tmbp$ in the sense that for each period $t$ 
\begin{equation} \label{eq:misperc-intro}
	U(\mbx^t) - \tmbp^t \cdot \mbx^t  = \max_{\mbx} \ U(\mbx) - \tmbp^t \cdot \mbx.
\end{equation}
Under the assumption $\tmbp^t \sim \mbp^t$ we are able to show that $U_T$, the utility function which achieves the infimum in the definition of $Q$ given by \eqref{eq:Q-intro}, converges pointwise to the true utility function $U$ in \eqref{eq:misperc-intro} as $T$ goes to infinity. 

To recap our asymptotic results; as our dataset $D_T$ grows we are able to recover (i) the long run money pump $\text{TMP}_0$ which happens to be equivalent to long run wasted quasilinear utility $Q_0$ and long run additive cost inefficiency $A_0$ and (ii) the true mis-optimized utility function $U$ (i.e.\ the utility function which satisfies \eqref{eq:misperc-intro}).

\subsection{Optimal Transport}
Many of our results share a strong connection with optimal transport concepts and many of our proofs rely on theorems from the optimal transport literature. The starting point for an optimal transport problem is a cost function $c: P \times X \to \real$ (where $P$ and $X$ are arbitrary Polish spaces), two probability distributions $\mu$ and $\nu$ on $P$ and $X$ respectively, and the collection $\Pi$ of all joint distributions with marginals $\mu$ and $\nu$. Following \citet{villani09} we can think of $\mu$ as a distribution over bakeries and $\nu$ as a distribution over cafes. Each bakery is capable of supplying a cafe with baked goods and the goal is to match bakeries to cafes in the most cost effective way; where the cost of matching bakery $\mbp$ to cafe $\mbx$ is $c(\mbp,\mbx)$. 

Instead of bakeries $\mbp$ being matched to cafes $\mbx$, we consider an arbitrageur who matches prices $\mbp$ to consumption bundles $\mbx$ so as to maximize profits and it is easy to show that a purchasing strategy $\pi$ maximizes the profits of our arbitrageur if and only if this $\pi$ is a solution to the optimal transport problem of matching bakeries to cafes to minimize overall costs given cost function $c(\mbp,\mbx) = \mbp \cdot \mbx$.

As mentioned, many of our results exploit this connection with optimal transport. Our result that $\text{TMP} = Q$ and its infinite data equivalent $\text{TMP}_0 = Q_0$ is proved using the Kantorovich Duality Theorem from optimal transport (see Theorem 5.10 in \citet{villani09}).\footnote{Optimal transport does not have an equivalent of our additive cost inefficiency $A$ and so showing that $Q = A$ and $Q_0 = A_0$ must done without the aid of optimal transport techniques. Actually, our proof technique is to show that $Q \geq A \geq \text{TMP}$ and $Q_0 \geq A_0 \geq \text{TMP}_0$ which is enough to prove our results given that $\text{TMP} = Q$ and $\text{TMP}_0 = Q_0$.} Our result that $\text{TMP}_T \to \text{TMP}_0$ uses continuity properties of the 2-Wasserstein metric (see Corollary 6.11 in \citet{villani09}); which is another important result from optimal transport. Our result that $U_T \to U$ uses relatively recent results from the optimal transport literature found in \citet{barrio19}.

\subsection{Related Literature}

\paragraph{Money Pumps.} Using money pumps to quantify revealed preference violations is considered in \citet{echenique2011} and \citet{smeulders13}. Their approach differs from ours in two ways. First, they only consider trading strategies of the arbitrageur which exploit violations of the generalized axiom of revealed preferences (GARP) which is a weaker requirement than that of cyclical monotonicity; violations of which, our arbitrageur seeks to exploit. Second, while we consider the \emph{total} amount of money the arbitrageur could extract by exploiting the consumer's behavior; \citet{echenique2011} consider the average (either mean or median) amount of money which could be extracted over each possible trading cycle.\footnote{By a trading cycle we mean a list of non-repeating price indices $t_1, t_2, \ldots, t_K$. The money which can be extracted from this particular cycle is $\sum_{k=1}^K \mbp^{t_k} \cdot ( \mbx^{t_k} - \mbx^{t_{k+1}} ) $ where we take $\mbx^{t_{K+1}} = \mbx^{t_1}$.} Similarly, \citet{smeulders13} considers the maximum amount of money which could be extracted over a particular trading cycle and not over the entire time period $1,2, \ldots, T$. Because our arbitrageur exploits a wider range of behavior and because our arbitrageur exploits the entire dataset instead of particular trading cycles it is clear that our $\text{TMP}$ lies above the measures considered by \citet{echenique2011} and \citet{smeulders13}.

As mentioned, \citet{allen-rehbeck26} introduce the wasted quasilinear utility measure $Q$ defined in \eqref{eq:Q-intro}. They interpret $Q$ as a measure of satisficing behavior. They show that $Q$ can be computed as the solution of a linear programming problem and show that $Q$ lies above the total amount of money which can be extracted from the consumer on any trading cycle. Specifically, their Theorem 2 in Appendix B.2 implies that 
\begin{equation*}
	Q \geq \sum_{k=1}^K \mbp^{t_k} \cdot ( \mbx^{t_k} - \mbx^{t_{k+1}} )
\end{equation*}
for any non-repeated sequence $t_1,t_2, \ldots, t_K$ where we take $t_{K+1} = t_1$.

\paragraph{Optimal Transport.}  Optimal transport, while relatively foreign to revealed preference analysis, is becoming increasingly used in economics. For a diverse range of applications and methods see the excellent books by \citet{galichon16} and \citet{chiappori17}. Also, see the recent article on optimal transport and econometrics by \citet{galichon26}. Some additional examples of of optimal transport in economics are \citet{chiappori10}, \citet{galichon22}, \citet{shum22}, \citet{galichon22},  and \citet{lin-liu24}. 

\paragraph{Revealed preference with infinite data.} Several recent papers consider revealed preference tests on infinite datasets. \citet{reny15} shows that the generalized axiom of revealed preferences (GARP) characterizes utility maximization even when the dataset has infinitely many observations. Similarly, \citet{nishimura-oK-quah17} show that (in the presence of a continuity assumption on the revealed preference relation) a property called cyclical consistency characterizes utility maximization when the utility function is required to be continuous and increasing in some given ordering $\Tre$ even when the dataset has infinitely many observations. In Section \ref{sec:pop} we consider infinite datasets where we work with a distribution (with potentially infinite support) over prices and consumption bundles. In Theorem \ref{theorem:pop-rat-quasilinear} we show that cyclical monotonicity characterizes both quasilinear utility maximization and additive utility maximization. In Section \ref{sec:pop-rat} we provide a discussion of the connection of our Theorem \ref{theorem:pop-rat-quasilinear} to the main theorem in \citet{reny15}. We also provide a result (Proposition \ref{prop:reny}) which shows that cyclical monotonicity characterizes quasilinear utility maximization when the data is an infinite collection of price-consumption bundle pairs (instead of a distribution) which is the same type of dataset considered by Reny and Nishimura et al.

\noindent {\bf Outline of the paper.}\; The remainder of the paper is organized as follows. Section \ref{sec:tmp} introduces the total money pump index $\text{TMP}$ and presents our main theorem for the finite data case (Theorem \ref{theorem:tmp}) for the $\text{TMP}$. We also present a result on welfare estimation (Theorem \ref{theorem:welfare-direct}). Section \ref{sec:pop} introduces the infinite data setup where we consider distributions over prices and consumption bundles. This section contains a result connecting the long-run money pump to the long run quasilinear utility wasted and long run additive cost inefficiency (Theorem \ref{theorem:pop-tmp-equi}) and an asymptotic result demonstrating that the sample TMP converges to the long run TMP (Theorem \ref{theorem:tmp-conv}). We also provide general results on the recovery of utility functions for the case where the distribution of consumption admits a density and a second result for the case where consumption is discrete and the distribution of price has connected support. In Section \ref{sec:price-misperception} we introduce a model of price misperception and show that any $U$ which achieves the infimum in the definition of $Q$ (i.e.\ equation \eqref{eq:Q-intro}) is a consistent estimator of the true utility function in this model. Section \ref{sec:conclusion} concludes. The appendix gives a more careful explanation of the relationship between our measure of additive cost inefficiency and the critical cost efficiency index introduced in \citet{afriat1973}, as well as all proofs omitted from the body of the paper. 

\section{The Money Pump} \label{sec:tmp}

Consider a space of alternatives $X \subseteq \mathbb{R}^L$. As in the introduction, $\bs{x} \in X$ can be interpreted as a consumption bundle but other interpretations are possible. A price vector $\mbp = (p_1, p_2, \ldots, p_L) \in \mathbb{R}_{++}^L$ is used to price the alternative $\bs{x} = (x_1,x_2, \ldots, x_L) \in X$ via $\bs{p} \cdot \bs{x}$. A \emph{dataset} is a finite collection of price vector - alternative pairs denoted $D = ( \mbp^t, \mbx^t )_{t \leq T}$ where intuitively $(\mbp^t, \mbx^t)$ means that the consumer purchased $\mbx^t$ when prices were $\mbp^t$. 

\subsection{Examples}

Some examples of different economically relevant consumption spaces $X \subseteq \mathbb{R}^L$ follows. \vspace{-10pt}

\paragraph{Consumption bundles.} As in the intro, there are $L$ goods which can be purchased in any non-negative quantity. The consumption space is then $X = \eucp$ where entry $\ell$ of the alternative $\bs{x} = (x_1,x_2, \ldots, x_L)$ is the quantity of good $\ell$ purchased by the consumer. \vspace{-10pt}

\paragraph{Discrete choice.} Suppose there are $L$ products and the consumer can purchase at most one product per observation. The consumption space is then $X = \{ \bs{0}, \bs{e}_1, \bs{e}_2, \ldots, \bs{e}_L \}$ where $\bs{e}_{\ell}$ is the vector of all 0s with a 1 in entry $\ell$. The vector $\bs{e}_{\ell}$ represents the decision to purchase product $\ell$ while the vector $\bs{0}$ represents the decision to purchase none of the $L$ products. \vspace{-10pt}

\paragraph{Stochastic choice.} Suppose again that the consumer can purchase at most one product in each period $t$ but instead of making a deterministic choice the consumer selects a distribution over the $L$ products and an outside option. The consumption space is then the simplex $X = \{ (x_1,x_2, \ldots, x_L) \in \eucp: \sum_{\ell=1}^L x_{\ell} \leq 1 \}$ where $x_{\ell}$ is the probability that the consumer buys product $\ell$ and the outside option is chosen with probability $1 - \sum_{\ell=1}^L x_{\ell}$.\vspace{-10pt}

\paragraph{Continuous / Discrete choice.} Suppose the goods can be partitioned into divisible and indivisible products. Then, $X = \mathbb{R}^{L'} \times \mathbb{N}_0^{L''}$ where $\mathbb{N}_0 = \{0,1,2,3\ldots \}$ and $L = L' + L''$.

Going forward, we shall often refer to $\bs{x} \in X$ as a consumption bundle. This is a stylistic choice. Our results hold regardless of whether the object $\bs{x}$ is a consumption bundle, a discrete choice selection, a stochastic choice vector of probabilities, or any other possible object which can be identified with a subset of $\mathbb{R}^L$ and which can be given a price via a price vector $\bs{p} \in \eucpp$.

\subsection{Utility and rationalizations.} \label{sec:utility-and-ratioanlizations}

We now discuss various models of utility maximization and what it means to rationalize (or explain) a dataset $D = ( \mbp^t, \mbx^t )_{t = 1}^T$ using these models. 
\begin{definition} \label{def:utility-defs}
	A \emph{utility function} $U$ is a map from $X$ to $\ereal$. The utility function $U$ is \emph{weakly increasing} if $\bs{x} \geq \bs{x}'$ implies $U(\bs{x}) \geq U(\bs{x}')$ and \emph{strictly increasing} if $\bs{x} > \bs{x}'$ implies $U(\bs{x}) > U(\bs{x}')$. We say $U: X \to \ereal$ is \emph{standard} if it can be extended to a concave, upper semicontinuous, and weakly increasing function on $\conv(X)$\footnote{Here $\conv(X)$ denotes the convex hull of $X$.} and $U: X \to \ereal$ is \emph{well-behaved} if it can be extended to a real-valued, concave, continuous, and strictly increasing function on $\conv(X)$. 
\end{definition}
Our definition of a utility function allows the function to take the value $-\infty$. Our main rationalization results in this section (Theorem \ref{theorem:rat-quasilinear} and Theorem \ref{theorem:tmp}) do not require this  technical allowance. However, allowing utility functions to take the special $-\infty$ value is important for Theorem \ref{theorem:welfare-direct} as well as several subsequent results in Section \ref{sec:pop}. 

For a utility function $U: X \to \ereal$ to be well-behaved it must be possible to extend $U$ to a real-valued, concave, continuous, and strictly increasing function on $\conv(X)$. Naturally, if $X$ is already convex (for instance, if $X = \eucp$) then $U$ itself must satisfy these properties and so we needn't consider extending $U$.

A dataset $D = (\mbp^t, \mbx^t)_{t \leq T}$ is \emph{rationalized} by a utility function $U: X \rightarrow \ereal$ if $U(\mbx^t) \geq (>) \ U(\mbx)$ for all $\mbx \in X$ satisfying $\mbp^t \cdot \mbx^t \geq (>) \ \mbp^t \cdot \mbx$ and $U(\bs{x}^t)$ is finite for all $t$.\footnote{What we mean by this is that 
$U(\mbx^t) \geq U(\mbx)$ for all $t$ and all $\mbx \in \mathbb{R}_+^L$ satisfying $\mbp^t \cdot \mbx^t \geq \mbp^t \cdot \mbx$, and the former inequality is strict if the latter inequality is strict.}  $D$ is \emph{additively rationalized} by $U: X \rightarrow \ereal$ if
\begin{equation*} \label{eq:add}
	\sum_{t=1}^T U(\mbx^t) \geq (>) \ \sum_{t=1}^T U(\ti{\mbx}^t), \qquad \forall (\tmbx^1, \tmbx^2, \ldots, \tmbx^T) \in X^T \text{ s.t. } \sum_{t=1}^T \mbp^t \cdot \mbx^t \geq (>) \ \sum_{t=1}^T \mbp^t \cdot \tmbx^t
\end{equation*}
and $U(\bs{x}^t)$ is finite for all $t$. An additive rationalization requires that the consumer's choices yield a higher additive (across-period) utility than any other affordable sequence of bundles $(\tmbx^1, \tmbx^2, \ldots, \tmbx^T)$. Note that if $U$ additively rationalizes the data then, for each $t$, the choice $\mbx^t$ must yield more utility than any other bundle which costs less (i.e.\ any bundle $\mbx$ satisfying $\mbp^t \cdot \mbx^t \geq \mbp^t \cdot \mbx$).  Thus, if $U$ additively rationalizes the data then $U$ rationalizes the data as well. Clearly, the converse need not hold. The dataset $D$ is \emph{quasilinear rationalized} by a utility function $U: X \rightarrow \ereal$ if, for all $t$, 
\begin{equation*}
	U(\mbx^t) - \mbp^t \cdot \mbx^t \geq \ U(\mbx) - \mbp^t \cdot \mbx,\qquad \qquad \forall \mbx \in X
\end{equation*}
and $U(\bs{x}^t)$ is finite for all $t$. A quasilinear utility maximizer seeks to maximize utility net of expenditure. Note that in the case of a quasilinear rationalization, the consumer is not constrained by a budget set, but expenditure leads to dis-utilty and has the effect of restraining purchases.  

\subsection{Money Pump and Cyclical Monotonicity} \label{sec:mp-cm}

Dataset $D = (\mbp^t, \mbx^t)_{t \leq T}$ contains a {\em money pump} if there is a permutation of $\{1,2,\ldots, T\}$, denoted $\sigma$, so that 
\begin{equation*}
	\text{MP}_{\sigma} = \sum_{t=1}^T \mbp^{t} \cdot ( \mbx^{ t} - \mbx^{ \sigma(t) } )
\end{equation*}
is strictly positive. A dataset $D$ that is free of money pumps is said to satisfy {\em cyclical monotonicity}. The number $\text{MP}_{\sigma}$ is the amount of money an arbitrageur earns by, in each period $t$, selling the bundle $\mbx^t$ to the consumer and buying the bundle $\mbx^{\sigma( t )}$. This strategy earns the consumer a net of $\mbp^t \cdot ( \mbx^t - \mbx^{\sigma(t)} )$ in period $t$ and, because $\sigma$ is a permutation, there can be no net change in the inventory of the arbitrageur in the sense that  $\sum_{t=1}^T \mbx^t = \sum_{t=1}^T \mbx^{\sigma(t)}$.

We define the \emph{total money pump} (TMP) as the amount of money which would be extracted from the consumer by an arbitrageur following an optimal trading strategy. That is, 
\begin{equation*} 
	\text{TMP} = \sup_{\sigma} \sum_{t=1}^T \mbp^{t} \cdot ( \mbx^{ t} - \mbx^{ \sigma(t) } )
\end{equation*}
where the supremum is taken over all permutations $\sigma$. 

It is not hard to show that if $D$ is quasilinear rationalized by $U$ then $D$ is also additively rationalized by $U$; on the other hand, there is no guarantee that if $D$ is additively rationalized by $U$, then the same $U$ provides a quasilinear rationalization of $D$. The following result shows, among other things, that in fact these two types of rationalizations are empirically equivalent, and each are equivalent to $\text{TMP} = 0$.

\begin{theorem} \label{theorem:rat-quasilinear}
	For any dataset $D = (\mbp^t, \mbx^t)_{t \leq T}$ the following are equivalent.
	\begin{enumerate}
		\item $\text{{\em TMP}} = 0$.
		\item The dataset $D$ satisfies cyclical monotonicity.
		\item The dataset $D$ can be additively rationalized by a utility function $U$.
		\item The dataset $D$ can be additively rationalized by a well-behaved utility function $U$.
		\item The dataset $D$ can be quasilinear rationalized by a utility function $U$.
		\item The dataset $D$ can be quasilinear rationalized by a well-behaved utility function $U$. 
	\end{enumerate}
\end{theorem}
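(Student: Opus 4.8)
The plan is to prove the six conditions equivalent through the cycle $(2)\Rightarrow(6)\Rightarrow\{(4),(5)\}\Rightarrow(3)\Rightarrow(2)$, supplemented by the separate equivalence $(1)\Leftrightarrow(2)$. Almost all of the links are routine; the real content sits in the construction $(2)\Rightarrow(6)$. For $(1)\Leftrightarrow(2)$, I would use the fact that any permutation $\sigma$ of $\{1,\dots,T\}$ decomposes into disjoint cycles, so $\text{MP}_\sigma$ is the sum of the money pumps along these cycles. Hence $\text{TMP}>0$ if and only if some cycle yields a strictly positive money pump, which is exactly the failure of cyclical monotonicity; conversely, extending a single cycle to a permutation by fixing the remaining indices adds zero, so the two notions coincide and $(1)\Leftrightarrow(2)$.

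The core step $(2)\Rightarrow(6)$ is a constructive version of Rockafellar's cyclical-monotonicity theorem. I would first seek numbers $V_1,\dots,V_T\in\mathbb{R}$ (Afriat-type levels) satisfying
$$V_t - V_s \leq \mbp^s\cdot(\mbx^t-\mbx^s)\qquad\text{for all }s,t.$$
Viewing this as a system of difference constraints on the complete directed graph with edge weights $w(s\to t)=\mbp^s\cdot(\mbx^t-\mbx^s)$, a solution exists if and only if the graph contains no negative-weight cycle. But a negative cycle $t_1\to\cdots\to t_K=t_1$ is precisely the condition $\sum_k \mbp^{t_k}\cdot(\mbx^{t_k}-\mbx^{t_{k+1}})>0$, i.e.\ a money pump; so cyclical monotonicity guarantees the $V_t$ exist. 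I would then set
$$U(\mbx)=\min_{s\leq T}\big[\,V_s+\mbp^s\cdot(\mbx-\mbx^s)\,\big].$$
As a finite minimum of affine functions, $U$ is concave, real-valued, and hence continuous on all of $\mathbb{R}_+^L$ (including its boundary); and since each $\mbp^s\in\mathbb{R}_{++}^L$, every affine branch is strictly increasing, so the minimum is too. Thus $U$ is well-behaved. The Afriat inequalities give $U(\mbx^t)=V_t$ with the $s=t$ branch active at $\mbx^t$, and for any $\mbx$ the bound $U(\mbx)\leq V_t+\mbp^t\cdot(\mbx-\mbx^t)$ rearranges to $U(\mbx^t)-\mbp^t\cdot\mbx^t\geq U(\mbx)-\mbp^t\cdot\mbx$, so $U$ quasilinear rationalizes $D$.

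The remaining links are immediate. For $(6)\Rightarrow(4)$ and $(6)\Rightarrow(5)$, a well-behaved $U$ that quasilinear rationalizes $D$ also additively rationalizes $D$ (the implication noted just before the theorem), and $(4)\Rightarrow(3)$, $(5)\Rightarrow(3)$ follow by dropping the well-behavedness qualifier and again invoking quasilinear $\Rightarrow$ additive. To close with $(3)\Rightarrow(2)$, I would argue as in the Introduction: if $U$ additively rationalizes $D$, then a money pump along some cycle would free up expenditure that could be reallocated to strictly raise $\sum_t U(\mbx^t)$ while respecting the original total budget $\sum_t\mbp^t\cdot\mbx^t$, contradicting additive rationalization; hence $D$ is cyclically monotone.

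The main obstacle is $(2)\Rightarrow(6)$, and within it two delicate points. First, one must verify carefully that solvability of the Afriat difference constraints is equivalent to the absence of a negative cycle, and that a violating cycle is \emph{literally} a money pump as defined in $(\ref{eq:mp})$ — this is where the sign bookkeeping and the identity $t_1=t_K$ must be handled precisely. Second, one must ensure $U$ is increasing: this is exactly where strict positivity of prices enters, and the min-of-affine construction is chosen precisely so that every piece has positive gradient $\mbp^s$, making the lower envelope increasing with no further correction needed.
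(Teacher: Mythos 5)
Your proof is correct and takes essentially the same approach as the paper: the paper's rationalizing utility is exactly your lower envelope of affine functions with gradients $\mbp^s$, only written directly as an infimum of telescoping sums over finite sequences of observations rather than via separately solved Afriat levels $V_t$, and the verification of well-behavedness and of the quasilinear rationalization is the same. The one point to tighten is $(3)\Rightarrow(2)$: since item 3 does not assume $U$ is increasing, you cannot argue that freed-up expenditure can be reallocated to strictly raise utility; instead observe that a money pump via a permutation $\sigma$ makes the profile $(\mbx^{\sigma(t)})_{t\leq T}$ strictly cheaper while yielding the same total utility $\sum_{t=1}^T U(\mbx^t)$, contradicting the strict-inequality clause in the definition of additive rationalization.
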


The claims in Theorem \ref{theorem:rat-quasilinear} are not new in the sense that they can, without too much difficulty, be pieced together using existing results.\footnote{The existing results to which we refer consider the special case where $X = \eucp$. That said, there is no difficulty in using the existing proof techniques to extend the existing results to allow for arbitrary non-empty consumption spaces $X \subseteq \mathbb{R}^L$.} The equivalence between statements 1 and 2 is true basically by definition. The equivalence between statements 2, 3, and 4 is found in \citet{browning89} and the equivalence between statements 2, 5, and 6 is found in \citet{brown-calsamiglia07}. To keep this article reasonably self-contained, we provide a proof of Theorem  \ref{theorem:rat-quasilinear} in the Appendix.

The equivalence between statements 3 and 4 and the equivalence between items 5 and 6 shows that there are no additional restrictions placed on the data by assuming that $U$ is well-behaved in either the additive rationalization or the quasilinear rationalization. 

The equivalence between 1, 4, and 6 shows that there is a tight relationship between the existence of a money pump and additive and quasilinear rationalizations. This suggests that the value of the total money pump may be useful as a measure of the degree to which the consumer fails to act as an additive or quasilinear utility maximizer. We address this issue in the next subsection.  

\subsection{$\text{TMP}$ as a measure of rationality} \label{sec:tmp-rat}

By definition, a dataset $D = (\mbp^t, \mbx^t)_{t \leq T}$ can be additively rationalized if the observations, when taken as a whole, maximize an overall utility function  $V: X^{T} \to \ereal$ that is additive across bundles at each observations, i.e.,  
$V(\tilde\mbx^1,\tilde\mbx^2,\ldots,\tilde\mbx^T)=\sum_{t=1}^T U(\tmbx^t)$, subject to the overall expenditure not exceeding the consumer's total expenditure, which is $\sum_{t=1}^T\mbp^t\cdot\mbx^t$. When $D$ cannot be additively rationalized, how should we measure the extent of the violation?  The cost-efficiency approach proposed by \citet{afriat1973} measures the amount of money which could have been saved by the consumer were they to have acted perfectly in line with the utility maximization hypothesis under investigation (the additive utility model in our case).\footnote{Appendix \ref{sec:cost-efficiency} discusses in greater detail the relationship between $A$ and Afriat's critical cost efficiency index.}  For a utility function $U$ let $e_U$ denote the smallest amount of money for which a consumer with additive (across periods) utility function $U$ could obtain utility $\sum_{t=1}^T U(\mbx^t)$. That is,
\begin{equation*}
	e_U = \inf \left\{ \sum_{t=1}^T \mbp^t \cdot \tmbx^t: \sum_{t=1}^T U(\tmbx^t) \geq \sum_{t=1}^T U(\mbx^t) \right\}
\end{equation*}
The \emph{additive cost inefficiency} displayed by the consumer is the difference between the amount that the consumer actually spent and the smallest amount of money they could have spent to achieve the same utility. More formally, the additive cost inefficiency is the number
\begin{equation*} \label{eq:additive}
	A = \inf_U \left( \sum_{t=1}^T \mbp^t \cdot \mbx^t - e_U \right)
\end{equation*}
where the infimum is taken over all utility functions for which $U(\bs{x}^t)$ is finite for all $t$.

Next, let us suppose that we would like to measure the extent to which the consumer has failed to act as a quasilinear utility maximizer. One approach, which might be termed the utility efficiency approach, is to measure the additional utility which the consumer could have derived had they acted perfectly in line with the model of utility maximization under investigation. Of course, this approach only makes sense when utility is cardinal (in particular, the utility function must be identified up to translation by the consumer's behavior). If the utility function is only identified up to monotonic transformation then it makes no sense to talk about differences in utility. As the quasilinear utility function is indeed cardinal in the requisite sense, it is reasonable to define the \emph{quasilinear utility inefficiency} displayed by the consumer as
\begin{equation*} 
	Q = \inf_U \left( \sum_{t=1}^T \sup_{\mbx \in X} \left[ U(\mbx) - \mbp^t \cdot \mbx \right] - \left[ U(\mbx^t) - \mbp^t \cdot \mbx^t \right] \right)
\end{equation*}
where the infimum is over all utility functions $U$ for which $U(\bs{x}^t)$ is finite for all $t$. The term in the supremum represents the largest amount of quasilinear utility which could have been attained by the consumer in period $t$ whereas the rightmost term represents the amount of quasilinear utility actually achieved. The object $Q$ was introduced in \citet{allen-rehbeck26} as a measure of deviation from quasilinear utility maximization.\footnote{What we call $Q$ corresponds to the ``minimum deviations'' considered in Appendix B.2 of \citet{allen-rehbeck26} when using (in their language) the aggregator $f(e_1, e_2, \ldots, e_T) = \sum_{t=1}^T e_t$.} 

Importantly, \citet{allen-rehbeck26} show that $Q$ is easy to calculate. In particular, $Q$ is equal to $\bar\varepsilon$, the solution value for the linear programming problem of finding $(u_1,u_2,\ldots, u_T) \in \mathbb{R}^T$ and $(\varepsilon_1, \varepsilon_2, \ldots, \varepsilon_T) \in \mathbb{R}_+^T$ to solve
\begin{IEEEeqnarray}{rCl}
	\min & \quad & \sum_{t=1}^T \varepsilon_t \nonumber \\
	\text{s.t.} & \quad & u_s \leq u_t + \mbp^t \cdot ( \mbx^s - \mbx^t ) + \varepsilon_t, \qquad \text{ for all } s,t \label{eq:lp}
\end{IEEEeqnarray}

To recap, we have introduced three distinct ways of quantifying deviations from cyclical monotonicity. It turns out that they are all the same.

\begin{theorem} \label{theorem:tmp}
For any dataset $D = (\mbp^t, \mbx^t)_{t \leq T}$, 
	\begin{equation*} \label{eq:tmp-result}
		\text{{\em TMP}} = A = Q = \bar{\varepsilon}.
	\end{equation*}
	Moreover, there exists a well-behaved utility function $U: X \rightarrow \mathbb{R}$ that attains the infimum in the definitions of $A$ and $Q$. 
\end{theorem}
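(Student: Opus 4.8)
The plan is to prove the chain by sandwiching $A$ and $Q$ between $\text{TMP}$ from below and $\bar\varepsilon$ from above, after separately establishing that $\text{TMP}=\bar\varepsilon$. Writing $A(U)=\sum_{t}\mbp^t\cdot\mbx^t-e^U$ and $Q(U)=\sum_t[\sup_{\mbx}(U(\mbx)-\mbp^t\cdot\mbx)-(U(\mbx^t)-\mbp^t\cdot\mbx^t)]$ for the per-$U$ objectives, the goal reduces to three things: (i) $A(U),Q(U)\geq\text{TMP}$ for every $U$; (ii) $\text{TMP}=\bar\varepsilon$; and (iii) a single well-behaved $U^\star$ with $A(U^\star),Q(U^\star)\leq\bar\varepsilon$. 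Granting these, every inequality in $\bar\varepsilon\geq A(U^\star)\geq\text{TMP}=\bar\varepsilon$ (and likewise for $Q$) is an equality, so $A=Q=\text{TMP}=\bar\varepsilon$ and $U^\star$ attains both infima.

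Step (i) is the easy direction and, crucially, uses nothing about $U$ beyond its being a function $\mathbb{R}^L_+\to\mathbb{R}$. Fix any permutation $\sigma$. Because $\sigma$ merely reindexes a finite sum, $\sum_t U(\mbx^{\sigma(t)})=\sum_t U(\mbx^t)$; hence the sequence $(\mbx^{\sigma(t)})_t$ is feasible for the program defining $e^U$, giving $e^U\leq\sum_t\mbp^t\cdot\mbx^{\sigma(t)}$ and therefore $A(U)\geq\sum_t\mbp^t\cdot(\mbx^t-\mbx^{\sigma(t)})=\text{MP}_\sigma$. Similarly, bounding each supremum below by its value at $\mbx^{\sigma(t)}$ and again using the reindexing cancellation gives $Q(U)\geq\text{MP}_\sigma$. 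Taking the supremum over $\sigma$ and then the infimum over $U$ yields $A,Q\geq\text{TMP}$.

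Step (ii) carries the combinatorial content and is the step I expect to be the main obstacle. Since $\sum_t\mbp^t\cdot\mbx^t$ is constant, $\text{TMP}=\sum_t\mbp^t\cdot\mbx^t-\min_\sigma\sum_t\mbp^t\cdot\mbx^{\sigma(t)}$, and the minimization is a linear assignment problem. I would take the LP dual of the $\bar\varepsilon$-program (which is feasible and bounded below by $0$, so strong duality applies) and verify by a direct computation that it is exactly the maximum-profit circulation $\max\sum_{s,t}\lambda_{ts}\,\mbp^t\cdot(\mbx^t-\mbx^s)$ subject to $\lambda_{ts}\geq0$, $\sum_s\lambda_{ts}\leq1$ for all $t$, and $\sum_t\lambda_{tr}=\sum_s\lambda_{rs}$ for all $r$. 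The constraint matrix is a network incidence matrix, hence totally unimodular, so the optimum is attained at an integral vertex; an integral feasible $\lambda$ has entries in $\{0,1\}$ with in- and out-degree at most one at each node, i.e.\ it is a disjoint union of cycles, and its objective equals $\text{MP}_{\hat\sigma}$ for the permutation $\hat\sigma$ that completes it with fixed points (which contribute $\mbp^t\cdot(\mbx^t-\mbx^t)=0$). Conversely each permutation yields a feasible integral $\lambda$. Hence the dual optimum equals $\text{TMP}$, and strong duality gives $\bar\varepsilon=\text{TMP}$.

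For step (iii), let $(u_t,\varepsilon_t)_t$ solve the $\bar\varepsilon$-program and set $U^\star(\mbx)=\min_s\{u_s+\varepsilon_s+\mbp^s\cdot(\mbx-\mbx^s)\}$; as a minimum of finitely many affine functions with strictly positive gradients $\mbp^s$, it is continuous, concave, and increasing, hence well-behaved. The constraints $u_t\leq u_s+\mbp^s\cdot(\mbx^t-\mbx^s)+\varepsilon_s$ give $U^\star(\mbx^t)\geq u_t$ for each $t$. Since $U^\star(\mbx)\leq u_t+\varepsilon_t+\mbp^t\cdot(\mbx-\mbx^t)$ (the $s=t$ term in the defining minimum), we get $\sup_\mbx(U^\star(\mbx)-\mbp^t\cdot\mbx)\leq u_t+\varepsilon_t-\mbp^t\cdot\mbx^t$; summing and using $\sum_t U^\star(\mbx^t)\geq\sum_t u_t$ gives $Q(U^\star)\leq\sum_t(u_t+\varepsilon_t)-\sum_t U^\star(\mbx^t)\leq\bar\varepsilon$. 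For $A(U^\star)$, the same bound $U^\star(\tmbx^t)\leq u_t+\varepsilon_t+\mbp^t\cdot(\tmbx^t-\mbx^t)$ shows that any $(\tmbx^t)$ with $\sum_t U^\star(\tmbx^t)\geq\sum_t U^\star(\mbx^t)\geq\sum_t u_t$ must satisfy $\sum_t\mbp^t\cdot\tmbx^t\geq\sum_t\mbp^t\cdot\mbx^t-\bar\varepsilon$, whence $e^{U^\star}\geq\sum_t\mbp^t\cdot\mbx^t-\bar\varepsilon$ and $A(U^\star)\leq\bar\varepsilon$. This closes the sandwich. The one point needing care is that the infima defining $A$ and $Q$ range over all (not necessarily well-behaved) utility functions; this causes no trouble, because the lower bounds of step (i) hold for arbitrary $U$, while the matching upper bounds are delivered by the single well-behaved $U^\star$, which therefore attains both infima.
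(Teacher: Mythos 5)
Your proposal is correct, but it closes the chain along a genuinely different route from the paper. The paper never invokes LP duality: its hard direction is $\text{TMP}\geq Q$ (and $\geq A$), proved by an exchange argument showing that for an optimal permutation $\sigma$ the permuted dataset $D_\sigma=(\mbp^t,\mbx^{\sigma(t)})_{t\leq T}$ satisfies cyclical monotonicity, whence Theorem \ref{theorem:rat-quasilinear} supplies a well-behaved $U$ quasilinear-rationalizing $D_\sigma$ whose total waste equals the pumped amount; the equality with $\bar\varepsilon$ then falls out of the closed chain $Q\geq\bar\varepsilon\geq\text{TMP}\geq Q$. You instead prove $\text{TMP}=\bar\varepsilon$ head-on via strong duality and integrality of the dual circulation polytope, and you build the attaining utility $U^\star(\mbx)=\min_s\{u_s+\varepsilon_s+\mbp^s\cdot(\mbx-\mbx^s)\}$ from an optimal \emph{primal} LP solution rather than from the permuted data. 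Both arguments are sound and all your inequalities check out (in particular the verification $U^\star(\mbx^t)\geq u_t$ from the LP constraints, and the bounds $Q(U^\star)\leq\bar\varepsilon$, $A(U^\star)\leq\bar\varepsilon$). The paper's route is more elementary and reuses Theorem \ref{theorem:rat-quasilinear}; yours concentrates the combinatorial content into one standard optimization fact and delivers the minimizing utility in closed form from the LP output, which is computationally convenient. One small gloss to tighten: the dual constraint matrix as you write it (flow conservation plus out-degree bounds $\sum_s\lambda_{ts}\leq 1$) is not literally a node-arc incidence matrix; integrality of the vertices follows from the standard node-splitting reduction, or more simply by noting that the diagonal arcs have zero cost so the maximum over your polytope equals the maximum over the Birkhoff polytope, where Birkhoff--von Neumann applies. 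This is a presentational point, not a gap.
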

As noted above, the result $Q = \bar{\varepsilon}$ is shown in \citet{allen-rehbeck26}. We include it here and provide a proof for the sake of completeness. 

To give some insight into the proof of Theorem \ref{theorem:tmp} let us focus on how we show that $\text{TMP} = Q$ (the idea behind showing that $\text{TMP} = A$ is similar). The ``easy direction'' is showing that $Q \geq \text{TMP}$. Indeed, for any $U$ and permutation $\sigma$ we have
\begin{IEEEeqnarray*}{rCl}
	\sum_{t=1}^T \sup_{\mbx \in X} \left[ U(\mbx) - \mbp^t \cdot \mbx \right] - \left[ U(\mbx^t) - \mbp^t \cdot \mbx^t \right] & \geq & \sum_{t=1}^T \left[ U( \mbx^{\sigma(t)} ) - \mbp^t \cdot \mbx^{\sigma(t)} \right] - \left[ U(\mbx^t) - \mbp^t \cdot \mbx^t \right] \\
	& = & \sum_{t=1}^T \mbp^t \cdot ( \mbx^t - \mbx^{\sigma(t)} )
\end{IEEEeqnarray*}
and so the amount of quasilinear utility wasted for any $U$ is always greater than the amount of money which can be pumped for any $\sigma$. Thus, $Q \geq \text{TMP}$.

Showing that $\text{TMP} \geq Q$ is more delicate. The key insight we utilize is that for any permutation $\sigma$ which achieves the supremum in the definition of $\text{TMP}$ it happens that the permuted dataset $D_{\sigma} = ( \mbp^t, \mbx^{\sigma(t)} )_{t \leq T}$ satisfies cyclical monotonicity. In other words, the purchasing behavior of the arbitrageur, which is given by $D_{\sigma}$, must satisfy cyclical monotonicity.  Once this fact is established Theorem \ref{theorem:rat-quasilinear} can be applied to show that there exists a well-behaved utility function $U$ which rationalizes the permuted data. It can then be shown that the amount of money pumped via $\sigma$ is weakly greater than the amount of quasilinear utility wasted according to $U$ which establishes that $\text{TMP} \geq Q$. 

Any well-behaved $U$ which achieves the infimum in the definition of $Q$ also achieves the infimum in the definition of $A$. The converse is not true. In particular if $U$ achieves the infimum in $Q$ then $ k U$ achieves the infimum in $A$ for \emph{any} $k > 0$ but $k U$ achieves the infimum in $Q$ only for $k = 1$. In Section \ref{sec-app:Q-A} of the Online Appendix we investigate further the connection between the utility functions which achieve the infimum in the definition of $A$ and those which achieve the infimum in the definition of $Q$.

Recall that from items 3-6 of Theorem \ref{theorem:rat-quasilinear} we learned that when dealing with additive or quasilinear rationalizations there are no additional restrictions imposed on the data by requiring that the rationalizing utility function is well-behaved. This insight is preserved in Theorem \ref{theorem:tmp} in the sense that the infima in the definitions of $A$ and $Q$ can always be achieved by well-behaved utility functions and thus we could have defined $A$ and $Q$ using infima over the collection of well-behaved utility functions without changing the content of these objects. 
 
\subsection{Welfare} \label{sec:welfare}

Suppose we would like to estimate the difference in utility between two bundles $\bs{x}^{\diamond}, \bs{x}^{\square} \in X$. While we do not know the consumer's utility function we are able to identify some collection of utility functions $\mathcal{U}$ to which the consumer's true utility function belongs. This collection $\mathcal{U}$ allows us to calculate the largest and smallest possible utility differences\footnote{The left equation in \eqref{eq:utility-diffs} uses the convention $\infty - \infty = -\infty$ and the right equation uses $\infty - \infty = \infty$. These conventions allow us to ignore utility functions which satisfy $U( \bs{x}^{\diamond} ) = U(\bs{x}^{\square}) = - \infty$.}$^,$\footnote{Similar ``utility differences'' objects are considered in \citet{allen-rehbeck20} for a different model of approximate quasilinear utility maximization.}
\begin{equation} \label{eq:utility-diffs}
	U_{\ua}(\mbx^{\di}, \mbx^{\sq}; \mathcal{U}) = \sup_{U \in \mathcal{U}} \Big\{ U( \mbx^{\di} ) - U(\mbx^{\sq}) \Big\} \quad \text{ and } \quad U_{\da}(\mbx^{\di}, \mbx^{\sq}; \mathcal{U}) = \inf_{U \in \mathcal{U} } \Big\{ U(\mbx^{\di}) - U(\mbx^{\sq}) \Big\}
\end{equation}
The object on the left is the maximum possible utility difference between $\bs{x}^{\di}$ and $\bs{x}^{\sq}$ under the assumption that the true utility function belongs to $\mathcal{U}$. The object on the right is the smallest utility difference. The concepts are connected via the formula $U_{\ua}(\mbx^{\di}, \mbx^{\sq}; \mathcal{U}) = -U_{\da}( \mbx^{\sq}, \mbx^{\di}; \mathcal{U} )$ and thus it suffices to focus attention on the maximum utility difference as any result on $U_{\ua}(\mbx^{\di}, \mbx^{\sq}; \mathcal{U})$ can easily be transformed to a result on $U_{\da}(\mbx^{\di}, \mbx^{\sq}; \mathcal{U})$.

When the behavior of the consumer $D = ( \bs{p}^t, \bs{x}^t )_{t \leq T}$ satisfies cyclical monotonicity then it makes sense to let $\mathcal{U}$ in \eqref{eq:utility-diffs} be the collection of utility functions which quasilinear rationalize $D$. When no such rationalizing $U$ exists we can generalize the approach by taking $\mathcal{U}$ to be the utility functions which achieve the infimum in the definition of $Q$.\footnote{See Section \ref{sec:argmin-char} in the appendix for an alternative way of characterizing this collection of utility functions.}

It turns out that we can evaluate the utility differences in \eqref{eq:utility-diffs} by solving a linear programming problem. So, let $\mathcal{T} = \{ 1,2,\ldots, T, \diamond, \square \}$ and let $\hat{\mathcal{U}}_{\ua}( \bs{x}^{\diamond}, \bs{x}^{\square} )$ denote the solution value for the linear programming problem of finding numbers $\{ u^t \}_{ t \in \mathcal{T} }$, non-negative numbers $\{ \varepsilon^t \}_{t \in \mathcal{T}}$, and vectors $\mbp^{\diamond}, \mbp^{\square} \in \eucp$ to solve
\begin{IEEEeqnarray}{rCrCl}
	\sup & \quad & u^{\diamond} & - & u^{\square} \nonumber \\
	\text{s.t.} & & u^s & \leq & u^t + \mbp^t \cdot ( \mbx^s - \mbx^t ) + \varepsilon^t, \qquad \forall s, t \in \mathcal{T}  \label{eq:lp-ineqs-direct} \\
	&& \sum_{t=1}^T \varepsilon^t & = & \text{TMP} \nonumber \\
	&& \varepsilon^{\diamond} & = & \varepsilon^{\square} = 0 \nonumber
\end{IEEEeqnarray}
where $\text{TMP}$ denotes the total money pump for the dataset $D$. Note that the inequalities in \eqref{eq:lp-ineqs-direct} are exactly the same as the ones appearing in \eqref{eq:lp} except that $\mathcal{T}$ includes not only the observed indices $1,2,\ldots, T$ but also the new observations: $\di, \sq$. This linear program turns out to deliver our object of interest: $U_{\ua}(\bs{x}^{\di}, \bs{x}^{\sq}, \mathcal{U})$.
\begin{theorem} \label{theorem:welfare-direct}
	Let $D = (\mbp^t, \mbx^t)_{t \leq T}$ be a dataset, let $\mbx^{\diamond}, \mbx^{\square} \in X$ be two consumption bundles, let $\mathcal{U}_{wb}$ and $\mathcal{U}_{st}$ denote the well-behaved and standard utility functions (see Definition \ref{def:utility-defs}), respectively which achieve the infimum in the definition of $Q$. Then,
	\begin{equation} \label{eq:u-diffs}
		U_{\ua}( \mbx^{\diamond}, \mbx^{\square}; \mathcal{U}_{wb} ) = U_{\ua}( \mbx^{\diamond}, \mbx^{\square}; \mathcal{U}_{st} ) = \hat{U}_{\ua}( \mbx^{\diamond}, \mbx^{\square} )
	\end{equation}
	and the supremum in the definition of $U_{\ua}( \mbx^{\diamond}, \mbx^{\square}; \mathcal{U}_{st} ) $ is attained. 
\end{theorem}
The theorem guarantees that there is some standard utility function $U: X \to \ereal$ which achieves the supremum in the definition of $U_{\ua}( \mbx^{\diamond}, \mbx^{\square}; \mathcal{U}_{st} )$. This statement cannot be strengthened to require that the utility function is well-behaved. For a simple example, suppose there is one good and we observe the consumer purchase $1$ unit when prices are $1$. We would like to know the maximum utility difference between $1$ unit of consumption and $0$ units of consumption. For each $n$ let $U_n(x) = \min( x-1, n(x-1) )$. Each $U_n$ is a well-behaved utility function which quasilinear rationalizes the consumer's behavior (and thus it attains the infimum in the definition of $Q$). Yet, $U_n( 1 ) - U_n(0) = n$ which can be sent to infinity by sending $n \to \infty$. Thus, $U_{\ua}( 1, 0; \mathcal{U}_{wb} ) = \infty$ and no well-behaved utility function will achieve this value. 

Theorem \ref{theorem:welfare-direct} is no longer true when considering non concave utility functions. That is, in general $U_{\ua}( \mbx^{\diamond}, \mbx^{\square}; \mathcal{U}_{wb} ) \neq U_{\ua}( \bs{x}^{\diamond}, \bs{x}^{\square}; \mathcal{U} )$ when $\mathcal{U}$ is the collection of continuous and strictly increasing (but not necessarily concave) utility functions which achieve the infimum in the definition of $Q$. For an example, suppose there is a single good and two observations $D = ( (p^1, x^1), (p^2, x^2) )$ where $p^1 = p^2 = 1$, $x^1 = 1$ and $x^2 = 3$. Take $x^{\diamond} = x^2 = 3$ and $x^{\square} = 2$. Note that $D$ can be quasilinear rationalized (for instance, by $U(x) = x$) and therefore the supremum in the definition of $U_{\ua}( \mbx^{\diamond}, \mbx^{\square}; \mathcal{U} )$ involves only utility functions which quasilinear rationalize $D$. If $U$ quasilinear rationalizes $D$ then $U(x^1) - x^1 \geq U( x^2 ) - x^2$ and $U( x^2 ) - x^2 \geq U(x^1) - x^1$ and therefore $U(1) - 1 = U(3) - 3$. Rearranging we have $U( 3 ) - U(1) = 2$. If $U$ is concave then $U( 3 ) - U(2) \leq U(2)- U(1)$ and so $U(3) - U(2) \leq 1$. This inequality is achieved as an equality for $U(x) = x$ and thus the supremum utility difference is $1$ when requiring concavity. On the other hand, let $U_n$ be a sequence of continuous and increasing functions which, for each $n$, satisfies (i) $U_n(x) = x$ for $x \notin ( 1,3 )$, (ii) $x \geq U_n(x)$ for each $x \in (1,3)$, and (iii) $U_n( 2 ) = 1 + 1/n$. Now, we have $U_n(3) - U_n(2) \to 2$ from which it is clear that the supremum utility difference is $2$ without the requirement of concavity.

Theorem \ref{theorem:welfare-direct} is similar in spirit to the results in the ``Ordinal comparisons of consumption bundles'' section of \citet{varian1982}. In this section Varian considers a dataset $D = (\mbp^t, \mbx^t)_{t \leq T}$ which can be rationalized by a well-behaved utility function (equivalently $D$ satisfies the generalized axiom of revealed preferences). Given two bundles $\bs{x}^{\diamond}$ and $\bs{x}^{\square}$ he presents a linear programming problem which determines whether or not $U(\bs{x}^{\diamond}) > U( \bs{x}^{\square} )$ for all well-behaved utility functions $U$ which rationalize the data. 

There are two key distinctions between Varian's results and our own. First, Varian's starting point is a dataset $D$ which can be rationalized by a well-behaved utility function whereas our Theorem \ref{theorem:welfare-direct} places no requirements on the dataset. Second, utility in Varian's case has no natural units and so it is meaningless to discuss the difference in utility obtained between $\bs{x}^{\diamond}$ and $\bs{x}^{\square}$. In contrast, Theorem \ref{theorem:welfare-direct} concerns quasilinear utility wherein expenditure supplies a natural unit of utility. For this reason we are able to meaningfully put bounds on the difference in utility obtained between two bundles.

\section{The Long Run Money Pump} \label{sec:pop}

\subsection{The data}

In this section we move beyond datasets with finitely many observations and instead assume that we observe a joint distribution $\pi_0$ over prices and the consumption bundles purchased by the consumer $(\bs{p}, \bs{x})$. In other words, in this section a dataset $\pi_0$ is a distribution on $\eucpp \times X$ which represents the observed distribution of prices and consumption. Naturally, such data $\pi_0$ need not be generated by some finite sequence of observations $D = (\bs{p}^t, \bs{x}^t)_{t \leq T}$ and so the type of data we consider in this section is more general than what we considered in Section \ref{sec:tmp}. 

Suppose the distribution $\pi_0$ on $\eucpp \times X$ has marginals $\mu_0$ and $\nu_0$.\footnote{That is, $\mu_0$ is the distribution of prices and $\nu_0$ is the distribution of consumption.} As before, we imagine that the consumer has no control over the prices that he faces and so his decision is merely to purchase consumption bundles based on the realized prices. In other words, the consumer's job is to select a joint distribution $\pi$ over $\eucpp \times X$ whose first marginal is the distribution of prices $\mu_0$. Let $\Pi(\mu_0)$ denote the collection of all such joint distributions (so, $\pi \in \Pi(\mu_0)$ if and only if $\mu_0(A) = \pi( A \times X )$ for all Borel $A$).

\subsection{Utility and rationalizations}

The data $\pi_0$ with marginals $\mu_0$ and $\nu_0$ is \emph{additively rationalized} by a measurable utility function $U: X \to \ereal$ if\footnote{We define $\bs{E}_{\pi}[ U(\bs{x}) ] = -\infty$ whenever $\bs{E}_{\pi}[ \min( U(\bs{x}), 0) ] = -\infty$. So, $\bs{E}_{\pi}[ U(\bs{x}) ]$ is defined for all $\pi \in \Pi(\nu_0)$. All our results will assume that $\mu_0$ and $\nu_0$ have finite second moments and thus $\E_{\pi_0}[ \mbp\cdot\mbx ]$ is always finite.}$^,$\footnote{The notation $\E_{\pi}[ U(\mbx) ]$ is shorthand for $\int U(\mbx) \ d \pi(\mbp,\mbx)$ and similarly the notation $\E_{\pi}[ \mbp \cdot \mbx ]$ is shorthand for $\int \mbp \cdot \mbx \ d \pi(\mbp, \mbx)$.}
\begin{equation*}
	\E_{\pi_0}[ U( \mbx ) ] \geq (>) \ \E_{\pi}[ U(\mbx) ] , \qquad \forall \pi \in \Pi(\mu_0) \text{ s.t. } \E_{ \pi_0 }[ \mbp \cdot \mbx ] \geq (>) \ \E_{\pi}[ \mbp \cdot \mbx ]
\end{equation*}
and $\E_{\pi_0}[ U(\bs{x}) ]$ is finite. An additive rationalization requires that the consumer's choices $\pi_0$ yield a higher additive utility than any other affordable distribution of purchases $\pi \in \Pi(\mu_0)$. The data $\pi_0$ is \emph{quasilinear rationalized} by a measurable utility function $U$ if\footnote{Here $\supp(\pi_0)$ denotes the support of $\pi_0$. That is, $\supp(\pi_0)$ is the smallest closed set $A$ satisfying $\pi_0(A) = 1$.}
\begin{equation*}
	U( \mbx ) - \mbp \cdot \mbx = \max_{ \tmbx \in X } \ \left\{ U( \tmbx ) - \mbp \cdot \tmbx \right\}, \qquad  \text{ for all } ( \bs{p}, \bs{x} ) \in \supp(\pi_0)
\end{equation*}
and $\E_{\pi_0}[ U(\bs{x}) ]$ is finite. The quasilinear rationalization requires that, for each $(\bs{p}, \bs{x})$ in the support of $\pi_0$, the choice $\bs{x}$ given prices $\bs{p}$ yields more utility net of expenditure than any other choice $\tbs{x} \in X$.

\subsection{Money Pump and Cyclical Monotonicity} \label{sec:pop-rat}

Suppose an arbitrageur is aware of the purchasing behavior of the consumer $\pi_0$ and considers an arbitrage strategy to maximize profits. The arbitrageur sells to the consumer (and so earns a revenue of $\int \bs{p} \cdot \bs{x} \ d \pi_0( \bs{p}, \bs{x} ) $) and makes purchases according to a joint distribution $\pi$ over price-consumption pairs $(\mbp, \bs{x})$. Naturally, we assume that the arbitrageur faces the same distribution of prices as the consumer and thus her strategy $\pi$ must belong to $\Pi(\mu_0)$. Moreover, to ensure that the arbitrageur has no net change in her long run stock of goods she chooses a $\pi$ whose marginal distribution of consumption is equal to that of the consumer's (i.e.\ $\pi( \eucpp \times A ) = \nu_0(A)$ for all Borel $A \subseteq X$). In other words, the arbitrageur's purchasing behavior $\pi$ is a \emph{coupling} of $\mu_0$ and $\nu_0$ where a coupling is a joint distribution with marginals $\mu_0$ and $\nu_0$, respectively. Let $\Pi(\mu_0,\nu_0)$ denote all such couplings of $\mu_0$ and $\nu_0$.

An arbitrageur who employs strategy $\pi \in \Pi( \mu_0, \nu_0 )$ nets
\begin{equation*}
	\text{MP}_0(\pi) = \textbf{E}_{\pi_0}[ \mbp \cdot \mbx ] - \E_{\pi}[ \mbp \cdot \mbx ]
\end{equation*}
and the optimal strategy yields\footnote{For general $\pi_0$ it may be the case that the definition of $\text{TMP}_0$ involves the forbidden operation $\infty - \infty$. In all of our results we assume that $\pi_0$ has finite variance which rules out this possibility.}
\begin{equation} \label{eq:pop-tmp}
	\text{TMP}_0 = \sup_{ \pi \in \Pi(\mu_0, \nu_0) } \Big\{ \E_{\pi_0}[ \mbp \cdot \mbx ] - \E_{\pi}[ \mbp \cdot \mbx  ] \Big\}.
\end{equation}
The data $\pi_0$ on $\eucpp \times X$ is said to satisfy \emph{cyclical monotonicity} if each finite dataset $D = ( \mbp^t, \mbx^t )_{t \leq T} \subseteq \supp( \pi_0 )$ satisfies cyclical monotonicity. The following shows that, not only does cyclical monotonicity preclude money pumps, but it also characterizes additive and quasilinear rationalizability.

\begin{theorem} \label{theorem:pop-rat-quasilinear}
	Let $\pi_0$ be a dataset with finite variance\footnote{That is, $\int || (\bs{p}, \bs{x}) ||^2 \ d\pi_0(\bs{p}, \bs{x}) < \infty$} and with marginals $\mu_0$ and $\nu_0$. The following are equivalent.
	\begin{enumerate}
		\item $\text{TMP}_0 = 0$. 
		\item The data $\pi_0$ satisfies cyclical monotonicity.
		\item The data $\pi_0$ can be additively rationalized by some $U$.
		\item The data $\pi_0$ can be additively rationalized by some standard $U$.
		\item The data $\pi_0$ can be quasilinear rationalized by some $U$.
		\item The data $\pi_0$ can be quasilinear rationalized by some standard $U$.
	\end{enumerate}
\end{theorem}
Theorem \ref{theorem:pop-rat-quasilinear} is an obvious corollary of our next theorem (Theorem \ref{theorem:pop-tmp-equi}) and so we do not provide a proof of Theorem \ref{theorem:pop-rat-quasilinear}.

The main takeaways of Theorem \ref{theorem:pop-rat-quasilinear} are (i) cyclical monotonicity characterizes both quasilinear and additive rationalization and (ii) there are no additional testable implications of requiring the rationalizing utility function $U$ to be standard. This can be contrasted with Theorem \ref{theorem:rat-quasilinear} which reached the stronger conclusion that there are no additional testable implications of requiring the rationalizing utility function to be well-behaved. In particular, it is not hard to show that requiring $U$ in Theorem \ref{theorem:pop-rat-quasilinear} to be strictly increasing or continuous places additional requirements on the data beyond cyclical monotonicity.

In the finite dataset world of Section \ref{sec:tmp} a dataset was defined to satisfy cyclical monotonicity when $\text{TMP} = 0$. This is not the case in the infinite data world. Here, a dataset has $\text{TMP}_0 = 0$ when there is no coupling $\pi \in \Pi(\mu_0, \nu_0)$ which can extract money from the consumer (i.e.\ $\text{MP}_0(\pi) \leq 0$ for all $\pi \in \Pi(\mu_0, \nu_0)$). On the other hand, the dataset satisfies cyclical monotonicity when there is no finite sequence $(\bs{p}^1, \bs{x}^1), (\bs{p}^2, \bs{x}^2), \ldots, ( \bs{p}^T, \bs{x}^T )$ in $\supp(\pi_0)$ so that: $\sum_{t=1}^T \bs{p}^t \cdot ( \bs{x}^{t} - \bs{x}^{t-1} ) > 0$ where $\bs{x}^0 = \bs{x}^{T}$. While it seems intuitively clear that $\text{TMP}_0 = 0$ implies cyclical monotonicity, the converse is by no means obvious. Thus, the equivalence of items 1 and 2 in Theorem \ref{theorem:pop-rat-quasilinear} is non-trivial. 

Our Theorem \ref{theorem:pop-rat-quasilinear} is similar in spirit to Theorem 1 in \citet{reny15}. While our theorem characterizes additive and quasilinear utility maximization given a distribution over $(\bs{p}, \bs{x})$ Reny characterizes utility maximization subject to a budget constraint given a (possibly infinite) set of price-consumption pairs $(\bs{p}, \bs{x})$. The starting point of Reny's result is a non-empty collection $D \subseteq \eucp \times \eucp$ where each $( \bs{p}, \bs{x} ) \in D$ is interpreted as the observation that a consumer bought the bundle $\bs{x}$ when prices where $\bs{p}$. A utility function $U$ rationalizes $D$ if, for each $(\bs{p}, \bs{x}) \in D$ the utility obtained from the bundle purchased exceeds the utility of any other cheaper bundle. That is, $U$ rationalizes $D$ provided, for all $(\bs{p}, \bs{x}) \in D$ we have $U( \bs{x} ) \geq (>) \ U( \bbs{x} )$ whenever $\bs{p} \cdot \bs{x} \geq (>) \ \bs{p} \cdot \bbs{x}$. Reny's Theorem 1 shows that a dataset can be rationalized if and only if the data satisfies the generalized axiom of revealed preferences (GARP) which is a basic consistency requirement on the revealed preference relations of the consumer. Further, Reny shows that if $D$ can be rationalized then it can also be rationalized by a utility function $U$ which is quasiconcave and increasing in the sense that $\bs{x} \geq \bs{x}'$ implies $U(\bs{x}) \geq U(\bs{x}')$ and $\bs{x} \gg \bs{x}'$ implies $U(\bs{x}) > U(\bs{x}')$.

There are several major differences between Reny's result and our Theorem \ref{theorem:pop-rat-quasilinear}. Most obviously, we are engaged in testing different models. The model investigated by Reny demands less of the data than the models we investigate. Second, our datasets are different: Reny's is a subset of $\eucp \times \eucp$ and ours is a \emph{probability measure} on $\eucpp \times X$. This additional structure is important for our purposes because (i) the concept of an additive rationalization requires that we add up utility across observations and (ii) the concept of the total money pump requires that we add up extracted surplus. 

Another distinction between our results concerns which properties of the utility function are untestable. Reny shows that whenever a dataset can be rationalized then it can be rationalized by a quasiconcave and increasing utility function. In contrast, our result shows that when testing the quasilinear or additive model it happens that requiring utility to be weakly increasing and concave brings no new testable restrictions.

The concept of cyclical monotonicity and quasilinear rationalization only depend on the \emph{support} of the probability measure $\pi_0$. Based on this observation we can easily prove a version of Reny's Theorem for the model of quasilinear utility maximization. 

Let us start with a Reny-style dataset $D \subseteq \eucpp \times X$ where $(\bs{p}, \bs{x}) \in D$ means that the consumer bought $\bs{x}$ when prices were $\bs{p}$. We say that $D$ satisfies cyclical monotonicity if $D'$ satisfies cyclical monotonicity for each non-empty finite subset $D' \subseteq D$. We say that $D$ is quasilinear rationalized by utility function $U$ if $U( \bs{x} ) - \bs{p} \cdot \bs{x} \geq U( \tbs{x} ) - \bs{p} \cdot \tbs{x}$ for all $(\bs{p}, \bs{x}) \in D$ and all $\tbs{x} \in X$ and $U(\bs{x})$ is finite for some $(\bs{p}, \bs{x}) \in D$. Here is a version of Reny's Theorem for the model of quasilinear utility maximization. 
\begin{proposition} \label{prop:reny}
	Let $D \subseteq \eucpp \times X$ be a non-empty dataset. The following are equivalent.
	\begin{enumerate}
		\item $D$ satisfies cyclical monotonicity.
		\item $D$ can be quasilinear rationalized by some $U$.
		\item $D$ can be quasilinear rationalized by some standard $U$.
	\end{enumerate}
\end{proposition}
\noindent This proposition is a simple corollary of Theorem \ref{theorem:pop-rat-quasilinear}. 

\subsection{$\text{TMP}_0$ as a measure of rationality} \label{sec:tmp-pop}

Theorem \ref{theorem:tmp} shows that $\text{TMP} = A = Q$. A version of this result holds for $\text{TMP}_0$. 

The quasilinear utility inefficiency $Q_0$ is the average difference between the maximum amount of quasilinear utility achievable and the quasilinear utility actually achieved\footnote{Note that the function $V(\mbp) = \Big[ \sup_{ \tmbx \in X } U( \tmbx ) - \mbp \cdot \tmbx \Big]$ is upper semi-continuous (as it is the pointwise supremum of continuous functions) and thus there are no measurability issues in our definition of $Q_0$.} 
\begin{equation*} 
	Q_0 = \inf_{U } \left( \E_{\pi_0} \left[ \sup_{ \tmbx \in X } \Big\{ U(\tmbx) - \mbp \cdot \tmbx \Big\} - \Big[ U( \mbx ) -  \mbp \cdot \mbx \Big] \right] \right)
\end{equation*}
where the infimum is over all measurable utility functions $U$ for which  $\E_{\pi_0}[ U(\bs{x}) ]$ is finite. 

For a measurable utility function $U$ let $e_U$ denote the smallest amount of money which must be spent to acquire average utility $\E_{\pi_0}[ U(\bs{x})]$. That is,
\begin{equation*}
	e_U = \inf_{ \pi \in \Pi(\mu_0) } \Big\{ \E_{ \pi }[ \mbp \cdot \mbx ]: \E_{\pi}[ U(\mbx) ] \geq \E_{\pi_0}[ U(\bs{x}) ] \Big\}
\end{equation*}
The additive cost inefficiency $A_0$ is the amount of money which can be saved while still attaining the same average utility
\begin{equation*} \label{eq:pop-A}
	A_0 = \inf_{U} \Big\{ \E_{\pi_0} \big[ \mbp \cdot \mbx \big] - e_U \Big\}
\end{equation*}
where the infimum is over all measurable utility functions $U$ for which  $\E_{\pi_0}[ U(\bs{x}) ]$ is finite.  Here is the infinite data analogue of Theorem \ref{theorem:tmp}. 
\begin{theorem} \label{theorem:pop-tmp-equi}
	For any dataset $\pi_0$ with finite variance and with marginals $\mu_0$ and $\nu_0$
	\begin{equation*} \label{eq:equivalence-pop}
		\text{TMP}_0 = A_0 = Q_0
	\end{equation*}
	Moreover, 
	\begin{enumerate}
		\item there exists a $\pi^* \in \Pi( \mu_0, \nu_0 )$ which achieves the supremum in the definition of $\text{TMP}_0$.
		\item there exists a standard $U$ which achieves the infima in the definitions of $A_0$ and $Q_0$.
		\item a coupling $\pi \in \Pi(\mu_0, \nu_0)$ achieves the supremum in the definition of $\text{TMP}_0$ if and only if $\pi$ satisfies cyclical monotonicity.
	\end{enumerate}
	Suppose further that $\nu_0$ has a positive density on the non-empty connected interior of its support and suppose standard utility functions $U$ and $U'$ both achieve the infimum in the definition of $Q_0$. Then, there exists a number $b \in \mathbb{R}$ so that
	\begin{equation*} 
		U(\bs{x}) = U'( \bs{x} ) + b, \qquad\qquad \text{ for all } \bs{x} \in \supp( \nu_0 )^o
	\end{equation*}
\end{theorem}

Theorem \ref{theorem:pop-tmp-equi} assets that (i) the total money pump, the additive cost inefficiency, and the quasilinear utility inefficiency are all the same and (ii) not only is the infimum in the definition of $Q_0$ achieved by some standard utility function $U$ but in fact this utility function is unique up to translation provided the interior of the support of the probability measure $\nu_0$ is connected. 

The proof of Theorem \ref{theorem:pop-tmp-equi} relies on results from the mathematics of optimal transport. Specifically, we use Kantorovich Duality (see Lemma \ref{lemma:kantorovich} in the appendix) to prove that $\text{TMP}_0 = Q_0$ and we use results from \citet{barrio19} to prove that a standard $U$ which achieves the infimum in the definition of $Q_0$ is unique up to translation. 

\subsection{Asymptotics}

Here we connect the finite dataset measure $\text{TMP}$ with its infinite data analogue $\text{TMP}_0$. Specifically, we show that as the number of observations $T$ grows the finite measure $\text{TMP} / T$ approaches $\text{TMP}_0$. We also show that in a certain sense we are able to asymptotically recover the utility function which achieves the infimum in the definition of $Q_0$.

We consider a dataset which is a stochastic infinite sequence of prices and consumption pairs $D_0 = ( \mbp^t, \mbx^t )_{t \in \mathbb{Z}}$ where $(\bs{p}^t, \bs{x}^t)$ follows distribution $\pi_0$ for each $t$. While the process $D_0$ represents the infinite past and future of prices and consumption we might imagine the researcher only has access to a finite sub-sample $D_T = ( \bs{p}^t, \bs{x}^t )_{t \in \{1,2,\ldots, T\}}$. Using $D_T$ the researcher can estimate $\text{TMP}_0$ (the long run money pump for $\pi_0$) using the total money pump $\text{TMP}_T$ for dataset $D_T$. It turns out that the estimate $\text{TMP}_T/T$ converges to the long run object $\text{TMP}_0$.

\begin{theorem} \label{theorem:tmp-conv}
	Suppose $D_0 = (  \mbp^t, \bs{x}^t )_{t \in \mathbb{Z}}$ is an ergodic stationary stochastic process\footnote{Saying that the stochastic process $D_0$ is stationary means that, for each $T \in \mathbb{N}$, the joint distribution of $( \bs{p}^{1+k}, \bs{x}^{1+k} ), ( \bs{p}^{2+k}, \bs{x}^{2 + k}), \ldots, ( \bs{p}^{T+k}, \bs{x}^{T+k} )$ is the same for all $k$. Saying that $D_0$ is ergodic intuitively means that parts of the data which are far apart (i.e.\ their indices $t$ are far apart) are ``almost'' independent. For a precise definition see Chapter 20 of \citet{klenke-probabilitytheory}. Ergodicity is important because it allows for a version of the strong law of large numbers to hold even when the observations are not independent.} where $( \bs{p}^t, \bs{x}^t)$ has finite variance and let $D_T = ( \bs{p}^t, \bs{x}^t )_{t \in \{1,2,\ldots, T\}}$ for each $T$. Let $\pi_0$ denote the distribution of $(\bs{p}^t, \bs{x}^t)$ and let $\mu_0$ and $\nu_0$ denote the distributions of $\bs{p}^t$ and $\bs{x}^t$, respectively. Let $\text{TMP}_T$ be the total money pump for $D_T$ and let $\text{TMP}_0$ be the total money pump for $\pi_0$. Then, 
	\begin{equation*}
		\dfrac{ \text{TMP}_T }{ T } \to \text{TMP}_0, \qquad\qquad \text{almost surely.}
	\end{equation*}
	Suppose further that $\nu_0$ has a positive density on the non-empty connected interior of its support and let $\bbs{x}$ be some element in $\supp(\nu_0)^0$. Suppose $U$ is a standard utility function which achieves the infimum in the definition of $Q_0$ and, for each $T$, suppose $U_T$ is a standard utility function which achieves the infimum in the definition of $Q$ for dataset $D_T$. Further, suppose that $U(\bbs{x}) = U_T(\bbs{x}) = 0$ for all $T$. Then, almost surely,
	\begin{equation*} 
		U_T(\bs{x}) \to U(\bs{x}), \qquad\qquad \text{ for all } \bs{x} \in \supp( \nu_0 )^o
	\end{equation*}
\end{theorem}
Theorem \ref{theorem:tmp-conv} requires that the stochastic process $D_0$ is ergodic stationary. Naturally, if the observations are IID then this is satisfied. However, we need not limit ourselves to the IID case and in Section \ref{sec:price-misperception} we provide some results which utilize sequences of prices which are ergodic stationary but not IID.

Theorem \ref{theorem:tmp-conv} has two takeaways. First, as noted, the average money pump $\text{TMP}_T / T$ tends to its population level analogue $\text{TMP}_0$ as the number of observations grow to infinity. Second, the standard utility function $U$ which achieves the infimum in the definition of $Q_0$ is asymptotically recovered. A corollary of this is that the ``utility differences'' objects in \eqref{eq:u-diffs} can also be recovered. Specifically, if we let $U_{\ua}^T(\bs{x}^{\diamond}, \bs{x}^{\square}; \mathcal{U}_{st})$ denote the object in \eqref{eq:utility-diffs} for dataset $D_T$ (where $\mathcal{U}_{st}$ is the collection of standard utility functions which achieve the infimum in the definition of $Q$ for dataset $D_T$) and we assume that $\bs{x}^{\diamond}, \bs{x}^{\square} \in \supp(\nu_0)^0$ then, under the assumptions of Theorem \ref{theorem:tmp-conv}, we have that almost surely,
\begin{equation*}
	U_{\ua}^T( \bs{x}^{\diamond}, \bs{x}^{\square}; \mathcal{U}_{st} ) \to U(\bs{x}^{\diamond}) - U( \bs{x}^{\square} )  
\end{equation*}
where $U$ is any standard utility function which achieves the infimum in the definition of $Q_0$. Recall that Theorem \ref{theorem:welfare-direct} shows that $U_{\ua}^T( \bs{x}^{\diamond}, \bs{x}^{\square}; \mathcal{U}_{st} )$ can be calculated by solving a simple linear programming problem. 

The proof of Theorem \ref{theorem:tmp-conv} works by applying the Ergodic Theorem to show that $\pi_T$ converges weakly to $\pi_0$ (where $\pi_T$ is the empirical measure of $D_T$) and manipulating convergence properties of the $2$-Wasserstein metric. The proof that $U_T$ converges relies on optimal transport results from \citet{barrio19}.

An obvious corollary of Theorem \ref{theorem:tmp}, Theorem \ref{theorem:pop-tmp-equi}, and Theorem \ref{theorem:tmp-conv} is that, under the assumptions of Theorem \ref{theorem:tmp-conv} we have, almost surely, $A_T / T \rightarrow A_0$ and  $Q_T / T \rightarrow Q_0$ (where $A_T$ and $Q_T$ are the numbers $A$ and $Q$ for dataset $D_T$). 

The results on the identification and recovery of the utility function in Theorem \ref{theorem:pop-tmp-equi} and \ref{theorem:tmp-conv} require that the support of $\nu_0$ has a non-empty interior. This rules out the important discrete choice case where $X$ is finite. The following covers this omission. 

\begin{proposition} \label{prop:unique-U-finite}
	Suppose $\pi_0$ is a distribution on $\eucpp \times X$ with finite variance and with marginals $\mu_0$ and $\nu_0$. Suppose $\mu_0$ has connected support, $\nu_0$ has finite support, and suppose $U$ and $U'$ are standard utility functions which achieve the infimum in the definition of $Q_0$. Then, there exists a number $b \in \real$ so that
	\begin{equation} \label{eq:U-id}
		U(\bs{x}) = U'(\bs{x}) + b, \qquad \qquad \text{for all } \bs{x} \in \supp(\nu_0)
	\end{equation}
	Suppose further that $D_0 = ( \bs{p}^t, \bs{x}^t )_{t \in \ints}$ is an ergodic stationary stochastic process where $( \bs{p}^t, \bs{x}^t )$ has distribution $\pi_0$. For each $T$, suppose $U_T: X \to \ereal$ is a standard utility function which achieves the infimum in the definition of $Q$ for dataset $D_T = (\bs{p}^t, \bs{x}^t)_{t \in \{1,2,\ldots, T\}}$ and suppose there is some $\bbs{x} \in \supp(\nu_0)$ so that $U(\bbs{x}) = U_T(\bbs{x}) = 0$ for all $T$. Then, almost surely,
	\begin{equation} \label{eq:U-conv2}
		U_T(\bs{x}) \to U( \bs{x} ), \qquad\qquad \text{for all } \bs{x} \in \supp(\nu_0)
	\end{equation}
\end{proposition}
The proposition shows that, if $\nu_0$ has finite support (which holds automatically when $X$ is finite) then the standard utility function $U$ which achieves the infimum in the definition of $Q_0$ is identified up to translation by the data $\pi_0$ and can be asymptotically recovered using finite data $D_T = (\bs{p}^t, \bs{x}^t)_{t \in \{1,2,\ldots, T\}}$.

\section{A model of price misperception} \label{sec:price-misperception}

We consider a consumer who maximizes a quasilinear utility function $U_0$ using incorrect mis-perceived prices. So, in each period $t$ the consumer faces price vector $\bs{p}^t \in \eucpp$ but mis-perceives the price vector as $\tbs{p}^t \in \eucpp$. The consumer then selects a bundle $\bs{x}^t$ so as to maximize quasilinear utility given the perceived price vector $\tbs{p}^t$:
\begin{equation} \label{eq:price-misperception}
	U_0( \bs{x}^t ) - \tbs{p}^t \cdot \bs{x}^t = \max_{ \bs{x} \in X } \Big\{ U_0(\bs{x}) - \tbs{p}^t \cdot \bs{x} \Big\}
\end{equation}
A dataset for this type of consumer is a stochastic process $( \tbs{p}^t, \bs{p}^t, \bs{x}^t )_{t \in \ints}$ where $\bs{p}^t$ is the true price vector, $\tbs{p}^t$ is the perceived price vector, and $\bs{x}^t$ is the consumption bundle which was purchased so as to satisfy \eqref{eq:price-misperception}.\footnote{Similar models of price misperception appear in \citet{chetty-looney-kroft09}, \citet{gabaix14}, and \citet{quah-tserenjigmid25}.} Naturally, the researcher only observes real prices and consumption: $(\bs{p}^t, \bs{x}^t)$.

We state the assumptions in our model more precisely.
\begin{assumption} \label{assumption:price-misperception}
	$( \tbs{p}^t, \bs{p}^t, \bs{x}^t )_{t \in \ints}$ is a stationary ergodic process where $( \tbs{p}^t, \bs{p}^t, \bs{x}^t )$ has finite variance. The consumer has a standard utility function $U_0$ so that, in each period $t$, \eqref{eq:price-misperception} holds. We also assume that $\E_{\pi_0}[ U_0(\bs{x}) ]$ is finite.
\end{assumption}

We connect our model of price misperception to quasilinear utility inefficiency $Q_0$.
\begin{proposition} \label{prop:price-misperception}
	Suppose $( \tbs{p}^t, \bs{p}^t, \bs{x}^t )_{t \in \ints}$ and $U_0$ satisfy Assumption \ref{assumption:price-misperception}. Let $\pi_0$ denote the distribution of $( \bs{p}^t, \bs{x}^t )$ and let $\ti{\mu}_0$, $\mu_0$, and $\nu_0$ denote the distributions of $\tbs{p}^t$, $\bs{p}^t$, and $\bs{x}^t$, respectively. Assume $\tbs{p}^t$ and $\bs{p}^t$ follow the same distribution (i.e.\ $\ti{\mu}_0 = \mu_0$). Then, $U_0$ achieves the infimum in the definition of $Q_0$.
\end{proposition}
The main assumptions in Proposition \ref{prop:price-misperception} is that (i) the consumer best-responds to the perceived prices $\tbs{p}^t$ in the sense of \eqref{eq:price-misperception} and (ii) the perceived prices $\tbs{p}^t$ have the same distribution as the real prices $\bs{p}^t$. 

Let $\bar{Q}(U)$ be the quasilinear utility inefficiency for utility function $U$:
\begin{equation*}
	\bar{Q}(U) = \E_{\pi_0} \bigg[ \sup_{\tbs{x} \in X} \Big\{ U(\tbs{x}) - \bs{p} \cdot \tbs{x} \Big\} - \Big[ U(\bs{x}) - \bs{p} \cdot \bs{x} \Big] \bigg]
\end{equation*}

Proposition \ref{prop:price-misperception} shows that, under certain assumptions, the true utility function $U_0$ achieves the infimum in the definition of $Q_0$ (i.e.\ $Q_0 = \bar{Q}(U_0)$). This is significant because Theorem \ref{theorem:tmp}, Theorem \ref{theorem:tmp-conv}, and Proposition \ref{prop:unique-U-finite} now show that (i) the solution to the linear program in \eqref{eq:lp} converges to the true quasilinear utility inefficiency $\bar{Q}(U_0)$ and (ii) for any two bundles $\bs{x}^{\di}$ and $\bs{x}^{\sq} \in \supp(\nu_0)^o$ the solution to the linear program in \eqref{eq:lp-ineqs-direct} converges to the true utility difference $U_0(\bs{x}^{\di}) - U_0(\bs{x}^{\sq})$ as $T$ becomes large. In other words, under the assumptions of Proposition \ref{prop:price-misperception} we are able to recover, by solving linear programming problems, the true amount of quasilinear utility wasted $\bar{Q}(U_0)$ and the true utility function $U_0$ (up to translation).

Proposition \ref{prop:price-misperception} requires that real prices and perceived prices have the same distribution. We shall next provide several examples where this assumption holds. First, we introduce an assumption on the real prices $\{ \bs{p}^t \}_{t \in \ints}$.

\begin{assumption} \label{assumption:ar1}
	The prices $\{ \bs{p}^t \}_{t \in \ints}$ follow an AR(1) process:
	\begin{equation} \label{eq:ar1}
		\bs{p}^{t} = (1-\alpha) \bbs{p} + \alpha \bs{p}^{t-1} + \bs{\varepsilon}^t 
	\end{equation}
	where $\alpha \in [0,1)$, $\bbs{p}$ is a constant vector, and $\{ \bs{\varepsilon}^t \}_{t \in \mathbb{Z}}$ are IID mean-zero shocks where $\bs{\varepsilon}^t \indep \bs{p}^{t-k}$ for all $k > 0$.
\end{assumption}
Equation \eqref{eq:ar1} says that current prices $\bs{p}^t$ are a convex combination of the previous period's prices $\bs{p}^{t-1}$ and a constant vector $\bbs{p}$ plus an IID shock $\bs{\varepsilon}^t$. It is easy to see that $\E[ \bs{p}^t ] = \bbs{p}$ under \eqref{eq:ar1}. Assumption \ref{assumption:ar1} will allow us to provide interpretable expressions for $\bar{Q}(U_0)$ in the coming examples.

\subsubsection{Lagged prices}

Suppose the consumer is not able to process new price information immediately. Instead, the consumer is only aware of the price vector lagged by some number $k \in \{0,1,2,\ldots\}$. In other words, $\tmbp^t = \mbp^{t-k}$ for each $t$. This clearly implies $\tbs{p}^t$ and $\bs{p}^t$ have the same distribution provided $D_0 = ( \tbs{p}^t, \bs{p}^t, \bs{x}^t )_{t \in \ints}$ is stationary. We write $\bs{p}^t \sim \tbs{p}^{t}$ to indicate that these random vectors have the same distribution and we write $(\bs{x}^t \indep \bs{p}^{t-s}) \ | \ \bs{p}^{t-k}$ to indicate that $\bs{x}^t$ and $\bs{p}^{t-s}$ are independent conditional on $\bs{p}^{t-k}$.
\begin{proposition} \label{prop:lagged-prices}
	Suppose $( \tbs{p}^t, \bs{p}^t, \bs{x}^t )_{t \in \ints}$ and $U_0$ satisfy Assumption \ref{assumption:price-misperception} and suppose that there is a $k \in \{0,1,2,\ldots\}$ so that $\tbs{p}^t = \bs{p}^{t-k}$ for all $t$. Then, $\bs{p}^t \sim \tbs{p}^t$. Suppose additionally that prices follow the AR(1) process of Assumption \ref{assumption:ar1} and suppose $U_0$ is strictly concave. Let $S_0$ be defined by
	\begin{equation} \label{eq:S}
		S_0 = \max_{ \pi \in \Pi(\mu_0, \nu_0) } \Big\{ \E_{\pi}[ \bs{p} ] \cdot \E_{ \pi }[ \bs{x} ] - \E_{\pi}[ \bs{p} \cdot \bs{x} ] \Big\} 
	\end{equation}
	where $\mu_0$ and $\nu_0$ are the distributions of $\bs{p}^t$ and $\bs{x}^t$, respectively. Then,\footnote{The formula \eqref{eq:lagged-prices} uses the convention that $0^0 = 1$.}
	\begin{equation} \label{eq:lagged-prices}
		\bar{Q}(U_0) = (1- \alpha^k) S_0
	\end{equation}
\end{proposition}
Proposition \ref{prop:lagged-prices} makes two claims. First, it shows that $\bs{p}^t \sim \tbs{p}^t$. Thus, under the assumptions of this proposition the conclusion of Proposition \ref{prop:price-misperception} holds and we are able to obtain meaningful estimates of $\bar{Q}(U_0)$ and $U_0$. Second, equation \eqref{eq:lagged-prices} provides a formula for $\bar{Q}(U_0)$ under the lagged perceived prices model and under the additional assumption that prices follow an AR(1) process. 

From \eqref{eq:S} we see that an upper bound for $\bar{Q}(U_0)$ is $S_0$.\footnote{Recall that $\bar{Q}(U_0) = Q_0 = \text{TMP}_0$ and so \eqref{eq:lagged-prices} also provides an upper bound for $\text{TMP}_0$.} The number $S_0$ can be interpreted as a measure of the total possible surplus which can be achieved by best-responding to prices. To elaborate, suppose that $\tbs{x}^t$ is the demand of a naive consumer who acquires $\nu_0$ but does not respond to prices at all and suppose that $\bbs{x}^t$ is the demand of a sophisticated consumer who acquires $\nu_0$ in the cheapest way possible. That is, (i) $\tbs{x}^t \sim \bs{x}^t$, (ii) $\tbs{x}^t \indep \bs{p}^t$, (iii) $\bbs{x}^t \sim \bs{x}^t$, and (iv) $\E[\bs{p}^t \cdot \bbs{x}^t] = \min_{\pi \in \Pi(\mu_0, \nu_0)} \E_{\pi} [ \bs{p} \cdot \bs{x} ]$. It is easy to see that $S_0 = \E[ \bs{p}^t \cdot ( \tbs{x}^t - \bbs{x}^t ) ]$ and so $S_0$ is the difference in expenditure between how much the naive consumer spends and how much the sophisticated consumer spends. That is, $S_0$ is the surplus which the consumer can acquire through sophistication.  

While $S_0$ is the total surplus available we see that $\bar{Q}(U_0) < S_0$ when $\alpha \neq 0$. This is because when $\alpha \neq 0$ then the prices in the previous periods are correlated with the prices in the current period. Thus, even though the consumer is best-responding to out-dated prices these prices still carry valuable information about current period prices and so the amount of wasted utility is less than the total surplus $S_0$. When $\alpha = 0$ then prices are IID and so there is no more advantage to best-responding to an antiquated price then there is to ignoring prices all-together. Thus, when $\alpha=0$ we have $\bar{Q}(U_0) = S_0$.

The role of $k$ in \eqref{eq:lagged-prices} is rather intuitive. If $k = 0$ (and so $\bs{p}^t = \tbs{p}^t$) then no utility is wasted: $\bar{Q}(U_0) = 0$. On the other hand, as $k$ increases (and so $\tbs{p}^t$ is more-and-more out of date) the amount of utility wasted increases. In the limit the wasted utility becomes equal to the total surplus $S_0$. 

\subsubsection{Inattention}

Suppose the consumer can fail to pay attention to how prices change. If the consumer pays attention then she updates prices so that $\tbs{p}^t = \bs{p}^t$. If she fails to pay attention then her prices remain un-updated $\tbs{p}^t = \tbs{p}^{t-1}$. More formally, let $\{ a_t \}_{t \in \ints}$ denote binary random variables where $a_t = 1$ means that the consumer updated prices and $a_t = 0$ means that the consumer failed to update prices. We then have
\begin{equation} \label{eq:price-updates}
	\tbs{p}^t = \begin{cases}
		\tbs{p}^{t-1}, & \qquad \text{if } a_t = 0 \\
		\bs{p}^t, & \qquad \text{if } a_t = 1
	\end{cases}
\end{equation}
The following provides a result on this inattention model.
\begin{proposition} \label{prop:inattention}
	Suppose $( \tbs{p}^t, \bs{p}^t, \bs{x}^t )_{t \in \ints}$ and $U_0$ satisfy Assumption \ref{assumption:price-misperception} and that mental prices follow the inattentive model of \eqref{eq:price-updates} where $\{ a_t \}_{t \in \ints}$ is an IID sequence of binary random variables which are independent of prices $\{ \bs{p}^t \}_{t \in \mathbb{Z}}$ and which satisfy $P( a_t = 0 ) = \beta \in [0,1)$. Then $\bs{p}^t \sim \tbs{p}^t$. Suppose additionally that prices follow the AR(1) process of Assumption \ref{assumption:ar1} and that $U_0$ is strictly concave. Letting $S_0$ be defined by \eqref{eq:S} we have
	\begin{equation} \label{eq:tmp-inattention}
		\bar{Q}(U_0) = \beta \left( \dfrac{ 1- \alpha }{ 1 - \alpha \beta } \right) S_0
	\end{equation}
\end{proposition}

The first conclusion of Proposition \ref{prop:inattention} is that $\bs{p}^t \sim \tbs{p}^t$ under our inattention model. This is rather obvious considering $\tbs{p}^t$ is a mixture of $\bs{p}^t, \bs{p}^{t-1}, \bs{p}^{t-2}, \ldots$. The second result is a formula for $\bar{Q}(U_0)$ which holds under the assumption that prices follow an AR(1) process. The number $S_0$ in \eqref{eq:tmp-inattention} has the same interpretation as in \eqref{eq:lagged-prices}. From \eqref{eq:tmp-inattention} it is easy to see that $S_0$ is again an upper bound on $\bar{Q}(U_0)$ (and thus $S_0$ is also an upper bound on $\text{TMP}_0$).

The parameter $\beta$ is the probability with which the consumer fails to update the mental prices. Naturally, when $\beta = 0$ we see that $\bar{Q}(U_0) = 0$ and as $\beta$ increases so too does $\bar{Q}(U_0)$. As $\beta$ approaches 1 the wasted utility $\bar{Q}(U_0)$ converges to the total surplus $S_0$. 

\subsubsection{Forecasting with errors}

As in our lagged price model, suppose that the consumer does not process all the new price information instantly. The consumer observes some out of date price $\bs{p}^{t-k}$ and instead of taking $\bs{p}^{t-k}$ as the perceived price the consumer creates a mental forecast of $\bs{p}^t$ by taking a random draw from the distribution of $\bs{p}^t$ conditional on $\bs{p}^{t-k}$. The following proposition provides a formula for $\bar{Q}(U_0)$ under the assumptions of this model.  
\begin{proposition} \label{prop:false-forecast}
	Suppose $( \tbs{p}^t, \bs{p}^t, \bs{x}^t )_{t \in \ints}$ and $U_0$ satisfy Assumption \ref{assumption:price-misperception} and suppose that $\tbs{p}^t$ and $\bs{p}^t$ are IID conditional on $\bs{p}^{t-k}$ where $k \in \mathbb{N}$. Then, $\bs{p}^t \sim \tbs{p}^t$. Suppose further that prices follow the AR(1) process of Assumption \ref{assumption:ar1} and that $U_0$ is strictly concave. Letting $S_0$ be defined by \eqref{eq:S} we have
	\begin{equation} \label{eq:false-forecast}
		\bar{Q}(U_0) = (1- \alpha^k) S_0 + \alpha^k \bs{E}[ ( \bs{p}^{t-k} - \tbs{p}^t ) \cdot \bs{x}^t ] 
	\end{equation}
	Also, $\bar{Q}(U_0)$ is increasing in $k$ and $\bar{Q}(U_0) \to S_0$ as $k \to \infty$.
\end{proposition}
Comparing the expressions for $\bar{Q}(U_0)$ in \eqref{eq:lagged-prices} and \eqref{eq:false-forecast} we see that the wasted utility $\bar{Q}(U_0)$ in the lagged prices model always lies below the wasted utility in the forecasting with errors model (assuming the same $U_0$ and the same values of $\alpha$ and $k$ in both models). 

This ``forecasting with errors'' model of consumer behavior is similar to the S(1) decision making process in \citet{rubinstein-spiegler08}.\footnote{Similar decision making procedures appear in \citet{obsorne-rubinstein98}, \citet{spiegler06}, and \citet{spiegler06-2}.} In Rubinstein and Spiegler investors decide whether or not to purchase an asset based on their forecasts of the asset's future price. Instead of performing some sort of Bayesian analysis for the forecast the investors simply pick a random draw from the distribution of the future price conditional on the current prices. In our model, the consumer is too busy to personally inspect prices in period $t$ and so instead the consumer makes a forecast by taking a random draw from the distribution of current prices conditional on lagged prices. 

\section{Conclusion} \label{sec:conclusion}

This paper has introduced the total money pump (TMP) and has shown that the TMP is always equal to a tight lower bound on wasted quasilinear utility which is also equal to a tight lower bound on the cost inefficiency displayed by an additive utility maximizer. This connection between the TMP and measures of bounded rationality has been shown to persist beyond the typical finite data relm of revealed preference theory and in particular the relationship holds when a dataset is an arbitrary joint distribution over prices and consumption. We have seen that as the number of observations $(\mbp^t, \mbx^t)$ increases we are able to asymptotically recover a long run money pump and the true utility function of the consumer provided the consumer acts as a quasilinear utility maximizer who misperceives prices in a particular structure way.

\bibliography{rpbib}
\bibliographystyle{myplainnat}

\vspace{40pt}

\appendix

\noindent {\Huge Appendix}

\section{Characterizations of the argmin in $Q$} \label{sec:argmin-char}

In the welfare section (Section \ref{sec:welfare}) we mentioned that there are alternative ways to characterize the collection of utility functions which achieve the infimum in the definition of $Q$. We present and interpret these alternative characterizations here.

We interpret a permutation $\sigma$ of $\{1,2,\ldots, T\}$ as a way of correcting the choices of the consumer. Specifically, suppose that a consumer made choices $D = ( \bs{p}^t, \bs{x}^t )_{t \leq T}$ which do not satisfy cyclical monotonicity. We might use a permutation $\sigma$ to suggest a corrected pattern of behavior $D_{\sigma} = ( \bs{p}^t, \bs{x}^{\sigma(t)} )_{t \leq T}$ for the consumer. Note that the corrected data $D_{\sigma}$ involves purchasing the exact same distribution of consumption as in the original data but allows for potential cost savings. We might take $\mathcal{U}$ in \eqref{eq:utility-diffs} to be all utility functions which quasilinear rationalize the corrected dataset $D_{\sigma} = ( \bs{p}^t, \bs{x}^{\sigma(t)} )$ for some permutation $\sigma$. We might additionally require that such a $\sigma$ achieves the maximum in the definition of $\text{TMP}$. It turns out that these approaches are equivalent to considering the utility functions which achieve the infimum in the definition of $Q$.
\begin{proposition} \label{prop:argmin-char}
	Let $D = (\mbp^t, \mbx^t)_{t \leq T}$ be a dataset and let $U: X \to \ereal$ be a utility function. The following are equivalent.
	\begin{enumerate}
		\item $U$ achieves the infimum in the definition of $Q$.
		\item $U$ quasilinear rationalizes $D_{\sigma} = ( \bs{p}^t, \bs{x}^{\sigma(t)} )_{t \leq T}$ for some permutation $\sigma$.
		\item $U$ quasilinear rationalizes $D_{\sigma} = ( \bs{p}^t, \bs{x}^{\sigma(t)} )_{t \leq T}$ for every $\sigma$ which achieves the maximum in the definition of $\text{TMP}$.
	\end{enumerate}
\end{proposition}
\begin{proof}
	This is an immediate corollary of Lemma \ref{lemma:Q-A-T2} and Theorem \ref{theorem:tmp}.
\end{proof}

A similar result holds for the infinite data case.

\begin{proposition} \label{prop:argmin-char-pop}
	Let $\pi_0$ be a distribution on $\eucpp \times X$ with finite second moments and with marginals $\mu_0$ and $\nu_0$, respectively. Let $U: X \to \ereal$ be a measurable utility function. The following are equivalent.
	\begin{enumerate}
		\item $U$ achieves the infimum in the definition of $Q_0$.
		\item $U$ quasilinear rationalizes some $\pi \in \Pi(\mu_0, \nu_0)$.
		\item $U$ quasilinear rationalizes every $\pi \in \Pi(\mu_0,\nu_0)$ which achieves the supremum in the definition of $\text{TMP}_0$.
	\end{enumerate}
\end{proposition}
\begin{proof}
	This is an immediate corollary of Lemma \ref{lemma:Q-A-T} and Theorem \ref{theorem:pop-tmp-equi}.
\end{proof}

\section{Proofs}

\subsection{Section \ref{sec:tmp} Proofs}

\noindent \textbf{Theorem \ref{theorem:rat-quasilinear}.}

\begin{proof}[Proof of Theorem \ref{theorem:rat-quasilinear}.]
	Let $\sigma$ be some permutation of $\{1,2,\ldots, T\}$ and note that if $D$ is quasilinear rationalized by $U$ then $\sum_{t=1}^T U(\mbx^t) - \mbp^t \cdot \mbx^t \geq \sum_{t=1}^T U(\mbx^{\sigma(t)}) - \mbp^t \cdot \mbx^{\sigma(t)}$ which, after rearranging, yields $\sum_{t=1}^T \mbp^t \cdot ( \mbx^t - \mbx^{\sigma(t)} ) \leq 0$ and thus item 5 implies item 1. 
	
	Next, suppose $D$ satisfies cyclical monotonicity. Let $U: X \rightarrow \mathbb{R}$ be defined by
	\begin{equation} \label{eq:U-ql-def}
		U(\mbx) = \inf \left( \mbp^{t_K} \cdot (\mbx - \mbx^{t_K} ) + \sum_{k=1}^{K-1} \mbp^{t_k} \cdot ( \mbx^{t_{k+1}} - \mbx^{t_k} ) \right)
	\end{equation}
	where the infimum is taken over all finite sequences $t_1, t_2, \ldots, t_K$. As $D$ satisfies cyclical monotonicity it is easy to see that the infimum in \eqref{eq:U-ql-def} is always attained by some sequence $t_1,t_2, \ldots, t_K$ with at most $T$ elements. As such, $U$ is the pointwise infimum of finitely many well-behaved functions and is thus well-behaved. Let $t \in \{{1,2,\ldots, T}\}$ and let $\ti{t}_1, \ti{t}_2, \ldots, \ti{t}_K$ be the sequence which attains the infimum in \eqref{eq:U-ql-def} for $U(\mbx^{t})$. For any $\mbx \in X$ we have
	\begin{equation*}
		U( \mbx^{t} ) + \mbp^t \cdot ( \mbx - \mbx^t ) = \mbp^t \cdot ( \mbx - \mbx^t ) + \mbp^{ \ti{t}_K} \cdot ( \mbx^t - \mbx^{\ti{t}_K} ) + \sum_{k=1}^{K-1} \mbp^{\ti{t}_k} \cdot ( \mbx^{\ti{t}_{k+1}} - \mbx^{\ti{t}_k} ) \geq U(\mbx)
	\end{equation*}
	where the final inequality follows from the definition of $U$. Rearranging the previous inequality gives $U(\mbx^t) - \mbp^t \cdot \mbx^t \geq U(\mbx) - \mbp^t \cdot \mbx$ and so $D$ is quasilinear rationalized by $U$. Let $(\tmbx^1,\tmbx^2,\ldots, \tmbx^T)$ satisfy $\sum_{t=1}^T \mbp^t \cdot \mbx^t \geq (>) \ \mbp^t \cdot \tmbx^t$. Then, as $D$ is quasilinear rationalized by $U$ we see $\sum_{t=1}^T U(\mbx^t) \geq \sum_{t=1}^T U(\tmbx^t) + \mbp^t \cdot ( \mbx^t - \tmbx^t ) \geq (>) \ \sum_{t=1}^T U(\tmbx^t)$ and so $D$ is additively rationalized by $U$. We have just shown that item 2 implies items 4 and 6. It is easy to see that items 1 and 2 are equivalent and that item 4 implies item 3 and item 6 implies item 5 and so the proof is complete.
\end{proof}

\noindent \textbf{Theorem \ref{theorem:tmp}.}

\noindent For any $U: X \to \ereal$ for which $U(\bs{x}^t)$ is finite for all $t$ let
\begin{equation} \label{eq:Q-bar2}
	\bar{Q}(U) = \sum_{t=1}^T \left[ \sup_{ \tbs{x} \in X } \Big\{ U(\tbs{x}) - \bs{p}^t \cdot \tbs{x} \Big\} \right] - \Big[ U(\bs{x}^t) - \bs{p}^t \cdot \bs{x}^t \Big]
\end{equation}	
and 
\begin{equation} \label{eq:A-bar2}
	\bar{A}(U) = \sum_{t=1}^T \bs{p}^t \cdot \bs{x}^t - e_U
\end{equation}
Note that $Q = \inf_U \bar{Q}(U)$ and $A = \inf_U \bar{A}(U)$ where the infima are over all utility functions for which $U(\bs{x}^t)$ is finite for all $t$. 
\begin{lemma} \label{lemma:Q-A-T2}
	Let $D = (\bs{p}^t, \bs{x})_{t \leq T}$ be a dataset. For any $U: X \to \ereal$ for which $U(\bs{x}^t)$ is finite for all $t$ we have
	\begin{equation} \label{eq:Q-A-T2}
		\bar{Q}(U) \geq \bar{A}(U) \geq \text{TMP}
	\end{equation}
\end{lemma}
\begin{proof}
	For any $U: X \to \ereal$ for which $U(\bs{x}^t)$ is finite for all $t$ and any $( \tbs{x}^1, \tbs{x}^2, \ldots, \tbs{x}^T )$ satisfying $\sum_{t=1}^T U(\tbs{x}^t) = \sum_{t=1}^T U(\bs{x}^t)$ we have
	\begin{equation*}
		Q(U) \geq \sum_{t=1}^T U( \tbs{x}^t ) - \bs{p}^t \cdot \tbs{x}^t - U(\bs{x}^t) - \bs{p}^t \cdot \bs{x}^t \geq \sum_{t=1}^T \bs{p}^t \cdot ( \bs{x}^t - \tbs{x}^t ) 
	\end{equation*}
	and so $Q(U) \geq A(U)$. 
	
	For any $U: X \to \ereal$ for which $U(\bs{x}^t)$ is finite for all $t$ and any permutation $\sigma: \{1,2,\ldots, T \} \to \{1,2,\ldots, T\}$ we have
	\begin{equation*}
		A(U) = \sum_{t=1}^T \bs{p}^t \cdot \bs{x}^t - e_U \geq \sum_{t=1}^T \bs{p}^t \cdot (\bs{x}^t - \bs{x}^{\sigma(t)} )
	\end{equation*}
	and so $A(U) \geq \text{TMP}$. 
\end{proof}

\begin{proof}[Proof of Theorem \ref{theorem:tmp}.]
	From Lemma \ref{lemma:Q-A-T2} we know that $\bar{Q}(U) \geq \bar{A}(U) \geq \text{TMP}$. We will show that (i) $Q \geq \bar{\varepsilon}$, (ii) $\bar{\varepsilon} \geq \text{TMP}$, and (iii) there exists a well-behaved $U$ which satisfies $\bar{Q}(U) = \text{TMP}$ which will complete the proof.
	
	To show that $Q \geq \bar{\varepsilon}$ let $U$ be any utility function for which $U(\bs{x}^t)$ is finite for all $t$ and define $u_t = U(\mbx^t)$ and $\varepsilon_t = \sup_{s} \big(  u_s - \mbp^t \cdot \mbx^s - (u_t - \mbp^t \cdot \mbx^t) \big)$. It is easy to see that these numbers $u_t$ and $\varepsilon_t$ constitute a feasible solution to \eqref{eq:lp} and further that $\sum_{t=1}^T \varepsilon_t$ lies below the quantity $$\sum_{t=1}^T \sup_{\mbx \in X} \left[ U(\mbx) - \mbp^t \cdot \mbx \right] - \left[ U(\mbx^t) - \mbp^t \cdot \mbx^t \right]$$
	and so indeed $Q \geq \bar{\varepsilon}$. 
	
	To see that $\bar{\varepsilon} \geq \text{TMP}$ let $(u_1,u_2,\ldots,u_T)$ and $(\varepsilon_1, \varepsilon_2, \ldots, \varepsilon_T)$ constitute a feasible solution to \eqref{eq:lp} and let $\sigma$ be any permutation of $\{1,2,\ldots, T\}$. Using the constraint inequalities in \eqref{eq:lp} we see
	\begin{equation*}
		\sum_{t=1}^T \mbp^t \cdot ( \mbx^{t} - \mbx^{\sigma(t)} ) = \sum_{t=1}^T \left[ u_{\sigma(t)} - u_t + \mbp^t \cdot ( \mbx^{t} - \mbx^{\sigma(t)} ) \right] \leq \sum_{t=1}^T \varepsilon_t = \bar{\varepsilon}
	\end{equation*}
	and so indeed $\bar{\varepsilon} \geq \text{TMP}$.
	
	Next, let $\sigma$ be a permutation which achieves the supremum in the definition of the $\text{TMP}$. Let $D_{\sigma} = (\mbp^t, \mbx^{\sigma(t)})_{t \leq T}$. We claim that $D_{\sigma}$ satisfies cyclical monotonicity. For a contradiction suppose that this is not the case and thus there exists some permutation $\sigma'$ which ``money pumps'' the dataset $D_{\sigma}$ in the sense that $\sum_{t=1}^T \mbp^{ t } \cdot ( \mbx^{\sigma(t)} - \mbx^{\sigma'( \sigma(t) )} ) > 0$. We have
	\begin{equation*}
		0 < \sum_{t=1}^T \mbp^{ t } \cdot ( \mbx^{\sigma(t)} - \mbx^{\sigma'( \sigma(t) )} ) = \sum_{t=1}^T \mbp^t \cdot ( \mbx^t - \mbx^{\sigma'(\sigma(t))} ) - \sum_{t=1}^T \mbp^t \cdot ( \mbx^t - \mbx^{\sigma(t)} )
	\end{equation*}
	After rearranging we see that $\sum_{t=1}^T \mbp^t \cdot (\mbx^{\sigma'( \sigma(t) )} - \mbx^t) > \sum_{t=1}^T \mbp^t \cdot (  \mbx^{\sigma(t)} - \mbx^t)$ which contradicts the assumption that $\sigma$ achieves the supremum in the definition of the $\text{TMP}$. Having achieved a contradiction we conclude that indeed $D_{\sigma}$ satisfies cyclical monotonicity. 
	
	As $D_{\sigma}$ satisfies cyclical monotonicity we may appeal to Theorem \ref{theorem:rat-quasilinear} to see that there exists a well-behaved $U$ which quasilinear rationalizes $D_{\sigma}$. As $U$ quasilinear rationalizes $D_{\sigma}$ we have
	\begin{IEEEeqnarray*}{rCl}
		\bar{Q}(U) & = & \sum_{t=1}^T [ U(\mbx^{\sigma(t)}) - \mbp^t \cdot \mbx^{\sigma(t)} ] - [ U(\mbx^t) - \mbp^t \cdot \mbx^t ] = \sum_{t=1}^T \bs{p}^t \cdot ( \bs{x}^t - \bs{x}^{\sigma(t)} ) = \text{TMP}
	\end{IEEEeqnarray*}
	which completes the proof.
\end{proof}

\noindent \textbf{Theorem \ref{theorem:welfare-direct}.}

\begin{proof}[Proof of Theorem \ref{theorem:welfare-direct}.]
	We refer to the linear programming problem defined in the statement of this theorem as the LP. Throughout we let $\sigma$ be a permutation of $\{1,2,\ldots, T\}$ which achieves the infimum in the definition of $\text{TMP}$. We extend $\sigma$ by taking $\sigma(\diamond) = \diamond$ and $\sigma( \square ) = \square$. Also, we let $\Delta^{\mathcal{T}}$ denote the collection of vectors in $\mathbb{R}^{T+2}$ with non-negative entries which sum to 1. That is, $(\alpha_{t})_{t \in \mathcal{T}} \in \Delta^{\mathcal{T}}$ satisfies $\alpha_t \geq 0$ for all $t \in \mathcal{T}$ and $\sum_{t \in \mathcal{T}} \alpha_t = 1$. We break the proof into steps. 
	
	\begin{step} \label{step:feasible-solution}
		The LP has a feasible solution: $( (u_t, \varepsilon_t)_{t \in \mathcal{T}}, \bs{p}^{\diamond}, \bs{p}^{\square} )$ with $\bs{p}^{\diamond}, \bs{p}^{\square} \in \eucpp$. 
	\end{step}
	\begin{proof}
		Let $(u^t, \varepsilon^t)_{t \in \{1,2,\ldots, T\} }$ be a feasible solution to the linear programming problem in \eqref{eq:lp}. For each $\star \in \{ \diamond, \square \}$ let $t^{\star}$ be any observation number which achieves the infimum in
		\begin{equation*}
			\inf_{ t \in \{1,2,\ldots, T\} } \ u^t + \bs{p}^t \cdot ( \bs{x}^{\star} - \bs{x}^t ) + \varepsilon^t
		\end{equation*}
		Set $\bs{p}^{\star} = \bs{p}^{t^{\star}}$ and let $u^{\star}$ be defined by
		\begin{equation*}
			u^{\star} = u^{t^{\star}} + \bs{p}^{t^{\star}} \cdot ( \bs{x}^{\star} - \bs{x}^{t^{\star}} ) + \varepsilon^{t^{\star}}
		\end{equation*}
		Setting $\varepsilon^{\diamond} = \varepsilon^{\square} = 0$ we claim we have found a feasible solution. So, we must verify that \eqref{eq:lp-ineqs-direct} holds for all $s,t \in \mathcal{T}$. Evidently these inequalities hold for all $s,t \in \{1,2,\ldots, T\}$. Now, notice that for any $s \in \mathcal{T}$ and $\star \in \{\diamond, \square\}$
		\begin{equation*}
			u^{s} \leq \inf_{ t \in \{1,2,\ldots, T\} } u^t + \bs{p}^t \cdot ( \bs{x}^{s} - \bs{x}^t ) \leq u^{t^{\star}} + \bs{p}^{t^{\star}} \cdot ( \bs{x}^{s} - \bs{x}^{t^{\star}} ) = u^{\star} + \bs{p}^{\star} \cdot ( \bs{x}^{s} - \bs{x}^{\star} )
		\end{equation*}
		The first inequality shows that \eqref{eq:lp-ineqs-direct} holds for $s \in \mathcal{T}$ and $t \in \{1,2,\ldots, T\}$ and the final inequality covers the remaining cases and thus \eqref{eq:lp-ineqs-direct} holds for all $s,t \in \mathcal{T}$.
	\end{proof}
	
	\begin{step} \label{step:U-gen}
		For every feasible solution $( (u^t, \varepsilon^t)_{t \in \mathcal{T}}, \bs{p}^{\diamond}, \bs{p}^{\square} )$ there exists a standard utility function $U: X \to \real$ which achieves the infimum in the definition of $Q$ and which satisfies $U(\bs{x}^{\diamond}) - U(\bs{x}^{\square}) = u^{\diamond} - u^{\square}$. Further, if $\bs{p}^{\di}, \bs{p}^{\sq} \in \eucpp$ then this $U$ can be taken to well-behaved.
	\end{step}
	\begin{proof}
		Let $U(\bs{x}) = \inf_{ t \in \mathcal{T} } \ u^t + \bs{p}^t \cdot (\bs{x} - \bs{x}^t) + \varepsilon^t$. This $U$ is standard and if $\bs{p}^{\di}, \bs{p}^{\sq} \in \eucpp$ then this $U$ is well-behaved. Adding up the inequalities in \eqref{eq:lp-ineqs-direct} (take $s = \sigma(t)$) we see
		\begin{equation*}
			0 = \sum_{ t \in \mathcal{T} } u^{\sigma(t)} - u^t \leq \sum_{t \in \mathcal{T} } \bs{p}^{ t } \cdot ( \bs{x}^{\sigma(t)} - \bs{x}^{t} ) + \varepsilon^t = \text{TMP} - \text{TMP} = 0
		\end{equation*}
		which establishes that each inequality in \eqref{eq:lp-ineqs-direct} holds as an equality when $s = \sigma(t)$. So, for any $s,t \in \mathcal{T}$
		\begin{equation*}
			u^t + \bs{p}^t \cdot ( \bs{x}^{\sigma(t)} - \bs{x}^t ) + \varepsilon^t = u^{ \sigma(t) } \leq u^s + \bs{p}^s \cdot ( \bs{x}^{\sigma(t)} - \bs{x}^s ) + \varepsilon^s
		\end{equation*}
		and so $u^{\sigma(t)} = U( \bs{x}^{\sigma(t)} )$ for all $t \in \mathcal{T}$. In other words, 
		\begin{equation} \label{eq:u-sigma-eq}
			u^{t} = U( \bs{x}^{t} ), \qquad\qquad \forall t \in \mathcal{T}
		\end{equation}
		Using \eqref{eq:u-sigma-eq}, we see that for any $t \in \mathcal{T}$ and $\bs{x} \in X$
		\begin{equation*}
			U( \bs{x} ) - \bs{p}^{t} \cdot \bs{x} \leq u^{\sigma(t)} + \bs{p}^t \cdot ( \bs{x} - \bs{x}^{\sigma(t)} ) - \bs{p}^t \cdot \bs{x} = u^{\sigma(t)} - \bs{p}^t \cdot \bs{x}^{\sigma(t)} = U( \bs{x}^{\sigma(t)} ) - \bs{p}^t \cdot \bs{x}^{\sigma(t)}
		\end{equation*}
		and so $U$ quasilinear rationalizes $D_{\sigma} = ( \bs{p}^t, \bs{x}^{\sigma(t)} )_{t \in \mathcal{T}}$ and so, by Proposition \ref{prop:argmin-char} we see that $U$ achieves the infimum in the definition of $Q$. Further, from \eqref{eq:u-sigma-eq} we see that $U(\bs{x}^{\diamond}) - U(\bs{x}^{\square}) = u^{\diamond} - u^{\square}$.
	\end{proof}
	
	\begin{step} \label{step:wb-ub}
		$U_{\ua}(\mbx^{\di}, \mbx^{\sq}; \mathcal{U}_{wb} ) \geq \hat{U}_{\ua}(\bs{x}^{\di}, \bs{x}^{\sq}) $.
	\end{step}
	\begin{proof}
		The set of feasible solutions to the LP is convex (and non-empty by Step \ref{step:feasible-solution}.) The result now follows by combining Steps \ref{step:feasible-solution} and \ref{step:U-gen} with the observation that $\alpha \bs{p} + (1-\alpha) \bbs{p} \in \eucpp$ when $\bs{p} \in \eucpp$, $\bbs{p} \in \eucp$ and $\alpha \in (0,1)$. 
	\end{proof}
	
	\begin{step} \label{step:support-price}
		Suppose the numbers $( u^t, \varepsilon^t )_{t \in \mathcal{T}}$ satisfy \eqref{eq:lp-ineqs-direct} for $s \in \mathcal{T}$ and $t \in \{1,2,\ldots,T\}$ and suppose $\varepsilon^{\diamond} = \varepsilon^{\square} = 0$. Let $\star \in \{ \diamond, \square \}$. There exists a $\bs{p}^{\star} \in \eucp$ so that \eqref{eq:lp-ineqs-direct} holds for all $s \in \mathcal{T}$ and $t = \star$ if and only if for each $(\alpha_t)_{t\in\mathcal{T}} \in \Delta^{\mathcal{T}}$ we have 
		\begin{equation} \label{eq:support-price}
			\bs{x}^{\star} \geq \sum_{ t \in \mathcal{T} } \alpha_t \bs{x}^t \qquad \implies \qquad u^{\star} \geq \sum_{t \in \mathcal{T}} \alpha_t u^t
		\end{equation}
	\end{step}

	\begin{proof}
		Fix some $\star \in \{\diamond, \square \}$. Define $C \subseteq \mathbb{R}^{L+1}$ by
		\begin{equation*}
			C = \left\{ (\bs{x}, u) \in \mathbb{R}^{L+1}: \exists (\alpha_t)_{t \in \mathcal{T}} \in \Delta^{\mathcal{T}} \text{ s.t.\ } \bs{x} \geq \sum_{t \in\mathcal{T}} \alpha_t \bs{x}^t \ \& \ \sum_{t\in\mathcal{T}} u^t \geq u \right\}
		\end{equation*}
		It is clear that $C$ is a convex polyhedron and so there are vectors $\bs{b}_1, \ldots, \bs{b}_M \in \mathbb{R}^L$ and numbers $a_1, \ldots, a_M \in \real$ and numbers $\beta_1, \ldots, \beta_M \in \real$ so that
		\begin{equation*}
			C = \Big\{ (\bs{x},u) \in \mathbb{R}^{L+1}: a_m u - \bs{b}_m \cdot \bs{x} \leq \beta_m, \qquad \forall m \in \{1,2,\ldots, M \} \Big\}
		\end{equation*}
		Note that if $(\bs{x}, u) \in C$ then $(\bbs{x}, \bar{u}) \in C$ for any $\bbs{x} \geq \bs{x}$ and $\bar{u} \leq u$ which shows that $\bs{b}_m \geq \bs{0}$ and $a_m \geq 0$ for all $m \in \{1,2,\ldots, M\}$. 
		
		Suppose $\bs{p}^{\star} \in \eucp$ is such that \eqref{eq:lp-ineqs-direct} holds for all $s \in \mathcal{T}$ and $t = \star$. Then, for any $(\alpha_t)_{t \in \mathcal{T}}$ such that $\bs{x}^{\star} \geq \sum_{t\in\mathcal{T}} \alpha_t \bs{x}^t$ we have
		\begin{equation*}
			\sum_{t \in \mathcal{T}} \alpha_t u^{t} \leq u^{\star} + \bs{p}^{\star} \cdot \left( \sum_{t\in \mathcal{T}} \bs{x}^t - \bs{x}^{\star} \right) \leq u^{\star}
		\end{equation*}
		and so \eqref{eq:support-price} holds.
		
		Next, suppose \eqref{eq:support-price} holds. From \eqref{eq:support-price} it is clear that $( \bs{x}^{\diamond}, u^{\diamond} + \delta ) \notin C$ for all $\delta > 0$. Thus, there must be some $m^{\diamond} \in \{1,2,\ldots, M\}$ so that $a_{m^{\diamond}} u^{\diamond} - \bs{b}_{m^{\diamond}} \cdot \bs{x}^{\diamond} = \beta_{m^{\diamond}}$ and $a_{m^{\diamond}} (u^{\diamond} - \delta ) - \bs{b}_{m^{\diamond}} \cdot \bs{x}^{\diamond} > \beta_{m^{\diamond}}$ for all $\delta >0$. From this it is clear that $a_{m^{\diamond}} > 0$. Take $\bs{p}^{\diamond} = \bs{b}_{m^{\diamond}} / a_{m^{\diamond}}$ and because $( \bs{x}^{t}, u^t ) \in C$ for all $t \in \mathcal{T}$ we have
		\begin{equation*}
			u^{\diamond} - \bs{p}^{\diamond} \cdot \bs{x}^{\diamond} = \beta_{m^{\diamond}} / a_{m^{\diamond}} \geq u^t - \bs{p}^{\diamond} \cdot \bs{x}^{t}
		\end{equation*}
		Thus, we can find the desired $\bs{p}^{\star}$. 
		
	\end{proof}
	
	\begin{step} \label{step:lp-bounded}
		$\hat{U}_{\ua}(\bs{x}^{\diamond}, \bs{x}^{\square}) < \infty$ if and only if
		\begin{equation} \label{eq:lp-bounded}
			\bs{x}^{\square} \in \conv( \bs{x}^{1},\ldots, \bs{x}^{T}, \bs{x}^{\diamond} ) + \eucp
		\end{equation}
	\end{step}
	\begin{proof}
		Suppose \eqref{eq:lp-bounded} does not hold. Let $( (\bs{x}^t, \varepsilon^t)_{t \in \mathcal{T} \backslash \{\square\}} )$ and $\bs{p}^{\diamond} \in \eucp$ satisfy \eqref{eq:lp-ineqs-direct} for $s,t \in \mathcal{T} \backslash \{ \square \}$ (these objects exist by Step \ref{step:feasible-solution}). Let $u^{\square} \in \real$ be some number which is small enough so that $u^{\square} \leq u^{t} + \bs{p}^t \cdot ( \bs{x}^{\sq} - \bs{x}^t )$ for all $t \in \mathcal{T} \backslash \{\sq\}$. As \eqref{eq:lp-bounded} does not hold we see that the condition on the left side of \eqref{eq:support-price} only holds with $\alpha_{\square} = 1$ and thus Step \ref{step:support-price} guarantees the existence of some $\bs{p}^{\square} \in \eucp$ so that \eqref{eq:lp-ineqs-direct} holds with $t = \sq$ and $s \in \mathcal{T}$. We have thus found a feasible solution to the LP and because $u^{\square}$ can be chosen to be arbitrarily small we see that $\hat{U}_{\ua}(\bs{x}^{\diamond}, \bs{x}^{\square}) = \infty$.
		
		Next, suppose that \eqref{eq:lp-bounded} holds and let $\bs{e}_{\ell} \in \eucp$ denote the vector of all zeros with a one on entry $\ell$. As \eqref{eq:lp-bounded} holds there exists $(\alpha_1, \ldots, \alpha_T, \alpha_{\di}) \in \mathbb{R}^{T+1}$ which sum to $1$ and $(\beta_1, \ldots, \beta_{L}) \in \eucp$ which satisfy
		\begin{equation} \label{eq:x-sq-decom}
			\bs{x}^{\square} = \sum_{t \in \mathcal{T} \backslash \{ \sq \} } \alpha_t \bs{x}^{t} + \sum_{ \ell = 1}^L \beta_{\ell} \bs{e}_{\ell}
		\end{equation} 
		Let $( (u^t,\varepsilon^t)_{t \in \mathcal{T}}, \bs{p}^{\di}, \bs{p}^{\sq} )$ be a feasible solution to the LP. Adding up the inequalities in \eqref{eq:lp-ineqs-direct} (with $t = \square$) and the inequalities $0 \leq \bs{p}^{\square} \cdot \bs{e}_{\ell}$ and using \eqref{eq:x-sq-decom} we have
		\begin{equation*}
			\sum_{ s \in \mathcal{T} \backslash \{\sq\} } \alpha_s u^{s} \leq u^{\square} + \sum_{s \in \mathcal{T} \backslash \{\sq\} } \alpha_s \bs{p}^{\square} \cdot ( \bs{x}^s - \bs{x}^{\square} ) + \alpha_s \varepsilon^s + \sum_{\ell = 1 }^L \beta_{\ell} \bs{p}^{\square} \cdot \bs{e}_{\ell} = u^{\sq} + \sum_{s \in \mathcal{T} \backslash \{\sq\} } \alpha_s \varepsilon^s
		\end{equation*}
		Adding up the inequalities in \eqref{eq:lp-ineqs-direct} (with $s = \diamond$) we have
		\begin{equation*}
			u^{\di} \leq \sum_{t \in \mathcal{T} \backslash \{\square\}} \alpha_t u^{t} + \alpha_t \bs{p}^t \cdot ( \bs{x}^t - \bs{x}^{\di} ) 
		\end{equation*}
		Combining the previous two displayed equations
		\begin{equation*}
			u^{\di} - u^{\sq} \leq \sum_{t \in \mathcal{T} \backslash \{\sq\}} \alpha_t \left( \bs{p}^t \cdot ( \bs{x}^{t} - \bs{x}^{\di} ) - \varepsilon^t \right)
		\end{equation*}
		and so the solution value of the LP is bounded above.
	\end{proof}
	
	\begin{step} \label{step:lp-big}
		$\hat{U}_{\ua}(\mbx^{\di}, \mbx^{\sq}) \geq U_{\ua}(\mbx^{\di}, \mbx^{\sq}; \mathcal{U}_{st} )$.
	\end{step}
	\begin{proof}
		From Step \ref{step:lp-bounded} it is clear that the result holds when the condition in \eqref{eq:lp-bounded} fails. So, let us suppose that \eqref{eq:lp-bounded} holds. Let $U: X \to \ereal$ be a standard utility function which achieves the infimum in the definition of $Q$ and which satisfies $U( \bs{x}^{\diamond} )  - U( \bs{x}^{\square} ) > -\infty$. Clearly, this implies that $U(\bs{x}^{t}) \in \real$ for all $t \mathcal{T} \backslash \{\sq\}$ and because \eqref{eq:lp-bounded} holds and $U$ is concave and weakly increasing we also see that $U( \bs{x}^{\sq} ) \in \real$. 
		
		For each $t \in \mathcal{T}$ let $u^t = U(\bs{x}^t)$. For each $t \in \{1,2,\ldots, T\}$ let 
		\begin{equation*}
			\varepsilon^t = \sup_{ \mbx \in X } \Big\{ U( \mbx ) - \mbp^t \cdot \mbx \Big\} - \Big( U( \mbx^t ) - \mbp^t \cdot \mbx^t \Big)
		\end{equation*}
		and let $\varepsilon^{\diamond} = \varepsilon^{\square} = 0$. As $U$ achieves the infimum in the definition of $Q$ we see that $\sum_{t=1}^T \varepsilon^t = \text{TMP}$ holds. 
		
		Next, note that for any $t \in \{1,2,\ldots, T\}$ and $s \in \mathcal{T}$ we have 
		\begin{equation*}
			u^t + \bs{p}^t \cdot ( \bs{x}^s - \bs{x}^t ) + \varepsilon^t = \bs{p}^t \cdot \bs{x}^s + \sup_{ \mbx \in X } \Big\{ U( \mbx ) - \mbp^t \cdot \mbx \Big\} \geq \bs{p}^t \cdot \bs{x}^s + u^s - \bs{p}^t \cdot \bs{x}^s = u^s
		\end{equation*}
		and so
		\begin{equation} \label{eq:key-ineqs}
			u^s \leq u^t + \bs{p}^t \cdot ( \bs{x}^s - \bs{x}^t ) + \varepsilon^t, \qquad \forall t \in \{1,2,\ldots, T\} \text{ and } s \in \mathcal{T}
		\end{equation}
		For each $t \in \mathcal{T} \backslash \{\square\}$ let $u^t = U(\bs{x}^t)$. From \eqref{eq:key-ineqs} we see that \eqref{eq:lp-ineqs-direct} holds for $s \in \mathcal{T} \backslash \{\square\}$ and $t \in \{1,2,\ldots, T\}$. As $U$ is concave and weakly increasing we may apply Step \ref{step:support-price} to find $\bs{p}^{\diamond}, \bs{p}^{\square} \in \eucp$ so that \eqref{eq:lp-ineqs-direct} holds for all $s,t \in \mathcal{T}$. We have thus found a feasible solution to the LP whose  solution value is equal to $U( \bs{x}^{\diamond} ) - U(\bs{x}^{\square})$ and because $U$ was an arbitrary element of $\mathcal{U}_{st}$ which satisfies $U(\bs{x}^{\di}) - U(\bs{x}^{\sq}) > -\infty$ the result follows. 
	\end{proof}
	
	\begin{step} \label{step:attain}
		There exists a standard $U: X \to \ereal$ which achieves the infimum in the definition of $Q$ and which satisfies $U( \bs{x}^{\di} ) - U( \bs{x}^{\sq} ) = U_{\ua}( \bs{x}^{\di}, \bs{x}^{\sq} )$.
	\end{step}
	\begin{proof}
		Suppose \eqref{eq:lp-bounded} does not hold. Let $\ti{U}: X \to \real$ be a well-behaved utility function which achieves the infimum in the definition of $Q$. Define $U$ by 
		\begin{equation*}
			U( \bs{x} ) = \begin{cases}
				\ti{U}(\bs{x}), \qquad & \text{ if } \bs{x} \in \conv(\bs{x}^1, \ldots, \bs{x}^T, \bs{x}^{\di}) + \eucp \\
				-\infty, \qquad & \text{ else.}
			\end{cases}
		\end{equation*}
		which satisfies the desired properties.
		
		Next, suppose \eqref{eq:lp-bounded} holds. From Step \ref{step:lp-bounded} we see that the LP has an optimal solution $( (u^t, \varepsilon^t)_{t \in \mathcal{T}}, \bs{p}^{\di}, \bs{p}^{\sq} )$. Step \ref{step:U-gen} now provides the desired $U$.
	\end{proof}
	We can now complete the proof. Obviously $U_{\ua}(\bs{x}^{\di}, \bs{x}^{\sq}, \mathcal{U}_{st}) \geq U_{\ua}( \bs{x}^{\di}, \bs{x}^{\sq}, \mathcal{U}_{wb} )$ and so the displayed equation in the theorem follows from Steps \ref{step:wb-ub} and \ref{step:lp-big}. Step \ref{step:attain} shows that the supremum in the definition of $U_{\ua}( \bs{x}^{\di}, \bs{x}^{\sq}; \mathcal{U}_{st} )$ is attained.
\end{proof}

\subsection{ Section \ref{sec:pop} Proofs }

We shall first present some key lemmas which we use in our proofs. We begin by defining the collection of integrable utility functions.

\subsubsection{Integrable Utility Functions}

For a probability measure $\nu$ on $\mathbb{R}^L$ we let $L^1(\nu)$ denote the collection of $\nu$-integrable utility functions $U: \mathbb{R}^L \to \ereal$. In other words, $U \in L^1(\nu)$ satisfies $\int |U(\bs{x})| \ d\nu(\bs{x}) < \infty$. Let $L^{st}(\nu)$ denote those $U \in L^1(\nu)$ which are standard. For a probability measure $\pi$ on $\mathbb{R}^L \times \mathbb{R}^L$ with marginals $\mu$ and $\nu$ we let $L^1(\pi)$ serve as a shorthand for $L^1(\nu)$. In other words, $U \in L^1(\pi)$ means that $U$ is measurable and that $\int |U(\bs{x})| \ d\pi( \bs{p}, \bs{x} ) < \infty$. Throughout, we identify each probability measure $\pi$ on $\eucpp \times X$ with the probability measure $\bar{\pi}$ on $\mathbb{R}^L \times \mathbb{R}^L$ which satisfies $\bar{\pi}( A ) = \pi( A \cap (\eucpp \times X) )$. Similar remarks hold for probability measures defined on $\eucpp$ and $X$.

\subsubsection{$c$-concavity}

Let $c: \eucpp \times X \to \real$ be some function. A function $U:X \to [-\infty, \infty)$ is \emph{$c$-concave} if there is a function $
\zeta: \eucpp \to [-\infty, \infty]$ so that
\begin{equation} \label{eq:c-concave}
	U(\bs{x}) = \inf_{ \bs{p} \in \eucpp } \Big( \zeta( \bs{p} ) + c( \bs{p}, \bs{x} ) \Big), \qquad\qquad \forall \bs{x} \in X
\end{equation}
The following shows that a $c$-concave function is standard for a certain function $c$. 
\begin{lemma} \label{lemma:c-concavity}
	Let $c: \eucpp \times X \to \real$ be defined by $c(\bs{p}, \bs{x}) = \bs{p} \cdot \bs{x}$. If $U: X \to \ereal$ is $c$-concave then $U$ is standard.
\end{lemma}
\begin{proof}
	Define $U: \conv(X) \to \ereal$ by \eqref{eq:c-concave}. By definition $U$ is the pointwise infimum of a collection of concave, continuous, and weakly increasing functions and thus $U$ is concave, upper semicontinuous, and weakly increasing. Thus, when $U$ is restricted to $X$ it is standard.
\end{proof}
f

\subsubsection{Optimal Transport Results}

\paragraph{Kantorovich Duality.}

\begin{lemma}[Kantorovich Duality] \label{lemma:kantorovich}
	Let $\bar{\pi}$ be a probability measure on $\eucpp \times X$ with finite variance and with marginals $\mu$ and $\nu$. Then
	\begin{equation} \label{eq:kantorovich}
		\max_{\pi \in \Pi(\mu, \nu)}  \E_{\bar{\pi}}[ \bs{p} \cdot \bs{x} ] - \E_{\pi}[ \bs{p} \cdot \bs{x} ]  = \inf_{ U \in L^1(\nu) } \E_{ \bar{\pi} }\Big[ \sup_{ \tbs{x} \in X } \big[ U(\tbs{x}) - \bs{p} \cdot \tbs{x} \big] - \big[ U(\bs{x}) - \bs{p} \cdot \bs{x} \big] \Big]
	\end{equation}
	The maximum in \eqref{eq:kantorovich} is attained by some $\pi^* \in \Pi(\mu,\nu)$ and further a $\pi \in \Pi(\mu,\nu)$ achieves this maximum if and only if $\pi$ satisfies cyclical monotonicity. Also, the minimum in \eqref{eq:kantorovich} is achieved by a standard $U: X \to [-\infty, \infty)$. 
\end{lemma}
\begin{proof}
	This is a version of Kantorovich duality. A more standard statement can be obtained by subtracting the constant $\max_{\pi \in \Pi(\mu, \nu)} \E_{\bar{\pi}}[ \bs{p} \cdot \bs{x} ]$ from both sides of \eqref{eq:kantorovich} and using the fact that $\bar{\pi} \in \Pi(\mu,\nu)$ to obtain
	\begin{equation*}
		\max_{\pi \in \Pi(\mu, \nu)} - \E_{\pi}[ \bs{p} \cdot \bs{x} ] = \min_{U \in L^1(\nu)} \E_{ \mu }\left[ \sup_{ \tbs{x} \in X } \big\{ U(\tbs{x}) - \bs{p} \cdot \tbs{x} \big\} \right] - \E_{\nu} \big[ U(\bs{x}) \big]
	\end{equation*}
	This is essentially the form stated in Theorem 5.10 of \citet{villani09} (take $c(\bs{p}, \bs{x}) = \bs{p} \cdot \bs{x}$ in Villani's notation). In Villani's result the maximum appears on the right and the minimum on the left. This is easily ``reversed'' by multiplying both sides by $-1$. Villani's Theorem shows that a $c$-concave $U$ achieves the infimum in \eqref{eq:kantorovich} and thus Lemma \ref{lemma:c-concavity} shows that this infimum is achieved by a standard $U$. Villani's Theorem 5.10 is actually quite lengthy and more general than the result we state here. The parts of Villani's Theorem which we use are (iia), (iib), and all of part (iii). 
\end{proof}

\paragraph{Uniqueness and convergence results.}

Let $\mathcal{P}_2$ denote the Borel probability measures on $\mathbb{R}^L$ with finite second moments (i.e.\ $\int || \mbx ||^2 \ d\mu < \infty$ for $\mu \in \mathcal{P}_2$). Similarly, $\mathcal{P}_2^2$ is the collection of Borel probability measures on $\mathbb{R}^L \times \mathbb{R}^L$ with finite second moments.

\begin{definition} \label{def:optimal}
	Let $\mu, \nu \in \mathcal{P}_2$. A coupling $\pi \in \Pi(\mu, \nu)$ is an \emph{optimal coupling of $(\mu,\nu)$} if it achieves the maximum in \eqref{eq:kantorovich}. A concave and upper semicontinuous function $U$ is a \emph{potential} for $(\mu,\nu)$ if it achieves the minimum in \eqref{eq:kantorovich}.
\end{definition}

\begin{lemma} \label{lemma:grad-ids}
	Let $A \subseteq \mathbb{R}^L$ be a non-empty open convex set. Suppose $U$ and $U'$ are convex functions mapping $\mathbb{R}^L$ to $\ereal$ which are real-valued on $A$ and satisfy
	\begin{equation*}
		\nabla U(\bs{x}) = \nabla U(\bs{x}), \qquad \text{ for almost every } \bs{x} \in A
	\end{equation*}
	then there exists $b \in \real$ so that $U(\bs{x}) = U'(\bs{x}) + b$ for all $\bs{x} \in A$.
\end{lemma}
This is Lemma 2.1 in \citet{barrio19}. We prove that the conclusion continues to hold if $A$ is connected but not necessarily convex.
\begin{lemma} \label{lemma:grad-ids2}
	Let $A \subseteq \mathbb{R}^L$ be a non-empty connected open set. Suppose $U$ and $U'$ are convex functions mapping $\mathbb{R}^L$ to $\ereal$ which are real-valued on $A$ and satisfy
	\begin{equation*}
		\nabla U(\bs{x}) = \nabla U(\bs{x}), \qquad \text{ for almost every } \bs{x} \in A
	\end{equation*}
	then there exists $b \in \real$ so that $U(\bs{x}) = U'(\bs{x}) + b$ for all $\bs{x} \in A$.
\end{lemma}
\begin{proof}
	Let $\bbs{x} \in A$ and let $b = U(\bbs{x}) - U'(\bbs{x})$. Let $C = \{ \bs{x} \in A: U(\bs{x}) = U'(\bs{x}) + b \}$. Note that $\bbs{x} \in C$ and thus $C \neq \varnothing$. Let $\tbs{x} \in \bar{C} \cap A$ (where $\bar{C}$ is the closure of $C$). As $A$ is open we can find some open ball $B$ so that $\tbs{x} \in B \subseteq A$. As $\tbs{x} \in \bar{C}$ there exists $\tbs{x}' \in B \cap C$. As $B$ is convex we can apply Lemma \ref{lemma:grad-ids} to see that $U( \bs{x} ) = U'( \bs{x} ) + \ti{b}$ for some $\ti{b} \in \real$ and all $\bs{x} \in B$. Now, as $\tbs{x}' \in B \cap C$ we have $U(\tbs{x}') = U(\tbs{x}') + b$ and therefore we see that $b = \ti{b}$ and we conclude that $U(\bs{x}) = U(\bs{x}) + b$ for all $\bs{x} \in B$. Thus, $B \subseteq C$. By supposing $\tbs{x} \in C$ we conclude that $C$ is open (as $B$ is then an open set containing $\tbs{x}$ which lies in $C$) and by supposing $\tbs{x} \in (\partial C) \cap A$ we conclude that $C$ is closed relative to $A$ (as we have seen that $\tbs{x} \in C$). So, $C$ is both open and closed relative to $A$ and, as $A$ is connected and $C$ is non-empty we conclude that $C = A$ which completes the proof.
\end{proof}

\begin{lemma}[Uniqueness] \label{lemma:unique-U2}
	Let $\mu,\nu \in \mathcal{P}_2$ where $\nu$ has a positive density on the connected non-empty interior of its support and let $U, U'$ be two potentials for $(\mu, \nu)$. Then, there exists some $b \in \real$ so that
	\begin{equation*} 
		U(\bs{x}) = U'(\bs{x}) + b, \qquad\qquad \text{ for all } \bs{x} \in \supp(\nu)^o
	\end{equation*}
\end{lemma}
\begin{proof}
	Let $A = \supp(\nu)^o$, let $\pi$ be an optimal coupling of $(\mu,\nu)$, and let $\ti{U}$ be any potential for $(\mu,\nu)$. Note that $\ti{U}$ must be real-valued on $A$ and so Theorem 25.5 in \citet{rockafellar70} shows that $\ti{U}$ is continuously differentiable almost everywhere on $A$. That is, there is a set $A' \subseteq A$ where $A \backslash A'$ has Lebesgue measure $0$ and where $\nabla \ti{U}$ exists and is continuous on $A'$. From Proposition \ref{prop:argmin-char-pop} we see $\nabla \ti{U}(\bs{x}) = \bs{p}$ for all $(\bs{p}, \bs{x}) \in \supp( \pi )$ where $\bs{x} \in A'$. As $\nu$ has positive density on $A$ we see $\nu( A' ) = 1$ and therefore, for any measurable $C \subseteq A$ 
	\begin{IEEEeqnarray*}{rCl}
		\mu(C) & = & \pi( C \times X ) = \pi( C \times A' ) = \nu\{ \bs{x} \in A': \nabla \ti{U}(\bs{x}) \in C \} = \nu \circ (\nabla \ti{U})^{-1}(C)
	\end{IEEEeqnarray*}
	So, $\mu = \nu \circ (\nabla \ti{U})^{-1}$. As $\ti{U}$ was an arbitrary potential for $(\mu,\nu)$ we see that $\mu = \nu \circ (\nabla U)^{-1} = \nu \circ (\nabla U')^{-1}$. Now, from Brenier's Theorem (see Theorem 2.12 in \citet{villani03}) we see that $\nabla \ti{U}(\bs{x}) = \nabla U(\bs{x}) = \nabla U'(\bs{x})$ for almost all $\bs{x} \in A$ and so Lemma \ref{lemma:grad-ids2} shows that there exists a $b \in \real$ so that $U(\bs{x}) = U(\bs{x}') + b$ for all $\bs{x} \in A$.  
\end{proof}
Lemma \ref{lemma:unique-U2} is a slight generalization of Corollary 2.2 in \citet{barrio19}. Del Barrio and Loubes' corollary requires additionally that $\supp(\nu)$ is convex.

The following defines a notion of convergence for sequences in $\mathcal{P}_2$.
\begin{definition} \label{def:measure-converge}
	For a sequence $\nu_n \in \mathcal{P}_2$ we write $\nu_n \rightarrow \nu \in \mathcal{P}_2$ to mean that (i) $\nu_n$ converges to $\nu$ in the usual weak topology (i.e.\ $\int f d \nu_n \rightarrow \int f d \nu$ for all continuous and bounded functions $f: X \rightarrow \mathbb{R}$) and (ii) the second moments of $\nu_n$ converge to the second moment of $\nu$ (i.e.\ $\int || \mbx ||^2 d \nu_n \rightarrow \int || \mbx ||^2 d \nu $). Convergence in $\mathcal{P}_2^2$ is similarly defined.
\end{definition}

\begin{lemma} \label{lemma:plan-conv}
	Let $\pi_n \in \mathcal{P}_2^2$ be a sequence of probability measures with $\pi_n \to \bar{\pi} \in \mathcal{P}_2^2$. Let $\mu_n, \nu_n$ be the marginals of $\pi_n$ and let $\mu,\nu$ be the marginals of $\bar{\pi}$. Then,
	\begin{equation} \label{eq:plan-conv}
		\max_{ \pi \in \Pi(\mu_n, \nu_n) } \Big( \E_{ \pi_n }[ \bs{p} \cdot \bs{x} ] - \E_{ \pi }[ \bs{p} \cdot \bs{x} ] \Big) \longrightarrow \max_{ \pi \in \Pi(\mu, \nu) } \Big( \E_{ \bar{\pi} }[ \bs{p} \cdot \bs{x} ] - \E_{ \pi }[ \bs{p} \cdot \bs{x} ] \Big)
	\end{equation}
\end{lemma}
\begin{proof}
	Let $\mathcal{W}_2(\ti{\mu}, \ti{\nu})$ be the Wasserstein distance of order 2 between $\ti{\mu}$ and $\ti{\nu}$. That is,
	\begin{equation*}
		\mathcal{W}_2(\ti{\mu}, \ti{\nu}) = \left( \inf_{ \pi \in \Pi( \ti{\mu}, \ti{\nu}) } \int || \bs{x} - \bs{y} ||^2 \ d \pi(\bs{y}, \bs{x}) \right)^{1/2}
	\end{equation*}
	For each $n$, let $\ti{\mu}_n(A) = \mu_n(-A)$ and $\ti{\mu}(A) = \mu(-A)$ where $-A = \{-\bs{y}: \bs{y} \in A\}$. Corollary 6.11 in \citet{villani09} shows
	\begin{equation} \label{eq:W-conv}
		\mathcal{W}_2( \ti{\mu}_n, \nu_n ) \to \mathcal{W}_2( \ti{\mu}, \nu)
	\end{equation}
	From the fact that $2 \bs{y} \cdot \bs{x} + ||\bs{x}||^2 + || \bs{y} ||^2 = || \bs{x} + \bs{y} ||^2$ we can use equation \eqref{eq:W-conv} to see
	\begin{equation*}
		\inf_{ \pi \in \Pi( \ti{\mu}_n, \nu_n ) } \int \bs{y} \cdot \bs{x} \ d \pi(\bs{y}, \bs{x}) \to \inf_{ \pi \in \Pi( \ti{\mu}, \nu ) } \int \bs{y} \cdot \bs{x} \ d \pi(\bs{y}, \bs{x})
	\end{equation*}
	and so
	\begin{IEEEeqnarray*}{rCl}
		\inf_{ \pi \in \Pi(\mu_n, \nu_n) } \int -\bs{p} \cdot \bs{x} \ d \pi(\bs{p}, \bs{x}) & = & \inf_{ \pi \in \Pi( \ti{\mu}_n, \nu_n ) } \int \bs{y} \cdot \bs{x} \ d \pi(\bs{y}, \bs{x}) \\
		& \to & \inf_{ \pi \in \Pi( \ti{\mu}, \nu ) } \int \bs{y} \cdot \bs{x} \ d \pi(\bs{y}, \bs{x}) = \inf_{ \pi \in \Pi( \mu, \nu ) } \int -\bs{p} \cdot \bs{x} \ d \pi(\bs{p}, \bs{x})
	\end{IEEEeqnarray*}
	which implies \eqref{eq:plan-conv}.
\end{proof}

\begin{lemma} \label{lemma:potential-conv2}
	Let $\mu_n, \nu_n \in \mathcal{P}_2$ be sequences satisfying $\mu_n \to \mu \in \mathcal{P}_2$ and $\nu_n \to \nu \in \mathcal{P}_2$ where $\nu$ has a positive density on the non-empty connected interior of its support. Let $\bbs{x} \in \supp( \nu )^o$. For each $n$ let $U_n$ be a potential for $(\mu_n, \nu_n)$ which satisfies $U_n(\bbs{x}) = 0$ and let $U$ be a potential for $(\mu, \nu)$ which satisfies $U(\bbs{x}) = 0$. Then, 
	\begin{equation*} 
		U_n(\bs{x}) \to U(\bs{x}), \qquad\qquad \text{ for all } \bs{x} \in \supp(\nu)^o
	\end{equation*}
\end{lemma}
\begin{proof}
	Theorem 2.8 of \citet{barrio19} shows that the conclusion of this lemma holds when $\nu$ has a positive density on the interior of its convex support. So, the present lemma is slightly more general in the sense that we require $\supp(\nu)^o$ to be connected instead of convex. Actually, the proof of Theorem 2.8 of \citet{barrio19} largely goes through without alternation when $\supp(\nu)^o$ is connected (and not necessarily convex). The only modification needed is to invoke our Lemma \ref{lemma:unique-U2} instead of del Barrio and Loubes' Lemma 2.1 near the end of the proof.
\end{proof}

\subsubsection{A Convergence of Measure Result}

\begin{lemma} \label{lemma:conv}
	Let $(\mbp^t, \mbx^t)_{t \in \mathbb{Z}}$ be an ergodic stationary process on probability space $( \Omega, \mathcal{F}, P )$ where $(\mbp^t, \mbx^t)$ has distribution $\pi_0 \in \mathcal{P}_2^2$. Let $\pi_T$ be the empirical measure for $( \mbp^1, \mbx^1 ), \ldots, (\mbp^T, \mbx^T)$ (that is, $\pi_T(A) = \tfrac{1}{T} \sum_{t=1}^T \mathbf{1}[ (\mbp^t, \mbx^t) \in A ]$ where $\mathbf{1}[\cdot]$ is the indicator function). Then, $P$-almost surely, $\pi_T \rightarrow \pi_0 \in \mathcal{P}_2^2$.
\end{lemma}
\begin{proof}
	Let $C_b$ denote the collection of continuous and bounded real-valued functions with domain $\mathbb{R}^L \times \mathbb{R}^L$. From Theorem 6.6 in \citet{parthasarathy67} there exists a sequence $\{g_n\}_{n \in \mathbb{N}}$ in $C_b$ which characterizes weak convergence in the sense that 
	\begin{equation} \label{eq:conv-char}
		\int g \ d \pi_T \to \int g \ d \pi_0, \ \forall g \in C_b \qquad \iff \qquad \int g_n \ d \pi_T \to \int g_n \ d \pi_0, \ \forall n \in \mathbb{N}
	\end{equation}
	For each $n \in \mathbb{N}$ we may apply the Ergodic Theorem (see Theorem 20.14 of \citet{klenke-probabilitytheory}) to see that, almost surely, 
	\begin{equation*}
		\int g_n \ d \pi_T = \dfrac{1}{T} \sum_{t=1}^T g_n( \bs{p}^t, \bs{x}^t ) \to \bs{E}\Big[ g_n(\bs{p}^t, \bs{x}^t) \Big] = \int g_n \ d \pi_0
	\end{equation*}
	and so the right hand side condition of \eqref{eq:conv-char} occurs with probability 1 for each $n$. Therefore, the left hand side condition of \eqref{eq:conv-char} occurs with probability 1. 
	
	To complete the proof we again apply the Ergodic Theorem to see that, almost surely,
	\begin{equation*}
		\int || (\bs{p}, \bs{x}) || \ d \pi_T = \dfrac{1}{T} \sum_{t=1}^T || (\bs{p}^t, \bs{x}^t) || \to \bs{E} \Big[ || (\bs{p}^t, \bs{x}^t) || \Big] = \int || (\bs{p}^t, \bs{x}^t) || \ d \pi_0
	\end{equation*}
	and thus, almost surely, $\pi_T \to \pi_0 \in \mathcal{P}_2^2$.
\end{proof}

Let $\pi_0$ be a probability measure on $\eucpp \times X$ with finite second moments and with marginals $\mu_0$ and $\nu_0$. For a $U \in L^1(\nu_0)$ let
\begin{equation} \label{eq:Q-bar}
	\bar{Q}(U) = \E_{\pi_0} \left[ \left( \sup_{ \tbs{x} \in X} \Big\{ U(\tbs{x}) - \bs{p} \cdot \tbs{x} \Big\} \right) - \Big( U(\bs{x}) - \bs{p} \cdot \bs{x} \Big) \right] 
\end{equation}
and 
\begin{equation} \label{eq:A-bar}
	\bar{A}(U) = \bs{E}_{\pi_0}[ \bs{p} \cdot \bs{x} ] - e_U
\end{equation}
where $e_U$ is as defined in Section \ref{sec:tmp-pop}. Note that $Q_0 = \inf_{U \in L^1(\nu_0) } \bar{Q}(U)$ and $A_0 = \inf_{U \in L^1(\nu_0) } \bar{A}(U)$.

\begin{lemma} \label{lemma:Q-A-T}
	Suppose $\pi_0$ is a distribution on $\eucpp \times X$ with finite variance and with marginals $\mu_0$ and $\nu_0$. Then,  
	\begin{equation} \label{eq:Q-A-T}
		\bar{Q}(U) \geq \bar{A}(U) \geq \text{TMP}_0, \qquad \qquad \forall U \in L^1(\nu_0)
	\end{equation}
\end{lemma}
\begin{proof}
	Let $U \in L^1(\nu_0)$. For $\pi \in \Pi(\mu_0)$ which satisfies $\E_{\pi}[ U(\bs{x}) ] \geq \E_{\pi_0}[ U(\bs{x})]$ we have
	\begin{equation*}
		\bar{Q}(U) \geq \E_{\ti{\pi}} \Big[ U(\bs{x}) - \bs{p} \cdot \bs{x} \Big] - \E_{\pi_0} \Big[ U(\bs{x}) - \bs{p} \cdot \bs{x} \Big] \geq \E_{\pi_0} [ \bs{p} \cdot \bs{x} ] - \E_{\pi}[ \bs{p} \cdot \bs{x} ]
	\end{equation*}
	and so $\bar{Q}(U) \geq \bar{A}(U)$. For $U \in L^1(\nu_0)$ and $\pi \in \Pi(\mu_0,\nu_0)$ we have 
	\begin{equation*}
		\bar{A}(U) = \bs{E}_{\pi_0}[ \bs{p} \cdot \bs{x} ] - e_U \geq \E_{\pi_0}[ \bs{p} \cdot \bs{x} ] - \E_{\pi}[ \bs{p} \cdot \bs{x} ]
	\end{equation*}
	and so $\bar{A}(U) \geq \text{TMP}_0$. 
\end{proof}

\noindent \textbf{Theorem \ref{theorem:pop-tmp-equi}.}

\begin{proof}[Proof of Theorem \ref{theorem:pop-tmp-equi}.]
	Lemma \ref{lemma:kantorovich} guarantees that \eqref{eq:kantorovich} holds and that the maximum in \eqref{eq:kantorovich} is achieved by some $\pi^* \in \Pi(\mu_0, \nu_0)$ and the minimum is achieved by some standard $U^*: X \to \ereal$. Further, we have that $\pi$ achieves the maximum in \eqref{eq:kantorovich} if and only if $\pi$ satisfies cyclical monotonicity. Thus, we see that (i) $Q_0 = \text{TMP}_0$, (ii) the infimum in the definition of $Q_0$ is achieved by a standard $U^*$, (iii) the supremum in the definition of $\text{TMP}_0$ is achieved by a $\pi^*$ and further this supremum is achieved by a $\pi$ if and only if $\pi$ satisfies cyclical monotonicity. Further, from Lemma \ref{lemma:Q-A-T} we see that $U^*$ also achieves the infimum in the definition of $A_0$ and that $A_0 = \text{TMP}_0$.
	
	To prove the final part of the theorem we suppose that $\nu_0$ has a positive density on the interior of its convex support and we let $U$ and $U'$ be two standard utility functions which achieve the infimum in the definition of $Q_0$. Clearly, $U$ and $U'$ are potentials for $(\mu_0, \nu_0)$ and thus Lemma \ref{lemma:unique-U2} establishes that there is a number $b \in \real$ so that $U(\bs{x}) = U'(\bs{x}) + b$ for all $\bs{x} \in \supp(\nu_0)^o$. 
\end{proof}

\noindent \textbf{Proposition \ref{prop:reny}.}

\begin{proof}[Proof of Proposition \ref{prop:reny}.]
	It is obvious that 3 implies 2 implies 1 and so we just show that if $D$ satisfies cyclical monotonicity then $D$ can be quasilinear rationalized by a standard $U$. As $D$ is a non-empty subset of $\eucpp \times X$ we can find a countable set $Y = \{ \bs{y}_1, \bs{y}_2, \bs{y}_3, \ldots \} \subseteq D$ which is dense in $D$. Let $\pi$ be some probability measure on $\eucpp \times X$ defined so that $\pi( Y ) = 1$, $\pi(\bs{y}_k) > 0$ for all $k$, and so that $\pi$ has finite second moments. For instance, the probability measure defined by
	\begin{equation*}
		\pi( \bs{y}_k ) = \dfrac{ 2^{-k} \min(1, || \bs{y}_k ||^{-2} ) }{ \sum_{j=1}^{\infty} 2^{-j} \min(1, ||\bs{y}_j||^{-2}) } 
	\end{equation*}
	will do.\footnote{This uses the convention that $0^{-2} = \infty$.} Clearly, $\pi$ satisfies cyclical monotonicity and so Theorem \ref{theorem:pop-tmp-equi} establishes that there is a standard $U$ which quasilinear rationalizes $\pi$. Clearly, this $U$ quasilinear rationalizes $D$.
\end{proof}

\noindent \textbf{Theorem \ref{theorem:tmp-conv}.}

\begin{proof}[Proof of Theorem \ref{theorem:tmp-conv}.]
	Let $\pi_T$ denote the empirical measure for $(\bs{p}^1, \bs{x}^1), \ldots, (\bs{p}^T, \bs{x}^T)$ (that is, $\pi_T(A) = \tfrac{1}{T} \bs{1}[ (\bs{p}^t, \bs{x}^t) \in A ]$ where $\bs{1}[\cdot]$ is the indicator function). Similarly, let $\mu_T$ denote the empirical measure for $\bs{p}^1, \bs{p}^2, \ldots, \bs{p}^T$ and let $\nu_T$ denote the empirical measure for $\bs{x}^1, \bs{x}^2, \ldots, \bs{x}^T$ . 
	
	For some fixed $T$ let $U$ be a standard utility function which achieves the infimum in the definition of $Q$ for the dataset $D_T$. For any $\pi \in \Pi( \mu_T, \nu_T )$ we may use Theorem \ref{theorem:tmp} to see
	\begin{IEEEeqnarray*}{rCl}
		\dfrac{\text{TMP}_T}{T} & = & \dfrac{Q}{T} = \dfrac{1}{T} \left( \sum_{t=1}^T \sup_{\tbs{x} \in X} [ U(\tbs{x}) - \bs{p}^t \cdot \tbs{x} ] - [ U(\bs{x}^t) - \bs{p}^t \cdot \bs{x}^t ] \right) \\
		& = & \E_{\pi_T} \left[ \sup_{ \tbs{x} \in X } [ U(\tbs{x}) - \bs{p} \cdot \tbs{x} ] - [U(\bs{x}) - \bs{p} \cdot \bs{x}] \right] \\
		& \geq & \E_{\pi} [U(\bs{x}) - \bs{p} \cdot \bs{x}] - \E_{\pi_T} [U( \bs{x} ) - \bs{p} \cdot \bs{x}] \\
		& = & \E_{\pi_T} [\bs{p} \cdot \bs{x}] - \E_{\pi} [\bs{p} \cdot \bs{x}]
	\end{IEEEeqnarray*}
	From which it is clear that 
	\begin{equation*}
		\dfrac{\text{TMP}_T}{T} = \sup_{ \pi \in \Pi( \mu_T, \nu_T ) } \Big\{ \E_{\pi_T} [\bs{p} \cdot \bs{x}] - \E_{\pi} [\bs{p} \cdot \bs{x}] \Big\}
	\end{equation*}
	From Lemma \ref{lemma:conv} we see that, almost surely, $\pi_T \to \pi_0 \in \mathcal{P}_2^2$ and thus almost surely, $\mu_T \to \mu_0 \in \mathcal{P}_2$ and $\nu_T \to \nu_0 \in \mathcal{P}_2$. We may then appeal to Lemma \ref{lemma:plan-conv} to see that, almost surely,
	\begin{equation*} 
		\dfrac{\text{TMP}_T}{T} = \sup_{ \pi \in \Pi( \mu_T, \nu_T ) } \Big\{ \E_{\pi_T} [\bs{p} \cdot \bs{x}] - \E_{\pi} [\bs{p} \cdot \bs{x}] \Big\} \to \sup_{ \pi \in \Pi( \mu_0, \nu_0 ) } \Big\{ \E_{\pi_0} [\bs{p} \cdot \bs{x}] - \E_{\pi} [\bs{p} \cdot \bs{x}] \Big\}
	\end{equation*}
	and so almost surely, $\text{TMP}_T / T \to \text{TMP}_0$.
	
	Next, let $U_T$ and $U$ be as in the statement of the theorem. It is clear that $U_T$ is a potential for $(\mu_T, \nu_T)$ and that $U$ is a potential for $(\mu, \nu)$. Therefore Lemma \ref{lemma:potential-conv2} establishes that, almost surely, $U_T(\bs{x}) \to U(\bs{x})$ for all $\bs{x} \in \supp(\nu_0)^o$.
\end{proof}

\noindent \textbf{Proposition \ref{prop:unique-U-finite}.}

\begin{proof}[Proof of Proposition \ref{prop:unique-U-finite}.]
	Suppose that $U \in L^{st}(\nu_0)$ achieves the infimum in the definition of $Q_0$ and suppose that $U' \in L^{st}(\nu_0)$ does not satisfy \eqref{eq:U-id} for any $b \in \real$. We shall show that $U'$ does not achieve the infimum in the definition of $Q_0$ which will prove the first part of the proposition. Let $A \subseteq \supp(\nu_0)$ be the collection of bundles $\bs{x} \in \supp(\nu_0)$ which satisfy
	\begin{equation*}
		U'(\bs{x}) - U(\bs{x}) = \min_{ \tbs{x} \in \supp(\nu_0) } U'(\tbs{x}) - U(\tbs{x})
	\end{equation*}
	and let $\ti{A} = \supp(\nu_0) \backslash A$. Note that $\ti{A}$ is not empty because $U'$ does not satisfy \eqref{eq:U-id} for any $b \in \real$. Let $\pi \in \Pi(\mu_0, \nu_0)$ attain the supremum in the definition of $\text{TMP}_0$. We claim that
	\begin{equation} \label{eq:boundary-price}
		\exists \bbs{x} \in A, \quad \tbs{x} \in \ti{A}, \quad \bbs{p} \in \supp(\mu_0) \quad \text{ so that } \quad (\bbs{p}, \bbs{x}), (\bbs{p}, \tbs{x}) \in \supp(\pi) 
	\end{equation}
	To show that \eqref{eq:boundary-price} is true let $P_A$ and $P_{\ti{A}}$ be defined by
	\begin{IEEEeqnarray*}{rCl}
		P_{A} & = & \Big\{ \bs{p} \in \supp(\mu_0): (\bs{p}, \bs{x}) \in \supp(\pi) \text{ for some } \bs{x} \in A \Big\} \\
		P_{ \ti{A} } & = & \Big\{ \bs{p} \in \supp(\mu_0): (\bs{p}, \bs{x}) \in \supp(\pi) \text{ for some } \bs{x} \in \ti{A} \Big\}
	\end{IEEEeqnarray*}
	Note that $P_A$ and $P_{\ti{A}}$ are both closed sets and that $\supp(\mu_0) = P_A \cup P_{\ti{A}}$. Thus, as $\supp(\mu_0)$ is connected, there must be some $\bbs{p} \in P_A \cap P_{\ti{A}}$ and therefore \eqref{eq:boundary-price} holds. From Proposition \ref{prop:argmin-char-pop} we see that $U$ quasilinear rationalizes $\pi$ and so
	\begin{equation*}
		U( \tbs{x} ) - \bbs{p} \cdot \tbs{x} = U( \bbs{x} ) - \bbs{p} \cdot \bbs{x}
	\end{equation*}
	Now, as $\bbs{x} \in A$ and $\tbs{x} \in \ti{A}$ we see that
	\begin{equation*}
		U'( \tbs{x} ) - \bbs{p} \cdot \tbs{x} > U'( \bbs{x} ) - \bbs{p} \cdot \bbs{x}
	\end{equation*}
	and so $U'$ does not quasilinear rationalize $\pi$. Proposition \ref{prop:argmin-char-pop} now shows that $U'$ does not achieve the argmin in the definition of $Q_0$.
	
	We next proceed to prove \eqref{eq:U-conv2}. Suppose the support of $\nu_0$ has $K+1$ elements which we enumerate as $\supp(\nu_0) = \{ \bs{x}_0, \bs{x}_1, \bs{x}_2, \ldots, \bs{x}_K \}$ and suppose that $U( \bs{x}_0 ) = U_T(\bs{x}_0) = 0$ for all $T$. We identify each $\bar{U} \in L^{st}(\nu_0)$ which satisfies $\bar{U}(\bs{x}_0) = 0$ with the vector in $\mathbb{R}^K$ whose entry $k$ is $\bar{U}(\bs{x}_k)$. So, we shall write $\bar{U} = ( \bar{U}_1, \bar{U}_2, \ldots, \bar{U}_K ) \in \mathbb{R}^K$ where $\bar{U}_k = \bar{U}(\bs{x}_k)$. Let $Q_T(\bar{U})$ be defined by 
	\begin{equation*}
		Q_T(\bar{U}) = \dfrac{1}{T} \sum_{t=1}^T \bigg( \sup_{ k \in \{0,1,2\ldots, K\} } \Big[ \bar{U}_k - \bs{p}^t \cdot \bs{x}_k \Big] - \Big[ \bar{U}(\bs{x}^t) - \bs{p}^t \cdot \bs{x}^t \Big] \bigg)
	\end{equation*}
	where $\bar{U}_0 = 0$ by definition and let $Q_0(\bar{U})$ be defined by
	\begin{equation*}
		Q_0(\bar{U}) = \E\left[ \sup_{ k \in \{0,1,2\ldots, K\} } \Big[ \bar{U}_k - \bs{p}^t \cdot \bs{x}_k \Big] - \Big[ \bar{U}(\bs{x}^t) - \bs{p}^t \cdot \bs{x}^t \Big] \right]
	\end{equation*}
	where again $\bar{U}_0 = 0$ by definition. Because (i) the pointwise supremum of convex functions is a convex function and (ii) the sum of convex functions is a convex function we see that $Q_T: \mathbb{R}^K \to \mathbb{R}$ is convex. As \eqref{eq:U-id} holds we see that 
	\begin{equation} \label{eq:U-id-loss}
		Q_0(U) < Q_0(\bar{U}), \qquad\qquad \text{ for all } \bar{U} \in \mathbb{R}^K \text{ s.t. } \bar{U}\neq U
	\end{equation}
	
	Let $\mathcal{U}$ be a countable dense subset of $\mathbb{R}^K$ and let $C \subseteq \mathbb{R}^K$ be a closed ball of positive radius centered at $U$. From the Ergodic Theorem we see that, almost surely, $Q_T(\bar{U}) \to Q_0(\bar{U})$ for all $\bar{U} \in \mathcal{U}$. Now, as $Q_T$ is convex for each $T$ we may appeal to Theorem 10.8 in \citet{rockafellar70} to see that, almost surely, $Q_T$ converges uniformly to $Q_0$ on $C$. From hereon we fix a point in the sample space on which $Q_T$ converges uniformly to $Q_0$ on $C$. As this point in the sample space is an arbitrary element of a probability 1 event we need only establish that $U_T \to U$ for this point in the sample space for our proof to be complete. 
	
	Now, let $\ti{C} \subseteq C$ be some closed ball of positive radius centered at $U$. As $U$ satisfies \eqref{eq:U-id-loss} we see that there exists $\bar{T}$ so that
	\begin{equation} \label{eq:U-id-loss2}
		Q_T(U) < \min_{ \bar{U} \in \partial \ti{C} } Q_T(\bar{U}), \qquad\qquad \text{ for all } T \geq \bar{T}
	\end{equation}
	where $\partial \ti{C}$ is the boundary of $\ti{C}$. Let $\bar{U} \in \mathbb{R}^K$ satisfy $\bar{U} \notin \ti{C}$ and let $\ti{U} \in \partial \ti{C}$ be the point ``between'' $U$ and $\bar{U}$ in the sense that it can be represented as $\ti{U} = \alpha U + (1-\alpha) \bar{U}$ for some $\alpha \in (0,1)$. From \eqref{eq:U-id-loss2} we see that, for all $T \geq \bar{T}$,
	\begin{equation*}
		Q_T(U) < Q_T( \ti{U} ) \leq \alpha Q_T( U ) + (1-\alpha) Q_T( \bar{U} )
	\end{equation*}
	where we used the fact that $Q_T$ is convex. Rearranging we see that for all $T \geq \bar{T}$ we have $Q_T(U) < Q_T(\bar{U})$. As $\bar{U}$ is an arbitrary element in $\mathbb{R}^K \backslash \ti{C}$ we conclude that $U_T$ belongs to $\ti{C}$ for all $T \geq \bar{T}$. But, because $\ti{C} \subseteq C$ is an arbitrary closed ball of positive radius centered at $U$ we see that $U_T \to U$.
\end{proof}

\noindent \textbf{Proposition \ref{prop:price-misperception}.}

\begin{lemma} \label{lemma:pop-tmp-price-misp}
	Suppose $(\tbs{p}^t, \bs{p}^t, \bs{x}^t)_{t \in \ints}$ and $U_0$ satisfy Assumption \ref{assumption:price-misperception} and assume $\tbs{p}^t$ and $\bs{p}^t$ follow the same distribution. Let $\pi^*$ denote the joint distribution of $(\tbs{p}^t, \bs{x}^t)$ and let $\bar{Q}$ be defined by \eqref{eq:Q-bar}. Then, $\text{MP}_0( \pi^* ) = \bar{Q}(U_0)$ and 
	\begin{equation} \label{eq:pop-tmp-price-misp}
		\text{TMP}_0 = \bs{E} \Big[ ( \bs{p}^t - \tbs{p}^t ) \cdot \bs{x}^t \Big]
	\end{equation}
\end{lemma}
\begin{proof}
	From \eqref{eq:price-misperception} we see
	\begin{equation*}
		\E_{\pi^*}[ U_0(\bs{x}) - \bs{p} \cdot \bs{x} ] = \E_{\pi_0} \left[ \max_{ \tbs{x} \in X } \Big\{ U_0(\tbs{x}) - \bs{p} \cdot \tbs{x} \Big\} \right]
	\end{equation*}
	and so
	\begin{equation*}
		\bar{Q}(U_0) = \E_{\pi^*}[ U_0(\bs{x}) - \bs{p} \cdot \bs{x} ] - \E_{\pi_0}[ U_0( \bs{x}) - \bs{p} \cdot \bs{x} ] = \E_{\pi_0}[ \bs{p} \cdot \bs{x} ] - \E_{\pi^*}[ \bs{p} \cdot \bs{x} ] = \text{MP}_0(\pi^*)
	\end{equation*}
	Using Lemma \ref{lemma:Q-A-T} we see that 
	\begin{equation*}
		\text{TMP}_0 = \text{MP}(\pi^*) = \E_{\pi_0}[ \bs{p} \cdot \bs{x} ] - \E_{\pi^*}[ \bs{p} \cdot \bs{x} ] = \E[ \bs{p}^t\cdot \bs{x} ] - \E[ \tbs{p}^t \cdot \bs{x}^t ] = \bs{E} \Big[ ( \bs{p}^t - \tbs{p}^t ) \cdot \bs{x}^t \Big]
	\end{equation*}
	which completes the proof.
\end{proof}

\begin{proof}[Proof of Proposition \ref{prop:price-misperception}.]
	Let $\pi^*$ denote the joint distribution of $( \tbs{p}^t, \bs{x}^t )$. Lemma \ref{lemma:pop-tmp-price-misp} establishes that $\text{MP}_0(\pi^*) = \bar{Q}(U_0)$ and Lemma \ref{lemma:Q-A-T} establishes that $U_0$ achieves the infimum in the definition of $Q_0$.
\end{proof}

\noindent \textbf{Proposition \ref{prop:lagged-prices}.}

Several of the propositions will assume that the utility function $U_0: \eucp \to [-\infty, \infty)$ is \emph{strictly concave}. Under this assumption the utility function $U_0$ generates a unique demand function $\bs{f}: \supp(\bs{p}^t) \to \eucp$ in the sense that there is one and only one function $\bs{f}: \supp(\bs{p}^t) \to \eucp$ which satisfies
\begin{equation} \label{eq:demand}
	U( \bs{f}(\bs{p}) ) - \bs{p} \cdot \bs{f}(\bs{p}) = \max_{ \bs{x} } \Big\{ U( \bs{x} ) - \bs{p} \cdot \bs{x} \Big\}
\end{equation}
In the following proofs we will often make use of the existence of this $\bs{f}$. 
\begin{lemma} \label{lemma:S_0}
	Suppose $(\tbs{p}^t, \bs{p}^t, \bs{x}^t)_{t \in \ints}$ and $U_0$ satisfy Assumption \ref{assumption:price-misperception} and that prices satisfy the AR(1) process of Assumption \ref{assumption:ar1}. Let $S_0$ be defined by \eqref{eq:S}. Then, 
	\begin{equation} \label{eq:S_0}
		S_0 = \E \Big[ ( \bbs{p} - \tbs{p}^t ) \cdot \bs{x}^t \Big]
	\end{equation}
\end{lemma}
\begin{proof}
	This is an obvious corollary of equation \eqref{eq:pop-tmp-price-misp} in Lemma \ref{lemma:pop-tmp-price-misp}. 
\end{proof}

\begin{lemma} \label{lemma:ar}
	Suppose prices $\{ \bs{p}^t \}_{t \in \mathbb{Z}}$ follow an AR(1) process
	\begin{equation*} \label{eq:ar1-2}
		\bs{p}^{t} = (1-\alpha) \bbs{p} + \alpha \bs{p}^{t-1} + \bs{\varepsilon}^t 
	\end{equation*}
	where $\alpha \in [0,1)$ and $\bbs{p}$ is a constant vector. Then, for any $t \in \mathbb{Z}$ and $k \in \{0,1,2,\ldots\}$ 
	\begin{equation} \label{eq:ar1-result}
		\bs{p}^t = (1-\alpha^k) \bbs{p} + \alpha^{k} \bs{p}^{t-k} + \sum_{s=0}^{k-1} \alpha^s \bs{\varepsilon}^{t-s}
	\end{equation}
\end{lemma}
\begin{proof}
	As $\bs{p}^t$ follows \eqref{eq:ar1-2} we see that $\bs{p}^{t-1} = (1-\alpha) \bbs{p} + \alpha \bs{p}^{t-2} + \bs{\varepsilon}^{t-1}$. Plugging this into \eqref{eq:ar1-2} we get $\bs{p}^t = (1-\alpha) (1 + \alpha) \bbs{p} + \alpha^2 \bs{p}^{t-2} + \bs{\varepsilon}^t + \alpha \bs{\varepsilon}^{t-1}$. Repeating this process we obtain
	\begin{IEEEeqnarray*}{rCl}
		\bs{p}^t & = & (1-\alpha) \left( \sum_{s=0}^{k-1} \alpha^{s} \right) \bbs{p} + \alpha^{k} \bs{p}^{t-k} + \sum_{s=0}^{k-1} \alpha^s \varepsilon^{t-s} \\
		& = & (1-\alpha^k) \bbs{p} + \alpha^{k} \bs{p}^{t-k} + \sum_{s=0}^{k-1} \alpha^s \bs{\varepsilon}^{t-s}
	\end{IEEEeqnarray*}
	where the second equality used the fact that $(1-\alpha) \left(\sum_{s=0}^{k-1} \alpha^s \right) = 1-\alpha^k$. 
\end{proof}

\begin{lemma} \label{lemma:optimal-demand}
	Suppose $(\tbs{p}^t, \bs{p}^t, \bs{x}^t)_{t \in \ints}$ and $U_0$ satisfy Assumption \ref{assumption:price-misperception}. Let $\ti{\mu}_0$, $\mu_0$, and $\nu_0$ denote the distributions of $\tbs{p}^t$, $\bs{p}^t$, and $\bs{x}^t$, respectively. Assume $\bs{p}^t$ and $\tbs{p}^t$ follow the same distribution (i.e.\ $\mu_0 = \ti{\mu}_0$). Then,
	\begin{equation} \label{eq:optimal-demand}
		\E\Big[ \tbs{p}^t \cdot \bs{x}^t \Big] = \max_{ \pi \in \Pi(\mu_0,\nu_0) } \E_{\pi} \Big[ \bs{p} \cdot \bs{x} \Big]
	\end{equation}
\end{lemma}
\begin{proof}
	This is an obvious corollary of equation \eqref{eq:pop-tmp-price-misp} in Lemma \ref{lemma:pop-tmp-price-misp}. 
\end{proof}

\begin{proof}[Proof of Proposition \ref{prop:lagged-prices}.]
	It is obvious that $\bs{p}^t \sim \tbs{p}^t$ and so we proceed to prove the remaining parts of the proposition. Let $\bs{f}$ be the demand function which satisfies \eqref{eq:demand}. Using \eqref{eq:pop-tmp-price-misp} from Lemma \ref{lemma:pop-tmp-price-misp}, Lemma \ref{lemma:ar}, and the fact that $\bs{\varepsilon}^{t-s} \indep \bs{p}^{t-k}$ for all $s \in \{0,1,\ldots, k-1\}$ we see
	\begin{IEEEeqnarray*}{rCl}
		\text{TMP}_0 & = & \E\Big[ (\bs{p}^t - \tbs{p}^t) \cdot \bs{x}^t \Big] =  \\
		& = & \E\left[ (1-\alpha^k) \bbs{p} \cdot \bs{x}^t + \alpha^k \bs{p}^{t-k} \cdot \bs{x}^t + \sum_{s=0}^{k-1} \alpha^s \bs{\varepsilon}^{t-s} \cdot \bs{x}^t - \tbs{p}^t \cdot \bs{x}^t \right] \\
		& = & \E\left[ (1-\alpha^k) (\bbs{p} - \bs{p}^{t-k} ) \cdot \bs{x}^t + \sum_{s=0}^{k-1} \alpha^s \bs{\varepsilon}^{t-s} \cdot \bs{f}(\bs{p}^{t-k}) \right] \\
		& = & \E\Big[ (1-\alpha^k) ( \bbs{p} - \bs{p}^{t-k} ) \cdot \bs{x}^t \Big] = (1- \alpha^k) S_0
	\end{IEEEeqnarray*}
	where the final equality used Lemma \ref{lemma:S_0}. 
\end{proof}

\noindent \textbf{Proposition \ref{prop:inattention}.}

\begin{proof}[Proof of Proposition \ref{prop:inattention}.]
	As $\tbs{p}^{t}$ is an independent mixture of $\bs{p}^t, \bs{p}^{t-1}$, $\bs{p}^{t-2}$, $\ldots$ and each of these variables have the same distribution (due to our stationarity assumption) we see that $\tbs{p}^t$ also has this distribution and so $\tbs{p}^t \sim \bs{p}^t$. 
	
	Let $E_k^t$ be the event that mental prices were last updated $k$ periods ago (i.e.\ $a_{t-s} = 0$ for $s \in \{0,1,2,\ldots, k-1\}$ and $a_{t-k} = 1$). Let $\bs{f}$ be the demand function which satisfies \eqref{eq:demand}. Using \eqref{eq:pop-tmp-price-misp} from Lemma \ref{lemma:pop-tmp-price-misp}, Lemma \ref{lemma:ar}, the fact that $\bs{\varepsilon}^{t-s} \indep \bs{p}^{t-k}$ for all $s \in \{0,1,\ldots, k-1\}$, the fact that $\{ a_{j} \}_{j \in \ints} \indep \bs{p}^{t}$ for all $t$, and \eqref{eq:S_0} we see 
	\begin{IEEEeqnarray*}{rCl}
		\text{TMP}_0 & = & \bs{E}\Big[ (\bs{p}^t - \tbs{p}^t) \cdot \bs{x}^t \Big] = \sum_{k=0}^{\infty} P( E_k^t ) \bs{E} \Big[ (\bs{p}^t - \tbs{p}^t) \cdot \bs{x}^t \Big| E_k^t \Big] \\
		& = & \sum_{k=0}^{\infty} (1-\beta) \beta^k \bs{E} \Big[ (\bs{p}^t - \bs{p}^{t-k}) \cdot \bs{f}(\bs{p}^{t-k}) \Big| E_k^t \Big] \\
		& = & \sum_{k=0}^{\infty} (1-\beta) \beta^k \bs{E} \Big[ (\bs{p}^t - \bs{p}^{t-k}) \cdot \bs{f}(\bs{p}^{t-k}) \Big] \\
		& = & \sum_{k=0}^{\infty} (1-\beta) \beta^k \bs{E} \Big[ (1-\alpha^k) (\bbs{p} - \bs{p}^{t-k}) \cdot \bs{f}(\bs{p}^{t-k}) + \sum_{s=0}^{k-1} \alpha^s \bs{\varepsilon}^{t-s} \cdot \bs{f}(\bs{p}^{t-k}) \Big] \\
		& = & \sum_{k=0}^{\infty} (1-\beta) \beta^k \bs{E} \Big[ (1-\alpha^k) (\bbs{p} - \bs{p}^{t-k}) \cdot \bs{f}(\bs{p}^{t-k}) \Big] \\
		& = & S_0 \sum_{k=0}^{\infty} (1-\beta) \beta^k (1-\alpha^k) = S_0 \Big( 1 - \dfrac{ 1- \beta }{ 1 - \alpha \beta } \Big) = \beta \left( \dfrac{ 1 - \alpha }{ 1 - \alpha \beta } \right) S_0
	\end{IEEEeqnarray*}
	and so the proof is complete. 
\end{proof}

\noindent \textbf{Proposition \ref{prop:false-forecast}.}

\begin{proof}[Proof of Proposition \ref{prop:false-forecast}.]
	It is clear that $\bs{p}^t \sim \tbs{p}^t$ as these prices both follow the same distribution conditional on $\bs{p}^{t-k}$. 
	
	Let $\bs{f}$ be the demand function which satisfies \eqref{eq:demand}. From equation \eqref{eq:pop-tmp-price-misp} of Lemma \ref{lemma:pop-tmp-price-misp}, equation \eqref{eq:S_0}, and the fact that $ ( \bs{p}^t \indep \tbs{p}^t ) | \bs{p}^{t-k}$ for all $s \in \{0,1,2,\ldots, k-1\}$
	\begin{IEEEeqnarray*}{rCl}
		\text{TMP}_0 & = & \E[ ( \bs{p}^t - \tbs{p}^t ) \cdot \bs{x}^t ] \\
		& = & \E[ \bs{p}^t \cdot \bs{x}^t ] - \E[ \tbs{p}^t \cdot \bs{x}^t ] \\
		& = & \E[ \E[ \bs{p}^t \cdot \bs{f}(\tbs{p}^t) | \bs{p}^{t-k} ] ] - \E[ \tbs{p}^t \cdot \bs{x}^t ] \\ 
		& = & \E[ \E[ \bs{p}^t | \bs{p}^{t-k} ] \cdot \E[ \bs{f}(\tbs{p}^t) | \bs{p}^{t-k} ] ] - \E[ \tbs{p}^t \cdot \bs{x}^t ] \\
		& = & \E[ ( (1-\alpha^k) \bbs{p} + \alpha^k \bs{p}^{t-k} ) \cdot \E[ \bs{f}(\tbs{p}^t)| \bs{p}^{t-k} ] ] - \E[ \tbs{p}^t \cdot \bs{x}^t ] \\
		& = & \E[ ((1-\alpha^k) \bbs{p} + \alpha^k \bs{p}^{t-k} - \tbs{p}^t ) \cdot \bs{x}^t ] \\
		& = & (1-\alpha^k) S_0 + \alpha^k \E[ ( \bs{p}^{t-k} - \tbs{p}^t ) \cdot \bs{x}^t ]
	\end{IEEEeqnarray*}
	and so \eqref{eq:false-forecast} holds. From \eqref{eq:false-forecast} it is clear that $\text{TMP}_0 \to S_0$ as $k \to \infty$. To show that $\text{TMP}_0$ is increasing in $k$ let us write $\tbs{p}_k^t$, $\bs{x}_k^t$, and $\text{TMP}_0(k)$ for mental prices, demand, and the total money pump when the consumer has not updated prices for $k$ periods. As $\tbs{p}_k^t$ and $\bs{p}^t$ are independent conditional on $\bs{p}^{t-k}$ we see from Lemma \ref{lemma:ar} that
	\begin{equation} \label{eq:mental-price1}
		\tbs{p}_k^t = (1-\alpha^k) + \alpha^k \bs{p}^{t-k} + \sum_{s=0}^{k-1} \alpha^s \tbs{\varepsilon}_k^{t-s}
	\end{equation}
	where $\tbs{\varepsilon}_k^{t-s}$ are IID mean zero shocks which are independent of $\{ \bs{\varepsilon}_j \}_{j \in \ints}$. Let 
	\begin{equation*}
		\bbs{p}_k^t = (1-\alpha^{k+1}) \bbs{p} + \alpha^{k+1} \bs{p}^{t-k-1} + \sum_{s=0}^{k-1} \alpha^s \tbs{\varepsilon}_k^{t-s}
	\end{equation*}
	From \eqref{eq:mental-price1} and the fact that $\bs{p}^{t-k}$ follows an AR(1) process we see that 
	\begin{equation} \label{eq:mental-price2}
		\tbs{p}_k^t = \bbs{p}_k^t + \bs{\varepsilon}^{t-k}
	\end{equation}
	From \eqref{eq:mental-price1} we see that
	\begin{equation} \label{eq:mental-price3}
		\tbs{p}_{k+1}^t = \bbs{p}_{k}^t + \tbs{\varepsilon}_{k+1}^{t-k}
	\end{equation}
	From \eqref{eq:mental-price2} and \eqref{eq:mental-price3} we see that $(\bs{p}^{t-k-1}, \tbs{p}_k^t) \sim ( \bs{p}^{t-k-1}, \tbs{p}_{k+1}^t )$ from which it follows that
	\begin{equation} \label{eq:demand-equality}
		\E[ \bs{p}^{t-k-1} \cdot \bs{f}( \tbs{p}_k^{t} ) ] = \E[ \bs{p}^{t-k-1} \cdot \bs{f}( \tbs{p}_{k+1}^{t} ) ]
	\end{equation}
	For a contradiction assume that $\E[ \bs{\varepsilon}^{t-k} \cdot \bs{x}_{k}^t ] > 0$. As $\bs{\varepsilon}^{t-k} \indep \bbs{p}_k^t$ we may find some $\tbs{x}^t$ which satisfies $( \tbs{x}^{t}, \bbs{p}_k^t ) \sim ( \bs{x}_k^t, \bbs{p}_k^t )$ and also $\tbs{x}^t \indep \bs{\varepsilon}^{t-k}$. Thus, using \eqref{eq:mental-price2}
	\begin{equation*}
		\E[ \tbs{p}_k^t \cdot (\bs{x}_k^t - \tbs{x}^t) ] = \E[ (\bbs{p}_k^t + \bs{\varepsilon}^{t-k}) \cdot (\bs{x}_k^t - \tbs{x}^t) ] = \E[ \bs{\varepsilon}^{t-k} \cdot ( \bs{x}_k^t - \tbs{x}^t ) ] = \E[ \bs{\varepsilon}^{t-k} \cdot \bs{x}_k^t ] > 0
	\end{equation*}
	which contradicts \eqref{eq:optimal-demand}. Having established a contradiction we see that
	\begin{equation*}
		\E[ \bs{\varepsilon}^{t-k} \cdot \bs{x}_k^t ] \leq 0
	\end{equation*}
	Using this and \eqref{eq:demand-equality} we have
	\begin{IEEEeqnarray*}{rCl}
		\text{TMP}_0(k) & = & (1-\alpha^k) S_0 + \alpha^k \E[ (\bs{p}^{t-k} - \tbs{p}_k^t) \cdot \bs{x}_k^{t} ] \\
		& = & (1-\alpha^k) S_0 + \alpha^k \E[ ( (1-\alpha) \bbs{p} + \alpha \bs{p}^{t-k-1} + \bs{\varepsilon}^{t-k} - \tbs{p}_k^t) \cdot \bs{x}_k^{t} ] \\
		& = & (1-\alpha^{k+1}) S_0 + \alpha^{k+1} \E[ (\bs{p}^{t-k-1} - \tbs{p}_k^t) \cdot \bs{x}_k^t ] + \alpha^k \E[ \bs{\varepsilon}^{t-k} \cdot \bs{x}_k^t ] \\
		& \leq & (1-\alpha^{k+1}) S_0 + \alpha^{k+1} \E[ (\bs{p}^{t-k-1} - \tbs{p}_{k+1}^t) \cdot \bs{x}_{k+1}^t ] \\
		& = & \text{TMP}_0(k+1)
	\end{IEEEeqnarray*}
	and so indeed $\text{TMP}(k)$ is increasing in $k$.
\end{proof}

\section{Online Appendix}

\subsection{Additive Cost Efficiency and Afriat's CCEI} \label{sec:cost-efficiency}

Here we show that the additive cost inefficiency $A$, as defined in Section \ref{sec:tmp}, is in fact a version of Afriat's CCEI (after applying a suitable normalization to $A$). To proceed, we normalize $A$ by dividing by total expenditure
\begin{equation*}
	\ti{A} = \inf_U \ \dfrac{\left( \sum_{t=1}^T \mbp^t \cdot \mbx^t - e_U \right)}{ \sum_{t=1}^T \mbp^t \cdot \mbx^t }
\end{equation*}
Now, let $D = (\mbp^t, \mbx^t)_{t \leq T}$ be some dataset and let $U$ be an arbitrary utility function. Let $\bar{e}_U^t$ denote the smallest amount of money which the consumer could have spent in period $t$ to acquire a bundle giving as much utility as $\mbx^t$. That is, $\bar{e}_U^t = \inf \{ \mbp^t \cdot \mbx: U(\mbx) \geq U(\mbx^t) \}$. Let $\mathcal{U}$ be some collection of utility functions. The CCEI for $\mathcal{U}$ is the number\footnote{When $\mathcal{U}$ is the class of well-behaved utility functions then $\text{CCEI}_{\mathcal{U}}$ can be expressed in terms of the extent to which the budget constraints need to be relaxed in order for the data to satisfy GARP. That is, the CCEI is equal to the infimum number $e$ so that for all $t_1,t_2,\ldots ,t_K$ with $t_1 = t_K$ we have
	\begin{equation*}
		(1-e) \mbp^{t_k} \cdot \mbx^{t_k} \geq \mbp^{t_k} \cdot \mbx^{ t_{k+1} }, \qquad\qquad \text{for all } k \in \{1,2,\ldots,K-1\}
	\end{equation*}
	implies that none of these inequalities hold strictly. See \citet{halevy2018} for the proof that our definition of the CCEI and the version involving relaxed budget sets are in fact equivalent when $\mathcal{U}$ is the collection of well-behaved utility functions.}
\begin{equation} \label{eq:ccei}
	\text{CCEI}_{\mathcal{U}} = \inf_{U \in \mathcal{U}} \sup_{t} \left( \dfrac{ \mbp^t \cdot \mbx^t - \bar{e}_U^t }{ \mbp^t \cdot \mbx^t } \right)
\end{equation}
In other words, the CCEI considers the percent of the budget set which is wasted in each period and aggregates the measure over periods by taking a supremum. 

How does the CCEI relate to $A$? To answer this question let us cease to consider $D = (\mbp^t, \mbx^t)_{t \leq T}$ as $T$ separate purchasing occasions but rather let us consider $D$ as one giant purchasing occasion in which the consumer buys the bundle $(\mbx^1, \mbx^2, \ldots, \mbx^T) \in \mathbb{R}_+^{T L}$ when prices are $(\mbp^1, \mbp^2, \ldots, \mbp^T) \in \mathbb{R}_{++}^{T L}$. Let $\mathcal{U}_{A}$ denote the collection of utility functions $V$ which take the additive form $V( \tmbx^1,\tmbx^2, \ldots, \tmbx^T ) = \sum_{t=1}^T U( \tmbx^t )$ for some $U$. It follows that  
\begin{equation*}
	e_U = \inf \left\{ \sum_{t=1}^T \mbp^t \cdot \tmbx^t: \sum_{t=1}^T U(\tmbx^t) \geq \sum_{t=1}^T U(\mbx^t) \right\} = \bar{e}_V^{1}
\end{equation*}
where $V \in \mathcal{U}_{A}$ is the additive utility function with sub-utility function $U$. Note that `1' in the superscript of $\bar{e}_V^{1}$ denotes the single (giant) observation where  $(\mbx^1, \mbx^2, \ldots, \mbx^T)$ is purchased at prices $(\mbp^1, \mbp^2, \ldots, \mbp^T)$. As $e_U = \bar{e}_V^{1}$ it is clear that $\tilde{A} = \text{CCEI}_{\mathcal{U}_A}$. Thus $\tilde A$ can be thought of as a version of Afriat's CCEI. 

\subsection{The relationship between $Q$ and $A$} \label{sec-app:Q-A}

As noted, every utility function $U$ which achieves the infimum in the definition of $Q$ also achieves the infimum in the definition of $A$. The following connects utility functions which achieve the infimum in the definition of $A$ to those which achieve the infimum in the definition of $Q$.
\begin{proposition} \label{prop:A-Q2}
	Suppose $X$ is convex and let $D = ( \bs{p}^t, \bs{x}^t )_{t \leq T}$ be a dataset. For any standard $U: X \to \ereal$ which achieves the infimum in the definition $A$ at least one of the following holds.
	\begin{enumerate}
		\item There exists a $k> 0$ so that $kU$ achieves the infimum in the definition of $Q$.
		\item The standard utility function
		\begin{equation*}
			\bar{U}(\bs{x}) = \begin{cases}
				0, \qquad\qquad & \text{ if } U(\bs{x}) \in \real \\
				-\infty, \qquad \qquad & \text{ else.}
			\end{cases}
		\end{equation*}
		achieves the infimum in the definition of $Q$.
		\item The standard utility function
		\begin{equation*}
			\ti{U}(\bs{x}) = \begin{cases}
				0, \qquad\qquad & \text{ if } U(\bs{x}) = \sup_{ \tbs{x} \in X } U(\tbs{x}) \\
				-\infty, \qquad \qquad & \text{ else.}
			\end{cases}
		\end{equation*}
		achieves the infimum in the definition of $Q$.
	\end{enumerate}
\end{proposition}
We may interpret the proposition as saying that for ``most'' $U$ which achieve the infimum in the definition of $A$ there exists a $k > 0$ so that $kU$ attains the infimum in the definition of $Q$. That is, the proposition shows that whenever $U$ achieves the infimum in $A$ then at least one of three things occurs. However, the second and third items in Proposition \ref{prop:A-Q2} are somewhat pathological. Item 2 requires that the characteristic utility function $\bar{U}$ which equals $0$ on the effective domain of $U$ attains the infimum in $Q$. Item 3 requires that the characteristic function $\ti{U}$ which is $0$ for those bundles which maximize $U$ attains the infimum of $Q$. Needless to say items 2 and 3 only occur under very special conditions.

Under certain circumstances it is impossible for either $\bar{U}$ or $\ti{U}$ to attain the infimum in $Q$. In this case we know that item 1 in Proposition \ref{prop:A-Q2} holds. For instance, if $X = \eucp$ and $U: \eucp \to \real$ is well-behaved, and consumption is not identically $\bs{0}$ (i.e.\ there exists $t$ so that $\bs{x}^t > \bs{0}$) then only item 1 can hold. To see this, note that there is no bundle $\bs{x}$ which achieves the supremum in $\sup_{\tbs{x} \in \eucp} U(\tbs{x})$ and thus $\ti{U}$ cannot achieve the infimum in the definition of $Q$. Also, $\bs{0}$ is always cheaper than every other bundle and so $\bar{U}$ cannot achieve the infimum in the definition of $Q$.

There are utility functions for which item 1 in Proposition \ref{prop:A-Q2} does not hold. For an example, suppose there is 1 good and 1 observation. In observation 1 the price is 1 and the consumer buys 0 units of consumption. The utility function $U(x) = \sqrt{x}$ additively rationalizes the behavior and so $U$ achieves the infimum in the definition of $A$. Nevertheless, the consumption $x = 0$ never maximizes the quasilinear utility function $k U(x) - x$ for any $k > 0$ (the first order condition is $k x^{-1/2} = 1$ or equivalently $x = k^2$ and so $x \neq 0$ for any $k > 0$) and so $k U$ does not attain the infimum in $Q$ for any $k > 0$. Note that $\bar{U}$ quasilinear rationalizes the behavior and thus $\bar{U}$ achieves the infimum in $Q$. 

Proposition \ref{prop:A-Q2} does not hold when $X$ is not convex. For an example, suppose there is 1 good and let $X = \{ 0,1,2 \}$. Consider the dataset $D = ( (p^1, x^1), (p^2, x^2) )$ where $p^1 = p^2 = 1$, $x^1 = 0$, and $x^2 = 2$. Let $U: X \to \real$ be defined by $U(0) = 0$, $U(1) = 1.5$, and $U(2) = 2$. Clearly this $U$ is well-behaved. The consumer spends $p^1 x^1 + p^2 x^2 = 2$ and attains additive utility $U(x^1) + U(x^2) = 2$. While the consumer purchased $(x^1,x^2) = (0,2)$ the consumer could have afforded to purchase $(0,0)$, $(1,0)$, $(0,1)$, and $(2,0)$ all of which yield weakly less additive utility than the attained utility of $2$. Therefore, $U$ additively rationalizes $D$ and thus attains the infimum in the definition of $A$. Note that neither $\bar{U}$ nor $\ti{U}$ quasilinear rationalize $U$ (and thus cannot achieve the infimum in the definition of $Q$). In particular $\bar{U}$ can only quasilinear rationalize a purchase of $0$ in both periods and $\ti{U}$ can only quasilinear rationalize a purchase of $2$ in each period. It is easy to verify that there is no $k > 0$ for which $k U$ quasilinear rationalizes $D$. In particular, when $k < 2/3$ then the consumer demands $0$, when $k = 2/3$ the consumer can demand either $0$ or $1$, when $k \in (2/3, 2)$ then the consumer demands $1$, when $k = 2$ then the consumer demand either 1 or 2, and when $k > 2$ then the consumer demands $2$. We thus see that the conclusion of Proposition \ref{prop:A-Q2} does not hold.

\begin{proof}[Proof of Proposition \ref{prop:A-Q2}.]
	Let $U$ be a standard utility function which achieves the infimum in the definition of $A$. Let $e: \real \to (-\infty, \infty]$ be defined by
	\begin{equation*}
		e( u ) = \inf \left\{ \sum_{t=1}^T \bs{p}^t \cdot \tbs{x}^t: \sum_{t=1}^T U(\tbs{x}^t) \geq u \right\}
	\end{equation*}
	As $U$ is weakly increasing and concave it is easy to see that $e$ is weakly increasing and convex. Let $A$ be the epigraph of $e$. That is, $A = \{ (u, w) \in \mathbb{R}^2: w \geq e(u) \}$. As $e$ is convex the set $A$ is convex and $(u, e(u))$ is on the boundary of $A$ for any $u$. Take $\bar{u} = \sum_{t=1}^T U(\bs{x}^t)$ and apply the supporting hyperplane theorem to see that there exists $(a,b) \neq \bs{0}$ such that 
	\begin{equation*}
		a \bar{u} - b e(\bar{u}) \geq a u - b e(u), \qquad\qquad \forall u \in \real \text{ such that } e(u) \in \real
	\end{equation*}
	As $e$ is weakly increasing we see that if $(u,w) \in A$ then $(\ti{u}, \ti{w}) \in A$ for any $(\ti{u}, \ti{w})$ satisfying $u \geq \ti{u}$ and $\ti{w} \geq w$. From which we see that $(a,b) > \bs{0}$.
	
	If $(a,b) \gg \bs{0}$ then let $k = b / a$ and note 
	\begin{equation*} 
		k \bar{u} - e(\bar{u}) \geq \sum_{t=1}^T k U(\tbs{x}^t) - \bs{p}^t \cdot \tbs{x}^t, \qquad \forall ( \tbs{x}^1, \ldots, \tbs{x}^T ) \in X^T
	\end{equation*}
	and so 
	\begin{equation*}
		\sum_{t=1}^T \sup_{ \tbs{x} \in X } \Big\{ k U(\tbs{x}) - \bs{p}^t \cdot \tbs{x} \Big\} = \sum_{t=1}^T k U(\bs{x}^t) - e(\bar{u})
	\end{equation*}
	which shows
	\begin{equation*}
		\bar{Q}(kU) = \sum_{t=1}^T \bs{p}^t \cdot \bs{x}^t - e(\bar{u}) = A
	\end{equation*}
	and so $kU$ achieves the infimum in the definition of $Q$. 
	
	Next, suppose that $a = 0$. We see that $e(u) \geq e(\bar{u})$ for all $u\in \real$. Let $\bar{U}$ be as in item 2 of the proposition and note 
	\begin{equation*}
		\sum_{t=1}^T \sup_{ \tbs{x} \in X } \Big\{ \bar{U}( \tbs{x} ) - \bs{p}^t \cdot \tbs{x} \Big\} = \sum_{t=1}^T \bar{U}(\bs{x}^t) - e(\bar{u})
	\end{equation*}
	which shows 
	\begin{equation*}
		\bar{Q}(\bar{U}) = \sum_{t=1}^T \bs{p}^t \cdot \bs{x}^t - e(\bar{u}) = A
	\end{equation*}
	
	Finally, suppose that $b = 0$. We see that $\bar{u} \geq u$ for all $u \in \real$ such that $e(u) \in \real$. Let $\ti{U}$ be as in item 3 of the proposition and note
	\begin{equation*}
		\sum_{t=1}^T \sup_{ \tbs{x} \in X } \Big\{ \ti{U}( \tbs{x} ) - \bs{p}^t \cdot \tbs{x} \Big\} = \sum_{t=1}^T \ti{U}(\bs{x}^t) - e(\bar{u})
	\end{equation*}
	which shows 
	\begin{equation*}
		\bar{Q}(\ti{U}) = \sum_{t=1}^T \bs{p}^t \cdot \bs{x}^t - e(\bar{u}) = A
	\end{equation*}
	and so the proof is complete.
\end{proof}

\end{document}